\newtheorem{thrm}{Theorem}
\newtheorem{lemma}{Lemma}
\newtheorem{prop*}{Proposition}
\newtheorem{prop}{Proposition}
\newtheorem{corollary}{Corollary}
\newtheorem{thm*}{Theorem}
\theoremstyle{definition}
\newtheorem{example}{Example}
\newcommand{\bA}{\mathbf{A}}
\newcommand{\bB}{\mathbf{B}}
\newcommand{\bC}{\mathbf{C}}
\newcommand{\bE}{\operatorname{\mathbf{E}}}
\newcommand{\be}{\mathbf{e}}
\newcommand{\bkappa}{\boldsymbol{\kappa}}
\newcommand{\BI}{\mathbb{I}}
\newcommand{\BT}{\mathbb{T}}
\newcommand{\bT}{\mathbf{T}}
\newcommand{\bV}{\mathbf{V}}
\newcommand{\bx}{\mathbf{x}}
\newcommand{\bxi}{\boldsymbol\xi}
\newcommand{\diag}[0]{\ensuremath\operatorname{diag}}
\newcommand{\indic}[1]{\onebb_{\left\{#1\right\}}}
\newcommand{\m}[1]{{\mathbf{#1}}}
\newcommand{\onebb}[0]{\m{\mathds{1}}}
\newcommand{\real}{\mathbb{R}}
\newcommand{\Rn}{\real^n}
\newcommand{\sF}{\mathscr{F}}
\newcommand{\sG}{\mathscr{G}}
\newcommand{\sR}{\mathscr{R}}
\newcommand{\trans}[0]{\ensuremath\intercal}
\newcommand{\vinf}{v^{\text{inf}}}
\newcommand{\vsup}{v^{\text{sup}}}
\newcommand{\Xomit}[1]{}
\title{Optimal Bailouts in Diversified Financial Networks}
\author{Krishna Dasaratha\thanks{Department of Economics, Boston University. email: {\tt krishnadasaratha@gmail.com}} \\ Santosh S. Venkatesh\thanks{ Department of Electrical and Systems Engineering, University of Pennsylvania. email: {\tt venkatesh@seas.upenn.edu}.} \\ Rakesh Vohra\thanks{Department of Economics \& Department of Electrical and Systems Engineering, University of Pennsylvania. email: {\tt rvohra@seas.upenn.edu}.}}
\date{\today}
\begin{document}
\maketitle
\begin{abstract}
Widespread default involves substantial deadweight costs which could be countered by injecting capital into failing firms. Injections have positive spillovers that can trigger a repayment cascade. But which firms should a regulator bailout so as to minimize the total injection of capital while ensuring solvency of all firms? While the problem is, in general, NP-hard, for a wide range of networks that arise from a stochastic block model, we show that the optimal bailout can be implemented by a simple policy that targets firms based on their characteristics and position in the network. Specific examples of the setting include core-periphery networks.
\\
\\
{\bf JEL Code: C62, D85, F65, G32, G33, G38}
\end{abstract}

\section{Introduction}
The ties of debt and equity between firms enable efficient risk sharing  and capital allocation. They are also a conduit by which a negative shock to a small set of firms can be amplified to generate widespread default. Defaults involve substantial deadweight costs, including fire sales, early termination of contracts, administrative costs of government bailouts, and legal costs.

To mitigate these costs a regulator can inject capital into failing firms so as to prevent defaults.  Injecting capital into a firm has positive spillovers if, by paying back its obligations to others, that firm allows its counterparties  to meet their own obligations, triggering a repayment cascade. Given these spillovers, which firms should the regulator bailout so as to minimize the total injection of capital while ensuring solvency of all firms? Even in this simple form, absent moral hazard and asymmetric information, the problem is hard. The first difficulty arises from the presence of multiple equilibria. Who to target and how much to inject will depend upon the equilibrium outcome that prevails post-injection. Second, even if one fixes the choice of equilibrium (which we will), the incremental benefit of spillovers triggered by injecting capital into a firm depend on which other firms have received infusions. Indeed, most variations of the problem of determining who to bailout and by how much so as to achieve the desired equilibrium at minimum cost is NP-hard. For examples,  see \cite{jackpern}, \cite{klages2022optimal}, \cite{dong} and \cite{pk}. This eliminates the possibility of an optimal policy that can be described by a `simple' index rule. Such a rule would assign an index to a firm that depends only on its characteristics and easily computed `global' information such as how central they are in the network. The index would determine whether it should be bailed out. \cite{demange}, for instance, offers an index in this spirit to prioritize firms linked by debt. However, this index relies on the uniqueness of the underlying equilibrium and it is not invariant to injections of cash.


NP-hardness excludes a simple index policy for {\em all} networks but does not eliminate this possibility for special cases. We consider one such case where the  underlying financial linkages are via equity cross holdings as in  \cite*{elliott2014financial}.  Here, a firm's value depends on its cash endowment (in the form of
held primitive assets) and the shares it owns in other firms.\footnote{Such equity holding networks, are pervasive~\cite{shi2019internal}.} When a firm's value drops below a solvency threshold, it discontinuously imposes losses on its counter-parties, e.g., as distress costs.  The counter-parties in turn may drag other firms below the solvency threshold.

The network of cross holdings is modeled as random with arbitrary block structure called the stochastic block model (SBM). SBMs are a popular model in the statistical analysis of networks, see \cite{lee}. 
Firms are partitioned into finitely many blocks, and the probability that one firm holds shares in another depends only on the blocks containing each firm. Two important examples of the SBM are core-periphery networks and cross-border relations in a global financial system, with financial systems in different countries represented as different blocks in the network.\footnote{The following document such structures in a variety of financial networks: \cite{BechAta}, \cite{AKS}, \cite{BEST}, \cite{DL}, \cite{BBL}, \cite{PSV}, \cite{CFRS} and \cite{MR}.} We allow any sufficiently dense network structure between the blocks. The precise distributional assumptions of the SBM are discussed in Section \ref{sec:SBM} and can be interpreted as providing a model of diversified financial networks.

Given an amount of cash available to inject into the network, we wish to maximize the number of firms that achieve a value that exceeds the given solvency threshold. When the underlying network is a draw from an SBM, and the number of vertices is sufficiently large, we argue for a solution in terms of an index policy. Whether a firm is bailed out or not depends upon its cash endowment and how central the block to which it belongs is. The precise number of firms directly bailed out in each block is proportional to the block's Katz-Bonacich centrality, adjusted to account for the cost of rescuing firms in that block. Using this relationship we show that the number of firms that receive an injection increases with the cash available. However, the set of firms that receive an injection is {\em not} monotone in the sense of subset inclusion.
Furthermore, the relative dispersion in endowments, also affects the proportion of firms within a block that receive an injection relative to other blocks.

  We justify this policy by passage to a continuum version of the underlying network to bypass the difficulties associated with discrete network models such as ties and inessential corner cases.\footnote{This is not unusual in other contexts, see \citet{schmeidler} and \cite{azevedo2016supply}.} Continuum analogs of networks are
 called graphons, (see~\cite{lovasz2012large} and~\cite*{BCCH2017}) introduced in the previous decade as interest in large dense networks surged in a variety of different areas. They have been deployed to study 
targeting in networks, equilibria of network games, and the study of contagion (\cite{erol2020contagion}, and \cite{parise2023graphon}).

As our technique may be useful in other contexts, we outline the key steps involved.
\begin{enumerate}
    \item The first is to establish a concentration result. We show that with high probability, equilibrium firm valuations in any realization of an SBM  are concentrated around the equilibrium valuations of a particular deterministic network. Thus, the impact of bailouts on any {\em random} instance of  an SBM, must mirror that in this particular {\em deterministic} network.
    \item To determine the bailout policy in the deterministic network we pass to its graphon analog. The graphon analog admits a characterization of the extremal equilibrium firm valuations in terms of cutoffs in firms initial endowments. Within each block, there is a cutoff and firms in the block are solvent if and only if their endowments exceed that cutoff. This allows us to characterize the optimal bailout policy (assuming an extremal equilibrium) in terms of cutoffs for each block and relate each block specific cutoff to the centrality of the block. 
    \item The concluding step is to show that there is a bailout policy in the finite network that approaches the optimal policy in the graphon as the size of the network increases. 
\end{enumerate}

Section \ref{sec:SBM} describes the SBM of equity networks and the deterministic network that determines the equilibrium profile of valuations around which the equilibrium of each instance of an SBM concentrates.  It is this deterministic network that determines the graphon model analyzed in
Section  \ref{sec:model}. The characterization of extremal equilibrium valuations in terms of cutoffs will be found there. Section \ref{sec:infusion} characterizes the optimal cash injection policy in the graphon. We also show how this policy translates into a policy for a large finite network with approximately the same performance.

\section{The Stochastic Block Model}\label{sec:SBM}
In the equity cross-holding network of \citet{elliott2014financial} there are $n$ firms; each firm $i$ has a cash endowment $e_i$ and a (book) value $V_i$. A share $C_{ij} \in [0,1)$ of firm $j$ is held by firm $i$. Firms pay a bankruptcy cost $\beta$ if their values drop below a threshold $v^*$. The equilibrium values of firms satisfy the system of equations
\begin{equation}\label{dequ}
	V_i = \underbrace{e_i}_{\text{{\color{gray}endowment}}} 
            +  \underbrace{
            	\sum_{j=1}^n C_{ij} V_j
            }_{ \text{{\color{gray} cross-holdings}}}
        	- \underbrace{
        		\beta\onebb_{\{V_i < v^*\}}
        	}_{\text{{\color{gray}bankruptcy cost}}}
            \qquad (1\leq i\leq n)
\end{equation}
where $\onebb_{\{V_i < v^*\}}$ is the indicator function which takes value $1$ when $V_i < v^\ast$ and value $0$ when $V_i\geq v^\ast$. The system of equations is more compactly represented in vector notation: write $\bV = (V_1, \dots, V_n)^\trans$ for the (column) vector of valuations; $\be = (e_1, \dots, e_n)^\trans$ for the vector of endowments; $\onebb_{\{\bV < v^\ast\m{1}\}}$ for the vector indicator which picks out indices for which $V_i < v^\ast$; and $\bC = [C_{ij}]$ for the cross-holdings matrix interpreted as the adjacency matrix of an edge-weighted directed graph with self links permitted. The matrix of cross-holdings is not in general symmetric---there is no reason to believe that $C_{ij}$ equals $C_{ji}$ in general---and abeyant a cross-holding, we set $C_{ij} = 0$ if the corresponding edge $(i, j)$ is absent in the graph. In vector form the equilibrium firm valuations may be identified with the solutions of the non-linear fixed-point equation
\begin{equation}\tag{\ref{dequ}'}\label{dequ-vector}
	\bV = \be + \bC\bV - \beta\onebb_{\{\bV < v^\ast\m{1}\}}.
\end{equation}
There can be multiple equilibria and they form a lattice. We focus on the extremal (maximal or minimal) equilibria. 
        

\medskip\noindent\textsc{A stochastic block equity network}

\smallskip\noindent We will generate a random network via a stochastic block model and then construct equity cross-holdings based on this random network. The $n$ firms are partitioned into $m$ blocks representing distinct firm types; the blocks can differ in size. A random digraph (directed graph) is generated by inserting a directed edge between firms with a probability depending only on the blocks in which the firms reside: insert an edge directed from a firm in block $k$ to a firm in block $\ell$ with probability $g_{k\ell}$, the directed edges being independent of each other. Denote the realization of the multi-type random digraph by $\sG = \sG(n)$ and its adjacency matrix by $\bA = [A_{ij}]$ where $A_{ij}\thicksim\text{Bernoulli}(g_{k\ell})$ if $i$ is in the $k$th block and $j$ is in the $\ell$th block. The directed edges of $\sG$ represent directional equity sharing links.

We proceed to graft a regular equity-sharing model atop the engendered random graph $\sG$. Suppose $0<c<1$ is a fixed parameter of the system and, for each $j$, write $d_j := \sum_{i=1}^n A_{ij}$ for the in-degree of vertex $j$.\footnote{The possibility that $d_j = 0$ has vanishing probability asymptotically as long as any of the edge probabilities is non-zero and the blocks have sizes growing with $n$. We could finesse this nuisance possibility by conditioning on the high probability set $\{\,d_j > 0, 1\leq j\leq n\,\}$ or, simply, by just setting $A_{jj} = 1$ in the diagonal terms of the adjacency matrix $\bA$. Either approach will introduce a small amount of notational clutter that will have to be carried through the analysis without ultimately changing any of our reported results. We will accordingly unclutter presentation by ignoring this notational nuisance: suppose resolutely from now on that $d_j > 0$ for each $j$.}  We form a regular equity cross-holdings matrix $\bC = [C_{ij}]$ by setting
\begin{equation}\label{stochastic-block-model}
	C_{ij} = \frac{cA_{ij}}{d_j}.
\end{equation}
The interpretation of the model is that equity shares flow only across the directed edges in the graph $\sG$ with a fixed fraction $c$ of the equity of each firm $j$ shared evenly between the $d_j$ firms that are connected by an edge to $j$. The quantity $c$ is called {\em exposure} and  $1 - c$ represents that portion of each firm's valuation that is held by investors external to the network. This is also how random networks are modeled in \cite*{elliott2014financial}.

%

It is as well to clear up a terminological point here. The vector of valuations $\bV$, called ``book'' values, inflates actual firm valuations because of double counting. \cite*{elliott2014financial} sidestep this by focusing instead on the ``market'' values \begin{equation*}
	\widehat{\bV} = \diag\Bigl[1 - \sum\nolimits_i C_{i1}, \dots, 
		1 - \sum\nolimits_i C_{in}\Bigr] \bV,
\end{equation*}
which are the values held by outside investors. A firm's health is now assessed by its market value and in the~\cite*{elliott2014financial} formulation all that is requisite is that the indicator in the fixed-point equation~\eqref{dequ-vector} be replaced by $\indic{\widehat{\bV} < v^\ast_{\text{m}}\m{1}}$ where $v^\ast_{\text{m}}$ is a fixed market insolvency threshold. Specialized to our setting, market values are related to book values via the simple scaling relation $\widehat{\bV} := (1 - c) \bV$ and, as pointed out in~\cite*{amelkin2021}, the two formulations for the fixed-point equation are equivalent once $v^\ast$ is scaled appropriately: our book value insolvency threshold $v^\ast$ is related to the market value insolvency threshold $v^\ast_{\text{m}}$ via the simple scaling relation $v^\ast = v^\ast_{\text{m}}/(1 - c)$. If a firm defaults at some profile of book values in our setting, it will also do so with respect to its market value, and vice versa. Hence, for analytical purposes we may safely ignore the distinction between book and market values and focus on the book value equilibria of the fixed-point equation~\eqref{dequ-vector}.

\medskip\noindent\textsc{A block-regular clique}

\smallskip\noindent The random equity cross-holdings matrix $\bC$ thus constructed is generically, by design, asymmetric. While individual instantiations can vary wildly, the analysis of equilibria is simplified enormously by virtue of a concentration phenomenon in the theory of random matrices: \emph{the equilibrium valuations associated with almost all instances of $\bC$ concentrate at the equilibrium valuations associated with a fixed matrix of cross-holdings, $\overline{\bC}$, that we identify with a block-regular clique}. 

Some notation: write $\overline{d}_j=\sum_{i}\bE(A_{ij})$ for the expected in-degree of vertex $j$. Suppose block $k$ is comprised of $S_k$ vertices, $S_1 + \dots + S_m = n$. If $j$ is in block $\ell$, we may then identify
\begin{equation*}
	\overline{d}_j = g_{1\ell}S_1 + \dots + g_{m\ell}S_m =: \psi_{\ell}
\end{equation*}
and the expected in-degree of a given vertex is completely determined by the block in which it resides. Form the block-regular cross-holdings matrix $\overline\bC = \bigl[\overline{C}_{ij}\bigr]$ whose entries are given by
\begin{equation}\label{block-regular}
	\overline{C}_{ij} = \frac{c g_{k\ell}}{\psi_{\ell}}
\end{equation}
whenever $i$ is in block $k$ and $j$ is in block $\ell$. We call the corresponding equity sharing network a \emph{block-regular clique}; it is the extension of the \emph{regular clique} described in~\cite*{amelkin2021} to the stochastic block setting.

With the parametric setting unchanged---endowment vector $\be$, failure cost $\beta$, and threshold $v^\ast$---write $\overline{\bV} = \bigl[\overline{V}_j\bigr]$ for the equilibrium firm valuations associated with the block-regular clique. In analogy with~\eqref{dequ-vector}, $\overline{\bV}$ satisfies the fixed-point equation
\begin{equation}\label{dequ-vector-clique}
	\overline\bV = \be + \overline\bC\,\overline\bV
		- \beta\onebb_{\{\overline\bV < v^\ast\m{1}\}}.
\end{equation}
As before, there can be multiple solutions.



\medskip\noindent\textsc{Putative equilibria and concentration}

\smallskip\noindent The fixed-point equations~\eqref{dequ-vector} and~\eqref{dequ-vector-clique} induce a family of putative equilibria that we now describe. For each fixed binary vector $\bkappa\in\{0,1\}^n$ consider the fixed-point equations
\begin{equation}\label{dequ-putative}
	\begin{split}
    	\bV(\bkappa) &= \be + \bC\bV(\bkappa) - \beta(\mathbf{1} - \bkappa),\\
    	\overline{\bV}(\bkappa) &= \be + \overline{\bC}\,\overline{\bV}(\bkappa)
    		- \beta(\mathbf{1} - \bkappa).
    \end{split}
\end{equation}
We call each such $\bkappa$ a \emph{putative solvency vector} and think of it as identifying a putative list of solvent firms: firm $i$ is putatively solvent if $\kappa_i = 1$ and putatively insolvent if $\kappa_i = 0$. The solutions of the fixed-point equations~\eqref{dequ-putative} can be inconsistent in that the picture of solvency that emerges need not agree with $\bkappa$, hence the hedge ``putative''. A putative solvency vector $\bkappa$ is \emph{feasible} for the random cross-holdings matrix $\bC$  if, and only if, the corresponding fixed-point $\bV(\bkappa) = \bigl[V(\bkappa)_i\bigr]$ satisfies
\begin{equation*}
	V(\bkappa)_i 
		\begin{array}{ll}
			\geq v^\ast & \text{if $\kappa_i = 1$,}\\[2pt]
			< v^\ast & \text{if $\kappa_i = 0$.}
		\end{array}
\end{equation*}
Likewise, $\bkappa$ is feasible for the block-regular cross-holdings matrix $\overline\bC$ if, and only if, the corresponding fixed-point $\overline\bV(\bkappa) = \bigl[\overline{V}(\bkappa)_i\bigr]$ satisfies
\begin{equation*}
	\overline{V}(\bkappa)_i 
		\begin{array}{ll}
			\geq v^\ast & \text{if $\kappa_i = 1$,}\\[2pt]
			< v^\ast & \text{if $\kappa_i = 0$.}
		\end{array}
\end{equation*}
The equilibria associated with the fixed-point equations~\eqref{dequ-vector} and~\eqref{dequ-vector-clique} may hence be associated with the feasible solutions $\bkappa \in \{0,1\}^n$ of~\eqref{dequ-putative}.

A geometric viewpoint adds color to the picture. Each putative solvency vector $\bkappa$  induces a \emph{feasibility orthant} $\sR(\bkappa)$ of points $\bx = (x_1, \dots, x_n)$ in $\Rn$ satisfying
\begin{equation*}
	x_i \begin{array}{ll} \geq v^\ast & \text{if $\kappa_i = 1$,}\\[2pt]
			< v^\ast & \text{if $\kappa_i = 0$.} \end{array}
\end{equation*}
The feasibility orthants $\sR(\bkappa)$ partition $\Rn$ into $2^n$ regions as $\bkappa$ varies over $\{0,1\}^n$. A putative solvency vector $\bkappa$ is feasible for a given cross-holdings matrix if, and only if, the corresponding orthant $\sR(\bkappa)$ contains its putative solution: $\bkappa$ is feasible for $\bC$ (respectively, $\overline\bC$) if, and only if, $\bV(\bkappa)\in\sR(\bkappa)$ (respectively, $\overline\bV(\bkappa)\in\sR(\bkappa)$). 

The value of the putative fixed-point formulation~\eqref{dequ-putative} is that it replaces the \emph{non-linear} fixed-point formulation~\eqref{dequ-vector} which, in general, has a multiplicity of somewhat opaque solutions, by a system of $2^n$ \emph{linear} fixed-point equations each of which has a unique, completely characterized solution. The price to be paid for the analytical felicity conferred by this finesse is that not all the putative solutions are feasible: one has to winnow through the list to extract the equilibria.

We now consider large random networks and show a concentration result for putative equilibria: the values $\bV(\bkappa)$ are close to the values $\overline{\bV}(\bkappa)$. Reintroduce the explicit dependence on $n$ in the notation to keep the role of dimensionality firmly in view. Suppose $\be = \be(n)$ is any component-wise uniformly bounded sequence of endowment vectors and $\bkappa = \bkappa(n)$ is any sequence of putative solvency vectors. The insolvency cost $\beta$ and the number of blocks $m$ are held fixed, as are the inter-block edge probabilities $g_{k\ell}$, and we assume each block contains a non-vanishing fraction of firms. Consider now a sequence of multi-type random directed digraphs $\sG = \sG(n)$, the induced random cross-holdings matrices $\bC = \bC(n)$, and the corresponding block-regular cross-holdings matrices $\overline\bC = \overline\bC(n)$. Let $\bV(\bkappa) = \bV\bigl(\bkappa(n)\bigr)$ and $\overline\bV(\bkappa) = \overline\bV\bigl(\bkappa(n)\bigr)$ be the corresponding sequences of putative solutions of the fixed-point equations~\eqref{dequ-putative}.

Now introduce asymptotics. Our first result says that, for large $n$, the putative solution $\bV(\bkappa)$ in the stochastic block model is close to the putative solution $\overline{\bV}(\bkappa)$ in the block-regular clique. The following theorem says this somewhat more succinctly and accurately---and a lot more besides. The proof may be found in the appendix.
\begin{thrm}\label{thm:fixedkappa}
	$\|\bV\bigl(\bkappa(n)\bigr) - \overline{\bV}\bigl(\bkappa(n)\bigr) \|_\infty \to 0$
	almost surely as $n \rightarrow \infty$.
\end{thrm}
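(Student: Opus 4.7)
The plan is to exploit the linearity of the putative fixed-point equations~\eqref{dequ-putative}. Setting $\m{u} := \be - \beta(\onebb - \bkappa)$, which is uniformly $\ell^\infty$-bounded by hypothesis, the column-sum identities $\onebb^\trans \bC = c\,\onebb^\trans = \onebb^\trans \overline{\bC}$ force $\rho(\bC) = \rho(\overline{\bC}) = c < 1$, so $\bI - \bC$ and $\bI - \overline{\bC}$ are invertible and
\[
\bV(\bkappa) = (\bI - \bC)^{-1}\m{u}, \qquad \overline{\bV}(\bkappa) = (\bI - \overline{\bC})^{-1}\m{u}.
\]
The resolvent identity then yields
\[
\bV(\bkappa) - \overline{\bV}(\bkappa) = (\bI - \bC)^{-1}(\bC - \overline{\bC})\,\overline{\bV}(\bkappa),
\]
which is what must be driven to zero in the sup-norm. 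A preliminary observation is that $\overline{\bV}(\bkappa)$ is uniformly bounded in $n$: since $\overline{\bC}$ is block-constant of rank $m$, its action on block-constant vectors reduces to an $m \times m$ non-negative matrix $\overline{\m{Q}}$ with Perron eigenvalue exactly $c$ (the block-size vector $\m{s} = (S_1,\ldots,S_m)^\trans$ is a left eigenvector, since $\m{s}^\trans \overline{\m{Q}} = c\,\m{s}^\trans$), which bounds $\|\overline{\bV}(\bkappa)\|_\infty$ by a constant depending only on $\|\m{u}\|_\infty$ and the block parameters.

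Next I would establish concentration of the residual $\m{w} := (\bC - \overline{\bC})\overline{\bV}(\bkappa)$. For $i$ in block $k$, split
\[
w_i = \sum_\ell \frac{c}{\psi_\ell}\sum_{j\in B_\ell}(A_{ij}-g_{k\ell})\overline{V}_j \;+\; \sum_\ell c\sum_{j\in B_\ell}A_{ij}\overline{V}_j\left(\frac{1}{d_j} - \frac{1}{\psi_\ell}\right).
\]
The first summand is a sum of independent centered bounded random variables at scale $\Theta(1/n)$, so Bernstein's inequality bounds it by $O(\sqrt{(\log n)/n})$ with probability $1 - O(n^{-3})$; Chernoff bounds on $d_j$ give $|d_j - \psi_\ell| = O(\sqrt{\psi_\ell \log n})$ uniformly in $j$, controlling the second summand at the same rate. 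A union bound over $i$ and Borel--Cantelli then yield $\|\m{w}\|_\infty \to 0$ almost surely.

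The main obstacle is bounding $\|(\bI - \bC)^{-1}\m{w}\|_\infty$: although $\bC$ has column sums exactly $c<1$, its row sums are random and can exceed $1$, so the standard $\ell^\infty$ operator norm of $\bC$ is not controlled by $c$ and $\|(\bI-\bC)^{-1}\|_\infty$ cannot be bounded naively. I would circumvent this by working in the weighted norm $\|\m{x}\|_\star := \max_i |x_i|/\overline{r}_i$, where $\overline{\m{r}}$ is the block-constant right Perron eigenvector of $\overline{\bC}$ (so $\overline{\bC}\,\overline{\m{r}} = c\,\overline{\m{r}}$), whose entries lie in $[r_{\min}, r_{\max}]$ for positive constants $r_{\min} \le r_{\max}$ independent of $n$; this is available under mild irreducibility of $[g_{k\ell}]$ and, for reducible block graphs, by treating strongly connected components separately. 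Applying the Bernstein/Chernoff argument above with $\overline{\m{r}}$ in place of $\overline{\bV}(\bkappa)$ yields $\bC\,\overline{\m{r}} = c\,\overline{\m{r}} + \m{h}$ with $\|\m{h}\|_\infty \to 0$ a.s. Since $\bC \ge 0$, for any vector $\m{v}$ and each $i$,
\[
|\bC\m{v}|_i \le \|\m{v}\|_\star(\bC\,\overline{\m{r}})_i \le \|\m{v}\|_\star\bigl(c + \|\m{h}\|_\infty/r_{\min}\bigr)\overline{r}_i,
\]
so $\|\bC\|_{\star\to\star} \le c + o(1) < 1$ eventually almost surely, whence $\|(\bI - \bC)^{-1}\|_{\star\to\star}$ is a.s.\ bounded. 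Combining with the equivalence $\|\cdot\|_\infty \asymp \|\cdot\|_\star$,
\[
\|\bV(\bkappa) - \overline{\bV}(\bkappa)\|_\infty \le \frac{r_{\max}\,\|\m{w}\|_\infty}{(1-c-o(1))\,r_{\min}} \;\xrightarrow[n\to\infty]{\text{a.s.}}\; 0,
\]
completing the argument.
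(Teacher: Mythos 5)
Your proof is correct but takes a genuinely different route from the paper. The paper defines $\boldsymbol{\Lambda} = \bC - \overline{\bC}$, conditions on a high-probability degree event, and invokes a matrix Bernstein inequality to show $\|\boldsymbol{\Lambda}\|_2 = o(n^{-1/4})$; it then writes the difference via a Neumann series in $(\bI - \overline{\bC})^{-1}\boldsymbol{\Lambda}$ and bounds the first-order term in $\ell^\infty$ (by a Lyapunov CLT for triangular arrays) while the higher-order terms are bounded in $\ell^2$, using $\|\overline{\bV}(\bkappa)\|_2 = O(\sqrt{n})$ against the squared $\|\boldsymbol{\Lambda}\|_2$ bound. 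You instead write the difference by the resolvent identity $(\bI - \bC)^{-1}(\bC - \overline{\bC})\overline{\bV}(\bkappa)$, establish $\|(\bC - \overline{\bC})\overline{\bV}(\bkappa)\|_\infty \to 0$ a.s.\ by scalar Bernstein and Chernoff bounds applied row-by-row plus a union bound, and control $(\bI - \bC)^{-1}$ directly through a random weighted sup-norm built from the right Perron eigenvector of $\overline{\bC}$ (which is block-constant with entries uniformly bounded away from $0$ and $\infty$ under mild irreducibility of $[g_{k\ell}]$). Your approach buys: a more elementary toolbox (scalar rather than matrix concentration), an argument living entirely in $\ell^\infty$-type norms so no dimensional $\sqrt{n}$ factor needs to be absorbed, and a clean sidestep of the paper's assertion that $\|\overline{\bC}\|_2 = c$ --- you only use $\rho(\overline{\bC}) = c$, which follows from equal column sums and Perron--Frobenius, whereas the $2$-norm need not equal the spectral radius for the non-normal matrix $\overline{\bC}$. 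The trade-off is that your weighted-norm step requires the Perron vector to be strictly positive with entries bounded away from zero, i.e., a connectivity condition on the block graph that you flag but the paper does not explicitly impose; the paper's footnoted extension to polynomially vanishing $g_{k\ell}$ would also require revisiting your concentration rates. As stated for fixed $g_{k\ell}$ and non-vanishing blocks, your proof is sound.
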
 

With high probability we may conclude that, for any $\epsilon > 0$ chosen sufficiently small, $\|\bV(\bkappa) - \overline{\bV}(\bkappa)\|_{\infty} < \epsilon$, eventually, for all sufficiently large $n$. The result characterizes putative equilibria, but will be applied in the next section to characterize actual equilibria.

The proof shows, by applying Bernstein's inequality, that with high probability the spectrum of the matrix $\boldsymbol{C}$ of cross-holdings is close to the spectrum of the deterministic matrix $\boldsymbol{\overline{C}}$. The remainder of the analysis establishes that potential small perturbations to the matrix of cross-holdings do not have large impacts on the values of firms at the putative equilibrium $\bkappa$. A challenge relative to prior concentration results is that the network of connections is directed, so results on Hermitian matrices cannot be applied.

In the next step we finesse the usual computational complications associated with edge effects in discrete networks by passing to a continuum analogue of $\overline{\bC}$.

\section{The Graphon Model}\label{sec:model}
This section describes the graphon model as well as characterizes extremal equilibrium firm valuations in terms of cutoffs.

A \emph{kernel} is a bounded, symmetric, measurable function $C\colon (0,1]^2\to\mathbb{R}$.\footnote{It is more usual to define graphons on the Cartesian product of the closed unit interval $[0, 1]$ with itself. Our decision to work with the half-closed unit interval $(0, 1]$ instead is for notational convenience only. None of the results is altered by the addition or deletion of points at the boundary (or indeed, the addition or deletion of sets of Lebesgue measure zero) but the felicitous choice of the half-closed interval $(0, 1]$ as generator avoids the nuisance of having to introduce additional notation merely to handle an inconsequential measure zero case at the boundary.} Kernels  generalize weighted graphs in the sense that to each weighted graph  we may associate a distinct kernel $C$: if the graph has vertex set $[n]$, vertex weights $\alpha_i$, and edge weights $\beta_{ij}$, partition the unit interval into $n$ intervals $\BI_1, \dots, \BI_n$ where $\BI_i$ has length $\alpha_i\big/\sum_j\alpha_j$, and set $C(x,y) = \beta_{ij}$ if $(x, y)\in\BI_i\times\BI_j$. 

A kernel $C$ taking values in the unit interval, $0\leq C(x,y)\leq 1$, is called a \emph{graphon}, the name being a contraction of graph function. In the construction above, if the edge weights take values in the unit interval, then $C$ is a graphon. In the case of a simple, unweighted graph, the associated graphon $C(x,y)$ takes values in $\{0,1\}$. Viewed more broadly, graphons provide a natural generalization of Erd\"os--R\'enyi random graphs and, more generally, multi-type random graphs, via the intuitive interpretation of the values $C(x,y)$ as probabilities of links.


Conversely, a graphon can be interpreted as a limit of a sequence of graphs with an increasing number of vertices (see \cite{lovasz2012large}). 

\medskip\noindent\textsc{A graphon equity network}

\smallskip\noindent If we relax the symmetry requirement on $C(x,y)$ we obtain continuous analogs of directed graphs. This is the segue to the continuum germane for our purposes. Suppose, henceforth, that $C$ is a directed graphon, that is to say, $C\colon(0,1]^2\to[0,1]$ is bounded and integrable, not necessarily symmetric. To obviate trivialities, we suppose that, for each $y$, $C(x, y)$ is non-zero in some $x$-interval of positive measure.


Passing to a continuous limit of equity networks we may interpret the unit interval as a continuum of firms, each associated with a label $x\in(0,1]$, with the natural interpretation now that the (directed) graphon $C(x,y)$ denotes the share that firm $x$ holds in firm $y$. Then $I_y = \int_0^1 C(x, y)\,dx$ represents that fraction of the value of firm $y$ that is held by other firms in the network including firm $y$. Reusing notation from the discrete setting, we suppose $\sup_y I_y \leq c < 1$. This is the continuous analog of the sub-stochasticity condition on the cross-holdings. We may interpret $c$ as the (maximal) fraction of value held in-network with at least $1 - c$ of the value held by investors external to the network. Call such a $C$ an \emph{equity} graphon.

In the continuous limit, firm valuations $V_i$ are replaced naturally by valuation densities $v_x$ (with units of value/length): the total valuation of firms in an interval $\BI$ is given by $\int_{\BI} v_x\,dx$. Reusing notation, we are now led to write a formal continuous analogue of the fixed-point equilibrium equation~\eqref{dequ} in the form
\begin{equation}\label{eq:lineargraphon}
	v_x = e_x + \int_0^1 C(x, y) v_y\,dy - \beta\onebb_{v_x<v^*}.
\end{equation}
In this equation $v_x$ represents the valuation density at the point $x$ (with the lower case symbol serving to distinguish the continuous from the discrete), $e = e_x$ is bounded with $e_x$ representing the endowment density at $x$, and $\beta$ is the bankruptcy cost density.

An equilibrium in the equity graphon $C$ is associated with a solution of the fixed-point equation~\eqref{eq:lineargraphon}. While individual equilibria can have a rather wild and chaotic character, a standard application of the Knaster--Tarski theorem provides an elegant superstructure to the family of equilibria. Imbue any space of functions on the unit interval with the natural, pointwise partial order $\leq$: $u\leq v$ if $u_x\leq v_x$ for each $x$.
\begin{prop}\label{prop:Tarski}
	The set of equilibria of the fixed-point equation~\eqref{eq:lineargraphon} forms a 
	non-empty, complete lattice with respect to the pointwise partial order $\leq$.
\end{prop}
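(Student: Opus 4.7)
The plan is to apply the Knaster--Tarski fixed-point theorem. Define the operator
\[
	(Tv)_x \;=\; e_x + \int_0^1 C(x,y)\,v_y\,dy - \beta\,\onebb_{v_x<v^\ast},
\]
so that the equilibria of~\eqref{eq:lineargraphon} are precisely the fixed points of $T$. The key observation is that $T$ is order-preserving for the pointwise partial order: the cross-holdings contribution is monotone because $C\geq 0$, and the penalty contribution is monotone because $v_x\leq w_x$ implies $\onebb_{v_x<v^\ast}\geq\onebb_{w_x<v^\ast}$, so subtracting $\beta$ times the indicator \emph{adds back more} when $v$ is larger. Hence $v\leq w$ implies $Tv\leq Tw$.

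Next, I would localize $T$ to a complete sublattice on which it acts. The sub-stochasticity condition $\sup_y I_y\leq c<1$ yields $\|C\|_{L^1\to L^1}\leq c$, and since $C$ is a non-negative integral kernel its spectral radius agrees across $L^p$ spaces, so $(I-C)^{-1}=\sum_{n\geq 0}C^n$ is a bounded operator on $L^\infty$. Put $\vsup=(I-C)^{-1}e$ and $\vinf=(I-C)^{-1}(e-\beta\m{1})$; both are bounded measurable functions. Any fixed point $v$ of $T$ satisfies $e+Cv-\beta\m{1}\leq v\leq e+Cv$ pointwise, and iterating these inequalities produces $\vinf\leq v\leq\vsup$. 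A direct check shows that the order interval $L=[\vinf,\vsup]$ is carried into itself by $T$.

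With $T\colon L\to L$ monotone and $L$ a non-empty complete lattice, the Knaster--Tarski theorem returns a non-empty complete lattice of fixed points; in particular the pointwise greatest and least fixed points (the extremal equilibria invoked elsewhere in the paper) exist. The main subtlety is ensuring the ambient lattice is genuinely complete under the \emph{pointwise} order, since arbitrary pointwise suprema of measurable functions need not be measurable. I would handle this by passing to $L^\infty$ under the a.e.\ order (a complete lattice via essential sup and inf), noting that $T$ descends to this quotient because the integral is insensitive to null-set modifications and the indicator is a monotone function of the pointwise value, and then lifting each equivalence-class fixed point to a pointwise-defined representative via the fixed-point equation itself. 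The discontinuity of the indicator creates no obstruction because Knaster--Tarski demands only monotonicity, not continuity.
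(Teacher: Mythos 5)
Your proof follows the same architecture as the paper's: define the monotone operator, exhibit a bounded invariant set, and invoke Knaster--Tarski. The paper uses the crude box $\sF_D = \{v\ \text{measurable}: |v|\leq D\}$ with $D\geq(M+\beta)/(1-c)$, whereas you use the tighter order interval $[\vinf,\vsup]=[(I-C)^{-1}(e-\beta\mathbf 1),\,(I-C)^{-1}e]$, which is a nice refinement (it also identifies sharp a priori bounds on every equilibrium) but is not logically necessary. Your iteration argument showing every fixed point lies in this interval is correct, and your verification that $T$ maps it into itself is routine.

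The most substantive observation you make --- and it is a genuine one --- is that the set of \emph{measurable} bounded functions under the \emph{pointwise} order is not actually a complete lattice: the pointwise supremum of an uncountable family of measurable functions can fail to be measurable (take indicators of singletons of a non-measurable set). The paper simply asserts that $\sF_D$ is a complete lattice and applies Tarski; that step is, strictly speaking, unjustified as stated. Your proposed repair (pass to $L^\infty$ under the a.e.\ order, which is order-complete via essential suprema, check that $T$ descends to the quotient, then lift back) is the right idea, but the lifting step deserves more care than your one clause gives it. Given an a.e.\ fixed point $v$, the quantity $A_x := e_x + \int_0^1 C(x,y)v_y\,dy$ is unaffected by modifications of $v$ on a null set, so on the exceptional null set one can redefine $v_x$ to solve the scalar equation $u = A_x - \beta\onebb_{u<v^\ast}$, which always has a solution (take $u=A_x$ if $A_x\geq v^\ast$, else $u=A_x-\beta$). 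This produces a pointwise fixed point in the same $L^\infty$ class. What is less immediate, and what you should say a word about, is that the lattice operations on the \emph{pointwise} fixed-point set are then inherited from the quotient in a compatible way --- e.g.\ the pointwise maximal equilibrium is obtained from the essential-sup fixed point by always choosing the larger root $u=A_x$ on the null set. A further small caveat: your parenthetical that "the spectral radius agrees across $L^p$ spaces" is not a general fact and is not really needed; boundedness of $(I-C)^{-1}$ on $L^\infty$ follows directly from a bound on $\sup_x\int_0^1 C(x,y)\,dy$, which holds for the block equity graphon. In summary: same method, tighter invariant set, and you correctly flag a measurability gap in the paper's proof, though your sketch of the fix would need a couple more lines to be airtight.
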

We conclude \emph{a fortiori} that the fixed-point equation~\eqref{eq:lineargraphon} has at least one solution, and indeed that there is a unique maximal equilibrium $v = \vsup$ and a unique minimal equlibrium $v = \vinf$. The maximal equilibrium pointwise dominates all other equilibria, while the minimial equilibrium is pointwise dominated by all other equilibria. The maximal and minimal equilibrium may coincide in which case~\eqref{eq:lineargraphon} has a single solution. Say that an equilibrium $v = v_x$ of the block equity graphon is \emph{extremal} if it is maximal, $v = \vsup$, or minimal, $v = \vinf$.





\medskip\noindent\textsc{A block equity graphon}

\smallskip\noindent Partition the unit interval into $m$ sub-intervals, each sub-interval representing a contiguous block of firms of a given type. Denote by $s_k$ the length of the $k$th sub-interval or block: this represents the fraction of all firms that are of type $k$. With $t_0:= 0$, write $t_k := s_1 + \dots + s_k$ for the partial sums and identify the sub-interval corresponding to firms of type $k$ by $\BT_k := (t_{k-1}, t_k]$. A generic firm is indexed by $x\in(0,1]$; it is of type $k$ if $x\in\BT_k$.


We suppose that endowments are type-dependent and piecewise smooth. Without loss we may take it that firms of each type are ordered by increasing endowment. Accordingly we begin with a family of type-specific endowment functions $\{\,f_k, 1\leq k\leq m\,\}$ where, for each $k$, $f_k\colon\BT_k\to\real^+$ is continuously differentiable and increasing in the sub-interval $\BT_k$, and stitch these functions together to create an endowment density
\begin{equation}\label{endowment-density}
	e_x = \begin{cases} f_1(x) & \text{if $x\in\BT_1$,}\\
			\hdotsfor{2}\\
			f_m(x) & \text{if $x\in\BT_m$,} \end{cases}
\end{equation}
which is piecewise continuously differentiable and increasing in every sub-interval. Intuitively, our formulation corresponds to firm values drawn randomly from distributions that can depend on the types $k$. The function $f_k(x)$ is then the CDF of this density. A leading case is $f_k(x)$ linear, which corresponds to values drawn uniformly from intervals $[a_k,b_k]$.

In analogy with the construction of the regular-block clique in the discrete setting, equity sharing links between firms are described via a function $G\colon(0,1]^2\to\real^+$ whose value at any given point $(x, y)$ depends only on the types of $x$ and $y$: $G(x, y) = g_{k\ell}$ if $x$ is of type $k$ and $y$ is of type $\ell$. With $0 < c < 1$ representing, as before, the in-network shared fraction of equity, we identify the \emph{block equity graphon}
\begin{equation*}
	C(x, y) = \frac{c G(x,y)}{\int_0^1 G(\xi, y)\,d\xi}.
\end{equation*}
Some notation clarifies matters: if $y\in\BT_\ell$ then, reusing notation from the discrete equity setting, the integral
\begin{equation*}
	\int_0^1 G(\xi, y)\,d\xi = g_{1\ell}s_1 + \dots + g_{m\ell}s_m =: \psi_{\ell}
\end{equation*}
is determined solely by the type of firm $y$. If firm $x$ is of type $k$ and $y$ is of type $\ell$, then
\begin{equation}\label{block-graphon}
	C(x, y) = \frac{cg_{k\ell}}{\psi_{\ell}} =: T_{k\ell}
\end{equation}
and a comparison with~\eqref{block-regular} shows that we have constructed the block equity graphon analogue of the block-regular clique.

The justification of the block equity graphon formulation~\eqref{block-graphon} rests in the fact that it can be viewed as the natural continuum limit of the stochastic block model equity network~\eqref{stochastic-block-model} of the previous section; the utility of the formulation rests in the computational simplicities that a passage to the continuum brings. 

For the graphon to give a reasonable approximation, we require that firms are small and the cross-holdings are sufficiently diversified. In practice, an important concern is that individual firms can generate systemic risk. Our model could be extended to allow a finite number of such large firms as well as a continuum of infinitesimal firms represented via a graphon. We can then study equilibria by jointly analyzing a discrete problem and a continuous problem, and the latter will be amenable to the techniques we develop in the remainder of the paper.

\medskip\noindent\textsc{Cutoffs and cutoff equilibria}

\smallskip\noindent In view of~\eqref{endowment-density} and~\eqref{block-graphon}, the generic fixed-point equation~\eqref{eq:lineargraphon} for equilibrium valuations reduces in the case of the block equity graphon to the family of equations
\begin{equation}\label{eq:univalue}
	v_x = f_k(x) + \sum_{\ell=1}^m T_{k\ell} \int_{\BT_\ell} v_y\,dy 
		- \beta\onebb_{v_x < v^\ast} \qquad (x\in\BT_k)
\end{equation}
as the firm of type $k$ traverses $1$ through $m$. The block structure of the graphon is manifested in the middle term on the right which is piecewise constant in each interval $\BT_k$. 

The example below will illustrate that a simple characterization of {\em all} equilibria of the block equity graphon will be elusive. Within any block, the intervals of solvency and insolvency can fluctuate wildly.  We conjecture our concentration results have analogues for larger classes of equilibria (e.g., when all firms above a cutoff $x_k^*$ are solvent and all firms below the cutoff are insolvent), but our subsequent focus will be on the extremal equilibria. 
\begin{example}\label{eg:swap}
	\emph{The swap}. For any (Lebesgue-) measurable set $\BI$ on the line, write $\BI + t$ 
	for its translation consisting of the points $x + t$ as $x$ varies over $\BI$. We take 
	$t$ to be larger than the diameter of $\BI$ so that $\inf (\BI + t) > \sup \BI$. 
	Continuum exemplars for $\BI$ are choices of tiny intervals while discrete exemplars 
	are singleton sets, though the structure does not preclude wilder sets. With this for 
	preparation, suppose now that both $\BI$ and $\BI + t$ are subsets of some interval 
	$\BT_k$. Introduce the nonce notation
	\begin{equation*}
		\Delta = \Delta(\BI, t) := \sup_{x\in\BI}\bigl(f_k(x+t) - f_k(x)\bigr).
	\end{equation*}
	The right-hand side is positive in view of the monotonicity of $f_k$.
	
	\emph{Swapping solvency and insolvency: I}. Suppose $v = v_x$ is an equilibrium of 
	the block equity graphon which satisfies
	\begin{equation}\label{eq:margin}
		v_x \begin{cases} 
				< v^\ast - \Delta & \text{if $x\in\BI$,}\\
				\geq v^\ast + \Delta & \text{if $x\in\BI + t$.}
			\end{cases}
	\end{equation}
	In words, firms in $\BI + t$ are solvent with a \emph{margin} $\Delta$ while firms in
	$\BI$ are insolvent with the same margin. Construct a new function $v' = v'_x$ by 
	setting
	\begin{equation}\label{eq:swap}
		v'_x =	\begin{cases}
					v_{x+t} - \bigl(f_k(x+t) - f_k(x)\bigr) & \text{if $x\in\BI$,}\\
					v_{x-t} + \bigl(f_k(x) - f_k(x-t)\bigr) & \text{if $x\in\BI + t$,}\\
					v_x & \text{otherwise.}
				\end{cases}
	\end{equation}
	In rough terms, $v'$ is constructed by swapping endowment-adjusted values of $v_x$
	and $v_{x+t}$ as $x$ varies over $\BI$ while keeping all other values fixed. The
	symmetry of the swap keeps the cross-share holding contribution due to the second
	term on the right in~\eqref{eq:univalue} unchanged: as
	$v'_x + v'_{x+t} = v_x + v_{x+t}$ for each $x\in\BI$, by grouping terms inside the
	integral, we have
	\begin{align*}
		\sum_\ell T_{k\ell}\int_{\BT_\ell} v'_y\,dy
			&= \sum_{\ell\neq k} T_{k\ell}\int_{\BT_\ell} v'_y\,dy
				+ T_{kk}\biggl[\int_{\BT_k\setminus[\BI\cup(\BI+t)]} v'_y\,dy
					+ \int_{\BI} (v'_y + v'_{y+t})\,dy\biggr]\\
			&= \sum_{\ell\neq k} T_{k\ell}\int_{\BT_\ell} v_y\,dy
				+ T_{kk}\biggl[\int_{\BT_k\setminus[\BI\cup(\BI+t)]} v_y\,dy
					+ \int_{\BI} (v_y + v_{y+t})\,dy\biggr]\\
			&= \sum_\ell T_{k\ell}\int_{\BT_\ell} v_y\,dy.
	\end{align*}
	Moreover, as $0<f_k(x + t) - f_k(x)\leq\Delta$ for each $x\in\BI$, in view 
	of~\eqref{eq:margin}, we see that 
	\begin{equation*}
		\onebb_{v_x' < v^\ast} =
			\begin{cases}
				\onebb_{v_{x+t} < v^\ast + (f_k(x+t) - f_k(x))} = 0
					& \text{if $x\in\BI$,}\\
				\onebb_{v_{x-t} < v^\ast - (f_k(x) - f_k(x-t))} = 1
					& \text{if $x\in\BI + t$,}\\
				\onebb_{v_x < v^\ast} & \text{otherwise.}
			\end{cases}
	\end{equation*}
	An easy algebraic verification now shows that $v'$ is another equilibrium of the block 
	equity graphon. Indeed, leveraging the swap~\eqref{eq:swap}, if $x\in\BI$, then firm
	$x+t$ is solvent (with margin $\Delta$), and so
	\begin{equation*}
		v'_x = f_k(x) + \sum_\ell T_{k\ell}\int_{\BT_\ell} v_y\,dy
			= f_k(x) + \sum_\ell T_{k\ell}\int_{\BT_\ell} v_y'\,dy
				- \beta\onebb_{v'_x < v^\ast}.
	\end{equation*}
	Similarly, if $x\in\BI + t$, then firm $x-t$ is insolvent (with margin $\Delta$), and so
	\begin{equation*}
		v'_x = f_k(x) + \sum_\ell T_{k\ell}\int_{\BT_\ell} v_y\,dy - \beta
			= f_k(x) + \sum_\ell T_{k\ell}\int_{\BT_\ell} v_y'\,dy 
				- \beta\onebb_{v'_x < v^\ast}.	
	\end{equation*}
	The remaining cases are trite as $v'_x = v_x$ if $x$ is not in $\BI$ or in $\BI + t$.
	Thus, the endowment-adjusted swap~\eqref{eq:swap} creates another equilibrium of the 
	block equity graphon, this time with the roles of insolvency and solvency interchanged 
	for $\BI$ and $\BI + t$.
	
	\emph{Swapping solvency and insolvency: II}. Reusing notation, suppose $v = v_x$ is
	an equilibrium of the block equity graphon where firms in $\BI$ are solvent and firms 
	in $\BI + t$ are insolvent:
	\begin{equation*}
		v_x \begin{cases} 
				\geq v^\ast & \text{if $x\in\BI$,}\\
				< v^\ast & \text{if $x\in\BI + t$.}
			\end{cases}
	\end{equation*}
	We don't need a margin anymore as with the roles of $\BI$ and $\BI + t$ reversed the
	endowments are working in our favor. Following the same argument as before, it is
	easy to verify that the swap~\eqref{eq:swap} yields a new equilibrium $v'$ which is 
	insolvent over $\BI$ and solvent over $\BI+t$. Or simply reverse the previous 
	construction.\hfill\qed
\end{example}

 While an arbitrary equilibrium can have complicated structures, the extremal points of the lattice of equilibria inherit a more regular character from the block structure of the graphon. These extremal equilibria are the most important members of the lattice of equilibria: they delineate bookend bounds on the overarching valuation picture and stability, and can also be selected via dynamic processes in which firms fail sequentially. We focus on extremal equilibria from this point onwards.
\begin{prop}\label{prop:block-graphon}
	If $v$ is an extremal equilibrium of the block equity graphon, then, for each 
	$k$, the restriction $v\big|_{\BT_k}$ of $v$ to $\BT_k$ is \emph{(i)} increasing and
	\emph{(ii)} continuous from the right with at most a single point of jump. Moreover, \emph{(iii)} if $v\big|_{\BT_k}$ has a jump at the point $x_k^\ast$ in $\BT_k$, then $v_{x_k^\ast} - v_{x_k^\ast-} = \beta$ and, furthermore, $v_{x_k^\ast} = v^\ast$ if $v$ is maximal, and $v_{x_k^\ast} = v^\ast + \beta$ if $v$ is minimal.
\end{prop}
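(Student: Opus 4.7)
The proof hinges on the observation that the block structure collapses~\eqref{eq:univalue} into a scalar condition per block. Because $C(x,y) = T_{k\ell}$ is constant in $x$ on $\BT_k$, the integral term reduces to a block constant $M_k := \sum_\ell T_{k\ell}\int_{\BT_\ell} v_y\,dy$. Writing $\alpha_k(x) := f_k(x) + M_k$---continuous and strictly increasing on $\BT_k$ by the hypothesis on $f_k$---the equation on $\BT_k$ becomes $v_x = \alpha_k(x) - \beta\onebb_{v_x < v^\ast}$, so $v_x$ takes one of two values: the \emph{solvent branch} $\alpha_k(x)$ (consistent iff $\alpha_k(x)\geq v^\ast$) or the \emph{insolvent branch} $\alpha_k(x) - \beta$ (consistent iff $\alpha_k(x) < v^\ast + \beta$).

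For $\vsup$ I claim the solvent branch is selected at every point where it is consistent. If not, some $x_0\in\BT_k$ has $\alpha_k^{\sup}(x_0)\geq v^\ast$ yet $\vsup_{x_0} = \alpha_k^{\sup}(x_0) - \beta$. Modify $\vsup$ at the single point $x_0$ by flipping to the solvent value $\alpha_k^{\sup}(x_0)$. The resulting function is still an equilibrium---the measure-zero perturbation leaves every integral, hence $M_k^{\sup}$ and $\alpha_k^{\sup}$, unchanged; at $x_0$ the fixed-point equation with the new (zero) indicator is satisfied by $\alpha_k^{\sup}(x_0)$; at every other point the equation is untouched---yet it strictly exceeds $\vsup$ at $x_0$, contradicting the supremum property (Proposition~\ref{prop:Tarski}). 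Hence the solvent set of $\vsup$ on $\BT_k$ equals $\{x:\alpha_k^{\sup}(x)\geq v^\ast\}$, which is an upper interval by monotonicity of $\alpha_k^{\sup}$. The insolvent set is therefore a down-interval $(t_{k-1}, x_k^\ast)$, the value is $\alpha_k^{\sup}(x)-\beta$ on it and $\alpha_k^{\sup}(x)$ on $[x_k^\ast,t_k]$, giving monotonicity and right-continuity with a single upward jump of size $\beta$ at $x_k^\ast$; by continuity of $\alpha_k^{\sup}$, $\alpha_k^{\sup}(x_k^\ast) = v^\ast$, so $v_{x_k^\ast} = v^\ast$. Degenerate cases in which one of the two intervals is empty are absorbed into the ``at most one jump'' allowance.

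The argument for $\vinf$ is dual: at every $x_0$ with $\alpha_k^{\inf}(x_0) < v^\ast + \beta$ the insolvent branch must be selected, else a single-point modification to $\alpha_k^{\inf}(x_0) - \beta$ yields a strictly smaller equilibrium, contradicting the infimum property. The insolvent set is $\{x:\alpha_k^{\inf}(x) < v^\ast + \beta\}$, whose right endpoint $x_k^\ast$ satisfies $\alpha_k^{\inf}(x_k^\ast) = v^\ast + \beta$ by continuity, so $v_{x_k^\ast} = v^\ast + \beta$.

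The main subtlety is that the single-point perturbation step treats $\vsup$ and $\vinf$ as pointwise (rather than merely a.e.) extrema over the lattice of Proposition~\ref{prop:Tarski}; this is precisely the right-continuous representative singled out by the statement. If one prefers a purely measure-theoretic route, the cutoff structure in (i)--(ii) can be obtained first from the swap construction in Example~\ref{eg:swap} (Part II)---applied to small intervals around any candidate ``defect'' $x<y$ in $\BT_k$ with $x$ solvent and $y$ insolvent, the swap preserves $M_k$, and extremality of $\vsup$ (resp.\ $\vinf$) applied on $\BI+t$ (resp.\ $\BI$) yields $\alpha_k(z)-\beta\geq\alpha_k(z)$ (resp.\ $\alpha_k(z)\leq\alpha_k(z)-\beta$), an immediate contradiction---and the precise value at the cutoff in (iii) can then be recovered by iterating the monotone Knaster--Tarski operator from above (resp.\ below) and tracking the convergence of the block cutoffs toward the extremal feasibility boundaries.
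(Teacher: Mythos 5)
Your proof is correct and rests on the same mechanism as the paper's: a single-point flip between the solvent value $\alpha_k(x)$ and the insolvent value $\alpha_k(x)-\beta$ contradicts the pointwise extremality guaranteed by Proposition~\ref{prop:Tarski}, because a measure-zero modification leaves the integral term $M_k$ untouched. What you do differently is purely organizational: you isolate the branch dichotomy $v_x\in\{\alpha_k(x),\alpha_k(x)-\beta\}$ as the central object and prove one unified claim (extremal equilibria select their preferred branch wherever consistent), from which monotonicity, right-continuity, the size-$\beta$ jump, and the exact cutoff values in (iii) all drop out simultaneously; the paper instead proves (i), (ii), (iii) incrementally with separate case analyses at each step, and its perturbations in parts (i)--(iii) are exactly the same branch flips you use, just deployed one consequence at a time. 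Your framing is arguably cleaner, though the paper's stepwise version perhaps makes the role of monotonicity of $f_k$ more transparent. The closing alternative sketch via the swap of Example~\ref{eg:swap} is also sound (the swap raises $\vsup$ on $\BI+t$ by $\beta$, contradicting maximality), though it is a multi-point version of the same flip and adds little.
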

We proffer two observations and two definitions motivated by them and defer the elementary proof to the appendix.
\begin{enumerate}
	\item If both $\vsup$ and $\vinf$ have jumps at points, say, $\overline{x}_k^\ast$
	and $\underline{x}_k^\ast$, respectively, in an interval $\BT_k$, then
	$\overline{x}_k^\ast < \underline{x}_k^\ast$. It follows that if either extremal equilibrium has a point of jump in any interval $\BT_k$ then the extremal equilibria are distinct. As a corollary, if the maximal and minimal equilibira coincide then the firms of each given type are either all solvent or all insolvent.
		
	\item Part (iii) of the theorem is where the graphon model proves useful. While the extremal equilibria are monotone in each block in the discrete setting as well, it is not generically true in the discrete setting that the size of the jump achieves exactly the distress cost $\beta$. Nor is it true in the discrete setting that at the point of jump the maximal equilibrium achieves a valuation exactly equal to the insolvency threshold $v^\ast$ or that the minimal equilibrium achieves a valuation exactly equal to $v^\ast+\beta$. The elimination of such nuisance edge effects in the continuum simplifies analysis and clarifies conclusions.
	
	\item In view of (iii), if $v\big|_{\BT_k}$ has a jump point at $x_k^\ast$, then 
	$x_k^\ast = \inf\{\,x\in\BT_k: v_x\geq v^\ast\,\}$ and we identify $x_k^\ast$ with the 
	\emph{smallest} index at which a firm is solvent. A jump point $x_k^\ast$, if one 
	exists in the interval $\BT_k$, of an extremal equilibrium represents a cutoff 
	 where the character of the extremal equilibrium valuation $v$ changes abruptly. 
	View these cutoffs more expansively to include intervals in which $v$ does 
	not have a jump by setting $x_k^\ast := t_{k-1} = \inf\BT_k$ if all firms in $\BT_k$ 
	are solvent and $x_k^\ast := t_k = \sup\BT_k$ if all firms in $\BT_k$ are insolvent. 
	We call  $\{\,x_k^\ast: 1\leq k\leq m\,\}$ the set of \emph{cutoffs} 
	(associated with the extremal equilibrium $v$).
	\item If a cutoff $x_k^\ast$ is at the boundary then the firms in $\BT_k$ 
	are either all solvent or all insolvent.\footnote{If one were to be punctilious, in the 
	latter case one must account for the possibility of a jump exactly at 
	$t_k = \sup\BT_k$, which occurs if $v(t_k-) = v^\ast$ and $v(t_k) = v^\ast + \beta$, 
	by modifying the language to exclude the measure zero point $\{t_k\}$ but this seems 
	obsessively pedantic. It does not change the general tenor of the observation.} The 
	case where $x_k^\ast$ is interior in $\BT_k$ is more interesting. Informally speaking, 
	in this case there is a cutoff below which all firms of type $k$ are insolvent and 
	above which all firms of type $k$ are solvent: 
	\begin{equation*}
		v_x \begin{cases} 
				< v^\ast & \text{if $t_{k-1}<x < x_k^\ast$,}\\[2pt]
				\geq v^\ast & \text{if $x_k^\ast\leq x\leq t_k$.}
			\end{cases}
	\end{equation*} 
	In a more graphic language, we say that an extremal equilibrium is a 
	\emph{cutoff equilibrium} if all its cutoff are interior.
\end{enumerate}

\medskip For a cutoff equilibrium to be computationally useful, the cutoffs should have a convenient characterization. As an illustrative example, we consider the simplest case when the endowment density linear within each block.

\begin{example}\label{eg:PiecewiseLinear}
	Suppose the endowment density is piecewise linear with each $f_k$ increasing linearly
	from $a_k$ to $b_k$ in the interval $\BT_k$:
	\begin{equation*}
		f_k(x) = a_k + (b_k - a_k)\cdot\frac{x - t_{k-1}}{t_k - t_{k-1}} 
			\qquad (t_{k-1} < x \leq t_k).
	\end{equation*}
    If $v = v_x$ has a jump in the interior of the interval $\BT_k$ (\emph{a fortiori}, if 
	it is a cutoff equilibrium) then the endowment $e_{x_k^*} = f_k(x_k^\ast)$ of the 
	cutoff firm is linear in the cutoff type $x_k^*$. In view of Proposition~\eqref{prop:block-graphon}, $v_{x_k^\ast} = v^\ast$, and so the fixed-point equation~\eqref{eq:univalue} yields 
 \begin{equation}
     v^\ast = a_k + (b_k - a_k)\cdot\frac{x_k^* - t_{k-1}}{t_k - t_{k-1}}   + \sum_{\ell=1}^m T_{k\ell}\int_{\BT_\ell} v_y\,dy.
 \end{equation}
 Suppose there is a jump discontinuity $x_k^*$ in each block; the subsequent equations are easily adapted to relax this assumption. Iteratively substituting for $v_y$ using equation \eqref{eq:univalue} (details can be found in the proof of Theorem \ref{t:spillovermatrix}), and using $\bT$ to denote the matrix $[T_{k\ell}]$, we obtain the linear equations 
 $$v^*= a_k + (b_k - a_k)\cdot\frac{x_k^* - t_{k-1}}{t_k - t_{k-1}}  + \left[\left(I-\bT D\right)^{-1}\bT \cdot \begin{pmatrix} t_1\cdot \frac{a_1+b_1}{2} - x^*_{1} \beta \\  (t_2-t_1)\cdot \frac{a_2+b_2}{2} - (x^*_{2}-t_1) \beta \\ \vdots \\  (t_m-t_{m-1})\cdot \frac{a_m+b_m}{2}- (x^*_{m}-t_{m-1}) \beta  \end{pmatrix}\right]_k.$$
 Here $D$ is the diagonal matrix whose $i^{th}$ entry is $t_i-t_{i-1}.$
 The upshot is that the linearity of $f_k(x)$ implies the cutoffs $x_k^*$ are characterized by a system of $m$ linear equations.\hfill \qed
\end{example}

\medskip\noindent\textsc{Connection to the SBM}

\smallskip\noindent The justification for the introduction of the block equity graphon is that it is, in a certain formal sense, the natural continuous limit of the SBM and replicates its equilibrium structure. The construction is standard though the concentration argument demonstrating asymptotic equivalence requires technical finesse.

As a \emph{gedanken} experiment, consider a sequence of SBM networks engendered by a sequence of multi-type random digraphs $\sG'(n)$. We suppose that the number of types $m$ is fixed, as are the inter-block link probabilities $g_{k\ell}$, and that the number of firms of each type grows linearly with $n$. In the notation of the previous section, let $S_k = S_k(n)$ be the number of firms of type $k$ in the graph $\sG'(n)$. Then there exist positive real values $s_1$, \dots, $s_m$ with $s_1 + \dots + s_m = 1$ such that $\tfrac1n S_1(n)\to s_1$, \dots, $\tfrac1n S_m(n)\to s_m$ as $n\to\infty$.

The block equity graphon~\eqref{block-graphon}, parametrized by $s_1$, \dots, $s_m$, and the link probabilities $g_{k\ell}$, is the natural limit object. Starting with the block equity graphon we can reverse the procedure and construct a sequence of random equity networks by sampling from the continuum model. For given $n$, begin by identifying vertices $i$ of a random graph $\sG(n)$ on $n$ vertices with equispaced points in the unit interval: $i\in\bigl\{0,\frac{1}{n-1},\frac{2}{n-1},\dots,1\bigr\}$. For each ordered pair of vertices $(i, j)$ insert a directed edge from $i$ to $j$ with probability $G(i,j)$, where as before, the link probability function $G(x, y)$ takes value $g_{k\ell}$ when $x\in\BT_k$ and $y\in\BT_\ell$. Directed edges are drawn independently. The resulting multi-type random graph $\sG(n)$ has approximately $(n - 1)s_k$ vertices of type $k$ and engenders a stochastic block equity network of $n$ firms via the regular equity-sharing formulation~\eqref{stochastic-block-model}. 

The number of vertices of type $k$ in the sampled multi-type directed random graph $\sG(n)$ differs from $S_k(n)$ only in the addition or deletion of a bounded number of points near the boundary of the interval $\BT_k$. Hence, the graphs $\sG(n)$ and $\sG'(n)$ are stochastically asymptotically equivalent. In this sense the block equity graphon~\eqref{block-graphon} is the continuous limit of the stochastic block equity sequence. But more can be said: the graphon mimics the asymptotic equilibrium structure as well.



At an extremal equilibrium in the stochastic block equity sequence engendered by the sampled sequence of multi-type random digraphs $\sG(n)$, let $\overline{i}_k(n)$ be the maximal index of a firm of type $k$ that fails and let $\underline{i}_k(n)$ be the minimal index of a firm of type $k$ that does not fail.

We will need a notion of \emph{stability} to ensure that perturbation arguments from a Picard-style iteration converge. In the interests of providing an early preview of the type of result that is achievable we state it sans specifics: we reserve the definition of stability to the next section where it arises more naturally in the consideration of a fictitious dynamic in the context of the impact of cash infusions, and defer the bulk of the proof to the appendix.

%
%


\begin{thrm}\label{thm:maximaleq}
	Suppose an extremal equilibrium of~\eqref{eq:univalue} $v = v_x$ is stable. Then along the sampled stochastic block equity sequence, $\overline{i}_k(n)$ and $\underline{i}_k(n)$ converge almost surely to the cutoffs $x_k^*$ for all $k$.
\end{thrm}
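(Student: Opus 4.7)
The plan is to transport the cutoff structure of the graphon extremal equilibrium $v$ into the finite SBM via a two-step reduction: first pass from the random SBM to the deterministic block-regular clique using Theorem~\ref{thm:fixedkappa}, and then pass from the block-regular clique to the graphon by a Riemann-sum argument. To that end, for each $n$ I would construct a candidate putative solvency vector $\bkappa^\ast(n)\in\{0,1\}^n$ by setting $\kappa_i^\ast(n)=1$ precisely when firm $i$ lies in block $\BT_k$ with sampled position $x_i\geq x_k^\ast$, and $\kappa_i^\ast(n)=0$ otherwise. This $\bkappa^\ast(n)$ encodes the graphon cutoff pattern discretely, and should turn out to be the feasible putative solvency vector identifying the target extremal equilibrium in the SBM.

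The first step invokes Theorem~\ref{thm:fixedkappa} to conclude $\|\bV(\bkappa^\ast(n)) - \overline{\bV}(\bkappa^\ast(n))\|_\infty \to 0$ almost surely, reducing the problem to the deterministic putative solution $\overline{\bV}(\bkappa^\ast(n))$. Because $\overline{\bC}$ is block-constant, the fixed-point equation for $\overline{\bV}(\bkappa^\ast(n))$ collapses to an $m$-dimensional linear system in the per-block averages $\tfrac{1}{S_\ell}\sum_{j\in\BT_\ell}\overline{V}_j(\bkappa^\ast(n))$. A Riemann-sum approximation, using the piecewise continuity of $e_x$ together with the fact that $\bkappa^\ast(n)$ is constant on each of the two sub-intervals of $\BT_k$ created by $x_k^\ast$, would show that these block averages converge to $\tfrac{1}{s_\ell}\int_{\BT_\ell} v_y\,dy$. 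Back-substituting block averages into the linear equation for each firm would then yield $\overline{V}_i(\bkappa^\ast(n)) \to v_{x_i}$ uniformly on compact subsets of each $\BT_k$ disjoint from $\{x_k^\ast\}$.

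The remaining step is to verify feasibility of $\bkappa^\ast(n)$ for $\bC(n)$ and rule out spurious alternative equilibria. For firms at positive distance from any cutoff, feasibility follows from the uniform convergence above: Proposition~\ref{prop:block-graphon} forces the jump of $v$ at $x_k^\ast$ to have size exactly $\beta$ and hence $v_{x_i}$ to be bounded away from $v^\ast$ outside any fixed neighborhood of $x_k^\ast$, so $\bV(\bkappa^\ast(n))$ lies on the correct side of $v^\ast$ once the concentration error is smaller than the margin. The delicate case is the $o(n)$ firms within $\epsilon$ of some cutoff, and here the stability hypothesis is decisive: I would use it to argue that a Picard-style iteration updating each firm's solvency status from the current $\bV(\bkappa^\ast(n))$ converges back to the candidate $\bkappa^\ast(n)$ and remains close to $v$ under the small perturbation from $\overline{\bC}$ to $\bC$. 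It follows that the SBM extremal equilibrium has block cutoffs within $o(1)$ of $\{x_k^\ast\}$, whence $\overline{i}_k(n)$ and $\underline{i}_k(n)$ (normalized by $n$) converge almost surely to $x_k^\ast$.

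The principal obstacle is reconciling the $o(1)$ approximation error in Theorem~\ref{thm:fixedkappa} with the vanishing margin $|v_{x_i}-v^\ast|$ of the graphon valuation near the cutoff: without stability, a positive fraction of marginal firms could in principle be swept into a distinct extremal equilibrium of the SBM. Stability is doing double duty here, both ruling out such misclassification and selecting the same (maximal or minimal) equilibrium in the SBM as in the graphon; filling in this step rigorously, once the formal definition of stability is pinned down in the following section, will be the substance of the appendix proof.
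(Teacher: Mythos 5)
Your high-level roadmap matches the paper's: reduce the SBM to the block-regular clique via Theorem~\ref{thm:fixedkappa}, pass from the clique to the graphon by a Riemann-sum/block-average computation (this is the paper's Lemma~\ref{lem:value-properties}(iii)), encode the graphon cutoffs as a discrete putative solvency vector, and then use stability together with a Picard iteration to locate a nearby feasible equilibrium. So the architecture is right. But the argument as written has a genuine gap precisely at the place you flag as ``the principal obstacle,'' and it is not one that stability can paper over by itself.

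The problem is this: your candidate $\bkappa^\ast(n)$ uses the exact cutoffs $x_k^\ast$, and Proposition~\ref{prop:block-graphon} says that at a cutoff $v_{x_k^\ast}=v^\ast$ exactly (for the maximal equilibrium). So the solvency margin $|v_{x_i}-v^\ast|$ vanishes as $x_i\to x_k^\ast$, and the $o(1)$ concentration error from Theorem~\ref{thm:fixedkappa} can therefore push a positive measure of near-cutoff firms to the wrong side of $v^\ast$. A Picard iteration started from $\bV(\bkappa^\ast(n))$ will not ``converge back to $\bkappa^\ast(n)$''—indeed it generically won't, and nothing forces the resulting fixed point to sit within $o(1)$ of the graphon cutoffs without a further idea. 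What the paper actually does is to \emph{perturb the cutoffs inward} before discretizing: it picks, via Perron--Frobenius applied to a strictly positive matrix dominating the spillover matrix $B$, a direction $\mathbf{w}>0$ and a factor $\gamma<1$ with $B\mathbf{w}\leq\gamma\mathbf{w}$, and considers the cutoff vector $\bx^\ast+t\mathbf{w}$. The stability hypothesis then guarantees that the putative equilibrium at these perturbed cutoffs has a \emph{strict, uniform} margin $\epsilon>0$ above $v^\ast$ for every putatively solvent firm. It is only against this strict margin that the $o(1)$ concentration error can be absorbed, after which the Picard iteration is used to upgrade the near-feasible putative vector to an actual equilibrium on the finite network that dominates the perturbed cutoff pattern. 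Letting $t\downarrow 0$ then gives $\limsup_n\overline{i}_k(n)\leq x_k^\ast$. Without this Perron--Frobenius perturbation, the step you gesture at cannot be made to work.

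Two further omissions worth noting. First, the cutoffs $\underline{i}_k(n)$ are random, so to apply the concentration bound at the realized cutoffs one needs Theorem~\ref{thm:fixedkappa} to hold \emph{uniformly} over all candidate cutoff vectors; the paper handles this by a union bound over a polynomially-sized rational grid $\mathcal{Y}_n$, which your proposal does not mention. Second, the lower bound $\liminf_n\underline{i}_k(n)\geq x_k^\ast$ requires a genuinely different argument from the upper bound: one takes the realized finite equilibrium cutoffs, transfers them to the graphon, and invokes a Tarski-style monotonicity lemma (Lemma~\ref{lem:eq_construction}) to manufacture a graphon equilibrium dominating that cutoff pattern, contradicting maximality of $v$ if $\underline{i}_{k'}(n)$ were ever bounded below $x_{k'}^\ast$. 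Your proposal treats the two directions symmetrically, which obscures that the second direction needs this separate construction.
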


The theorem says that in large finite random networks, the set of solvent and insolvent firms at an extremal equilibrium are approximately determined by the cutoff types at the maximal equilibrium in the graphon from which they have been sampled.

We briefly describe the idea of the proof. The maximal equilibrium on the graphon is defined by a cutoff $x_k^*$ in each block, and we can define a putative equilibrium in a finite network by declaring firms solvent above and insolvent below these cutoffs. A key ingredient in the proof is Theorem~\ref{thm:fixedkappa}, which we use to show this putative equilibrium is likely to be ``close'' to being a feasible equilibrium on large finite networks. There are, however, small differences in values between this putative equilibrium and the graphon maximal equilibrium. The stability condition implies these differences do not lead to large spillovers, and we show there is indeed a feasible equilibrium near the putative one. This establishes a lower bound on the set of solvent firms at the maximal equilibrium on a large finite network. Similar arguments show that given a sequence of equilibria on growing finite networks, we can construct a nearby equilibrium on the graphon. This gives an upper bound on the set of solvent firms at the maximal equilibrium on a large finite network, and these lower and upper bounds together imply Theorem~\ref{thm:maximaleq}.

\section{Cash Infusions}\label{sec:infusion}

We use the graphon model to determine the optimal way to inject $K$ units of cash so as to maximize the fraction of firms that are solvent in a maximal equilibrium.  A \emph{cash infusion with budget $K$} is an integrable function $\iota(x)$ with $\int_0^1 \iota(x)\,dx= K$. Think of $\iota(x)\,dx$ as representing the amount of cash provided to firms in an infinitesimal interval $(x, x+dx)$ at the point $x$.

\medskip\noindent\textsc{The spillover matrix}

\smallskip\noindent The challenge is to quantify the spillovers generated by injecting capital into a particular firm. This is where the graphon model comes into its own. We use it to compute for each block $k$, the share of firms in block $k$ whose values drop below the solvency threshold if an \emph{infinitesimal} share of firms in block $k'$ fail. The ability to conduct marginal analyses of this kind are impossible in a discrete set up.

Mirroring our consideration of putative equilibria of stochastic block networks in~\eqref{dequ-putative}, for any given Boolean function $\kappa\colon[0,1]\to\{0,1\}$, consider the graphon fixed-point equation
\begin{equation*}
	v(\kappa)_x = e_x + \int_0^1 C(x,y) v(\kappa)_y\,dy - \beta\bigl(1 - \kappa(x)\bigr)
\end{equation*}
which, in the setting of the block equity graphon~\eqref{block-graphon}, simplifies to
\begin{equation}\label{putative-graphon-equilibrium}
	v(\kappa)_x = f_k(x) + \sum_{\ell} T_{k\ell} \int_{\BT_\ell} v(\kappa)_y\,dy
		- \beta\bigl(1 - \kappa(x)\bigr) \qquad (x\in\BT_k)
\end{equation}
for $1\leq k\leq m$. Borrowing from our language in the stochastic block setting, we call each such $\kappa$ a \emph{putative solvency function} and think of it as identifying a putative set of solvent firms in the continuum: firm $x$ is putatively solvent if $\kappa(x) = 1$ and putatively insolvent if $\kappa(x) = 0$. We naturally call any solution $v(\kappa) = v(\kappa)_x$ of~\eqref{putative-graphon-equilibrium} a \emph{putative equilibrium}. Following in the same vein as before, a putative solvency function $\kappa$ is \emph{feasible} if
\begin{equation*}
	v(\kappa)_x \begin{array}{ll} \geq v^\ast & \text{if $\kappa(x) = 1$,}\\
					< v^\ast & \text{if $\kappa(x) = 0$,} \end{array}
\end{equation*}
in which case $v(\kappa) = v(\kappa)_x$ is an equilibrium solution of~\eqref{eq:univalue}.

For obvious reasons we focus on putative solvency functions that correspond to cutoff equilibria. In a mild abuse of notation, for each vector $\bxi = (\xi_1, \dots, \xi_m)\in\BT_1\times\dots\times\BT_m$, write $\kappa(\,\cdot\, ; \bxi)$ for the indicator function parametrized by $\bxi$ which satisfies
\begin{equation*}
	\kappa(y; \bxi) = \begin{cases} 0 & \text{if $t_{k-1} < y < \xi_k$,}\\
						1 & \text{if $\xi_k\leq y\leq t_k$,} \end{cases}
\end{equation*}
as $k$ varies from $1$ through $m$. The associated putative equilibrium $v\bigl(\kappa(\,\cdot\,; \bxi)\bigr)$ solves the system of equations~\eqref{putative-graphon-equilibrium} with $\kappa = \kappa(\,\cdot\,; \bxi)$. If $v = v_x$ is a cutoff equilibrium valuation density with cutoffs $\bx^\ast = (x_1^\ast, \dots, x_m^\ast)$, then $\kappa(\,\cdot\,; \bx^\ast)$ is called a {\em feasible solvency function} and the associated putative equilibrium $v\bigl(\kappa(\,\cdot\,; \bx^\ast)\bigr)_x$ is feasible and equal to $v_x$. 

%

Suppose now that $\bx^\ast = (x_1^\ast, \dots, x_m^\ast)$ are the cutoffs in the maximal equilibrium of~\eqref{eq:univalue}.\footnote{Our definition of the spillover matrix below can be modified for the minimal equilibrium case. We instead decrease the cutoff $x_{\ell}^*$ by $t$ and define $B_{k\ell}$ to be the derivative at $t=0$ of the measure of firms below the cutoff $x_k^*$ with putative equilibrium value $v\bigl(\kappa(\,\cdot\,; \bx^\ast - t\onebb_\ell)\bigr)_x \geq v^\ast - \beta$. Theorem~\ref{t:spillovermatrix} will continue to hold.} Perturb one component, say the $\ell$th, very slightly, to form the vector $\bx^\ast + t\onebb_\ell := (x_1^\ast, \dots, x_{\ell-1}^\ast, x_\ell^\ast + t, x_{\ell+1}^\ast, \dots, x_m^\ast)$ where $t>0$ is tiny. Imagine for a moment that the vector $\bx^\ast + t\onebb_\ell$
corresponds to the cutoffs of a different equilibrium in which a slightly larger set of firms of type $\ell$ are insolvent. This has a spillover effect on firms of type $k$ through the cross-holdings that firms of type $k$ have with firms of type $\ell$. It is this spillover we need to quantify. The difficulty is that $\bx^\ast + t\onebb_\ell$ need not describe an equilibrium. Instead, we will focus on the associated putative equilibrium $v\bigl(\kappa(\,\cdot\,; \bx^\ast + t\onebb_\ell)\bigr).$
\Xomit{By increasing the value of the cutoff in the $\ell$th interval $\BT_\ell$, a slightly larger set of firms of type $\ell$ are labelled putatively insolvent. \textcolor{red}{Notation issue: cutoff is a word we introduced to describe a particular kind of equilibrium. In other words, an equilibrium induces a collection of cutoffs. Now, we are taking any collection of numbers, one for each block and interpreting them as cutoffs. Aren't we doing this: starting with cutoffs, perturbing, then, applying $\kappa$ and then looking for a putative equilibrium wrt $\kappa$?} This has a spillover effect on firms of type $k$ through the cross-holdings that firms of type $k$ have with firms of type $\ell$.
\textcolor{blue}{Let $\BI_k = \BI_k(t) = \BI_k(t;\bx^\ast)$ denote the set of $x$ in the interval $\BT_k$ for which $x\geq x_k^*$ but the corresponding value of the putative equilibrium satisfies $v\bigl(\kappa(\,\cdot\,; \bx^\ast + t\onebb_\ell)\bigr)_x < v^\ast$. In words, $\BI_k(t; \bx^\ast)$ is that subset of points in the interval $\BT_k$ which, while initially solvent, are driven into insolvency by the perturbation. This captures the notion of a spillover of a perturbation in valuations of firms of type $\ell$ onto firms of type $k$. }
}

Some firms of type $k$ that are {\em putatively} solvent before the perturbation may become insolvent after the perturbation. We introduce notation to track them. Write $\BI_k = \BI_k(t) = \BI_k(t;\bx^\ast)$ for the set of $x$ in the interval $\BT_k$ for which $x\geq x_k^*$ but the corresponding value of the putative equilibrium satisfies $v\bigl(\kappa(\,\cdot\,; \bx^\ast + t\onebb_\ell)\bigr)_x < v^\ast$. 

With this for preparation, for each $k$ and $\ell$, set
\begin{equation}\label{def:spillover}
	B_{k\ell} := \frac{d}{dt} 
		\lambda\bigl(\BI_k(t; \bx^\ast + t\onebb_\ell)\bigr)\bigg|_{t=0},
\end{equation}
where $\lambda$ stands for Lebesgue measure. Call the matrix $\bB = [B_{k\ell}]$ with entries $B_{k\ell} = B_{k\ell}(\bx^\ast)$ the \emph{spillover matrix}.

The entries of the spillover matrix measure the rate of change of (putative) solvency of firms of any given type occasioned by the failure of an infinitesimally small set of firms of any other given type. As foreshadowed in the preamble to Theorem~\ref{thm:maximaleq}, we need a stability condition to ensure that the positive spillovers from an injection of capital are limited.

\smallskip\noindent\textsc{Stability assumption:} \emph{The spectral radius of the spillover matrix $\bB$ is strictly less than one.}



\smallskip Recall from the construction of the block equity graphon~\eqref{block-graphon} that
\begin{equation*}
	T_{k\ell} = \frac{cg_{k\ell}}{g_{1\ell}s_1 + \dots + g_{m\ell}s_m}
\end{equation*}
represents the share of a firm of type $\ell$ that is held by a firm of type $k$. As in Example \ref{eg:PiecewiseLinear}, let $\bT = [T_{k\ell}]$ denote the matrix of cross-shares and $D$ be the diagonal matrix whose $k^{th}$ entry on the leading diagonal is $s_k$.

The following result gives an explicit expression for the spillover matrix, which is obtained by analyzing how firm failures cascade in the graphon. We include the proof here as it demonstrates the types of calculations driving several of our examples and results, but the reader can also safely skip to the characterizations of optimal cash infusions below.

\begin{thrm}\label{t:spillovermatrix}
    If $\bx^\ast = (x_k^\ast, 1\leq k\leq m)$ represents the cutoffs of a cutoff equilibrium, the corresponding entries, $B_{k k'} = B_{k k'}(\bx^\ast)$, of the associated spillover matrix are given by 
    $$B_{k k'} = \{\left(I-\bT D\right)^{-1} \bT\}_{kk'} \frac{\beta}{f'(x_k^*)}.$$
\end{thrm}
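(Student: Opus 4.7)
The plan is to linearize the putative-equilibrium valuations in $t$ and reduce the computation of $B_{k\ell}$ to a linear system on the $m$-dimensional vector of block averages. Fix the perturbed block index $\ell$. Let $v = v\bigl(\kappa(\cdot;\bx^*)\bigr)$ and $v^t = v\bigl(\kappa(\cdot;\bx^* + t\onebb_\ell)\bigr)$ denote the unperturbed and perturbed putative equilibria, and set $w^t_x := v^t_x - v_x$. Subtracting the two instances of~\eqref{putative-graphon-equilibrium}, the endowment terms $f_k(x)$ cancel and the two solvency indicators $\kappa(\cdot;\bx^*)$ and $\kappa(\cdot;\bx^*+t\onebb_\ell)$ agree everywhere except on the small interval $(x_\ell^*, x_\ell^*+t]$, yielding
\begin{equation*}
    w^t_x \;=\; \sum_{j=1}^m T_{kj}\int_{\BT_j} w^t_y\,dy \;-\; \beta\,\indic{x \in (x_\ell^*, x_\ell^*+t]} \qquad(x\in\BT_k).
\end{equation*}

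The right-hand side depends on $x$ only through its block index $k$ plus the source indicator, so $w^t_x$ is constant equal to some $W_k$ on each $\BT_k$, except on the perturbation interval, where it equals $W_\ell - \beta$. Consequently $\int_{\BT_j} w^t_y\,dy = s_j W_j - \beta t\,\indic{j=\ell}$, and the vector $\mathbf{W} = (W_1,\ldots,W_m)^\trans$ satisfies $\mathbf{W} = \bT D \mathbf{W} - \beta t\,\bT\onebb_\ell$. Since $\bT D$ is similar to $D\bT$, whose columns sum to the in-network exposure $c < 1$, the matrix $I - \bT D$ is invertible and $\mathbf{W} = -\beta t\,(I - \bT D)^{-1}\bT\onebb_\ell$. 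To convert $\mathbf{W}$ into $\lambda\bigl(\BI_k(t;\bx^*)\bigr)$, invoke Proposition~\ref{prop:block-graphon} so that $v_{x_k^*}=v^*$, and Taylor-expand $v_x = v^* + f_k'(x_k^*)(x - x_k^*) + o(x-x_k^*)$ near the cutoff. The condition $v^t_x < v^*$ for $x\in\BT_k$ with $x\geq x_k^*$ then becomes $x - x_k^* < -W_k/f_k'(x_k^*) + o(t) = \beta t\,[(I-\bT D)^{-1}\bT]_{k\ell}/f_k'(x_k^*) + o(t)$, and differentiating at $t = 0$ yields the stated formula.

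The principal obstacle is justifying the $o(t)$ remainders rigorously: one must show that $v^t_x = v_x + W_k + o(t)$ uniformly on $\BT_k$ (away from the perturbation interval), despite the source having measure shrinking with $t$. This can be handled by expanding $(I - \bT D)^{-1}$ as a Neumann series $\sum_{j\geq 0}(\bT D)^j$ and bounding tail contributions by the spectral bound $c<1$, but the estimates must be made uniform in $t$ using the $C^1$ smoothness of each $f_k$. A secondary subtlety is the diagonal case $k=\ell$: the perturbation interval $(x_\ell^*, x_\ell^*+t]$ lies inside $\BI_\ell$ by construction and already contributes measure $t$; the stated formula implicitly assumes the spillover quantity $\beta [(I-\bT D)^{-1}\bT]_{\ell\ell}/f_\ell'(x_\ell^*)$ dominates $1$, which is the generic regime under the paper's stability hypothesis. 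Otherwise $\lambda(\BI_\ell)$ would instead equal $t$ and the diagonal entry of $\bB$ would equal $1$, so the formula would need to be replaced by a maximum with $1$.
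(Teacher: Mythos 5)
Your argument is essentially the same as the paper's. You work with the difference $w^t_x = v^t_x - v_x$ and observe it is piecewise block-constant satisfying $\mathbf{W} = \bT D\mathbf{W} - \beta t\,\bT\onebb_\ell$; the paper works directly with the cross-holdings terms $A_k(t|j)$ and derives the same linear system $A(t|j) = \bT D{\bar e} + \bT D A(t|j) - \bT D\Delta$ with the same derivative $\partial A_k/\partial t = -\beta\{(I-\bT D)^{-1}\bT\}_{k,j}$. The two are algebraically equivalent, and your invertibility argument via $\bT D \sim D\bT$ with column sums $c<1$ matches the standard observation.

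Two remarks on the concerns you raise. First, the $o(t)$ worry about proving $v^t_x = v_x + W_k + o(t)$ is not a genuine obstacle: once the solvency indicator $\kappa(\cdot;\bx^*+t\onebb_\ell)$ is fixed, the putative-equilibrium equation~\eqref{putative-graphon-equilibrium} is \emph{linear} in $v$, so the identity $w^t_x = W_k$ (off the perturbation interval) is exact, with no remainder term at all. The only Taylor expansion that enters is the local inversion of $f_k$ near $x_k^*$, which is precisely what the paper handles with the inverse-function rule using $f_k\in C^1$; no uniform-in-$t$ Neumann-tail estimate is needed.

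Second, the diagonal-case subtlety you flag is a real one, but your conclusion about the stability hypothesis runs in the wrong direction. Under the paper's stability assumption $\rho(\bB) < 1$ with $\bB$ nonnegative, one necessarily has $B_{\ell\ell}<1$ for every $\ell$. So if the perturbation interval $(x_\ell^*,x_\ell^*+t]$ does count toward $\BI_\ell$ (as the definition~\eqref{def:spillover}, read literally, would suggest), then $\lambda(\BI_\ell) = t\max\bigl(1,\,\beta[(I-\bT D)^{-1}\bT]_{\ell\ell}/f_\ell'(x_\ell^*)\bigr)$ would force $B_{\ell\ell}\geq 1$, which is \emph{incompatible} with stability rather than the generic regime under it. The resolution is that the stability hypothesis and all downstream computations (Proposition~\ref{prop:bailout}, Theorem~\ref{t:bailout_linear}) use the formula-defined matrix $\{(I-\bT D)^{-1}\bT\}_{kk'}\,\beta/f_k'(x_k^*)$, which correctly isolates the \emph{spillover} effect (the value drop $|W_k|$ transmitted through cross-holdings) as distinct from the directly-declared failures on the perturbation interval. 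Stated this way, the formula is internally consistent with the rest of the paper; the mismatch is only with a literal reading of the set-theoretic definition~\eqref{def:spillover} in the $k=\ell$ case, a presentation issue rather than an error in the proof you are being asked to supply.
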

\begin{proof} 
\Xomit{
\textcolor{red}{First, we need to determine $v\bigl(\kappa(\,\cdot\,; \bx^\ast + t\onebb_\ell)\bigr)_x$. This will be characterized by the solution to equation \eqref{eq:univalue}. Recall, that the middle term on the right hand side of equation \eqref{eq:univalue}) is piecewise constant in each interval $\BT_k$. As in the proof of Proposition \ref{prop:block-graphon} let $A_k(t) = \sum_\ell T_{k\ell}\int_{\BT_\ell} v\bigl(\kappa(\,\cdot\,; \bx^\ast + t\onebb_\ell)\bigr)_y\,dy$ and let $\overline{e}_k=\frac{1}{s_k}\int_{\mathscr{C}_k} f_k(x)$ be the average endowment of firms in block $k$. Hence,
$$v\bigl(\kappa(\,\cdot\,; \bx^\ast + t\onebb_\ell)\bigr)_x= e_x + A_k(t)  \text{ when }x \geq x_k^* \text{ and } v\bigl(\kappa(\,\cdot\,; \bx^\ast + t\onebb_\ell)\bigr)_x= e_x + A_k(t) - \beta  \text{ and }x < x_k^*$$
where} \textcolor{blue}{should we be using $\mathscr{C}_k$?}
}
Let $\bx^\ast = (x_1^\ast, \dots, x_m^\ast)$ be the cutoffs in a cutoff equilibrium of~\eqref{eq:univalue}. Perturb one component, say the $j$th, very slightly by $t >0$, to form the vector 
$$\bx^\ast + t\onebb_j := (x_1^\ast, \dots, x_{j-1}^\ast, x_j^\ast + t, x_{j+1}^\ast, \dots, x_m^\ast).$$
We focus on the associated putative equilibrium $v\bigl(\kappa(\,\cdot\,; \bx^\ast + t\onebb_j)\bigr).$
Following the proof of Proposition \ref{prop:block-graphon} let $A_k(t|j) = \sum_\ell T_{k\ell}\int_{\BT_\ell} v\bigl(\kappa(\,\cdot\,; \bx^\ast + t\onebb_j)\bigr)_y\,dy$, then, $v_x = f_k(x) + A_k(t|j)$ for $x$ in the sub-interval $\BT_k$. Also, let $\overline{e}_k=\frac{1}{s_k}\int_{\BT_k} f_k(x)dx$ be the average endowment of firms in block $k$. Recall, $t_k:= \sum_{i=1}^k s_i$ is the partial sums of the block lengths with $t_0 := 0$ per standard convention. Then $\BT_k = [t_{k-1}, t_k).$

For $k \neq j $ and $x \in \BT_k $ we have 
$$v\bigl(\kappa(\,\cdot\,; \bx^\ast + t\onebb_j)\bigr)_x= e_x + A_k(t|j)  \text{ when }x \geq x_k^* \text{ and } v\bigl(\kappa(\,\cdot\,; \bx^\ast + t\onebb_j)\bigr)_x= e_x + A_k(t|j) - \beta  \text{ when }x < x_k^*.$$
For $k=j$ and $x \in \BT_j,$
$v\bigl(\kappa(\,\cdot\,; \bx^\ast + t\onebb_j)\bigr)_x= e_x + A_j(t|j)$ when $x \geq x^*_j + t$ and $v\bigl(\kappa(\,\cdot\,; \bx^\ast + t\onebb_j)\bigr)_x= e_x + A_j(t|j) - \beta$  when $x < x_j^* +t.$
Working as in Example~\ref{eg:PiecewiseLinear}, a recursive application of these identities shows that
\begin{align*}
 A_r(t|j)    &= \sum_{k=1}^m T_{rk} \int_{\BT_k} 
            \biggl[
                e_y + \sum_{\ell=1}^m T_{k \ell}
                \int_{\BT_\ell} v_z\,dz - \beta \onebb_{[t_{k-1}, x_k^\ast+ t\onebb_j)}(y)
            \biggr]\,dy\\
        &= \sum_{k=1}^m T_{rk} \int_{\BT_k}
            \bigl[
                e_y + A_k(t|j) - \beta \onebb_{[t_{k-1}, x_k^\ast+ t\onebb_j)}(y)
            \bigr]\,dy\\
        &= \sum_{k=1}^m T_{rk} 
            \bigl[ 
                \overline{e}_k s_k + A_k(t|j) s_k - \beta (x_k^\ast + t\onebb_j - t_{k-1})\\
                &= \sum_{k=1}^m T_{rk} s_k
            \bigl[ 
                \overline{e}_k  + A_k(t|j)  - \frac{\beta}{s_k} (x_k^\ast + t\onebb_j - t_{k-1})
            \bigr]
\end{align*}
\Xomit{\textcolor{orange}{Version 1:
Let $\Delta^1$ be the column vector whose $k^{th}$ component is $\overline{e}_k s_k$. Let $D$ be the diagonal matrix whose $k^{th}$ entry on the leading diagonal is $s_k$. Let $\Delta^2$ be the column vector whose $k^{th}$ component is $- s_k^{-1}\beta (x_k^\ast + t\onebb_j - b_{k-1})$. Using $A(t|j)$ to denote the vector whose $r^{th}$ component is $A_r(t|j)$ we can rewrite the expression in matrix-vector form:
$$A(t|j) = T \Delta^1 + T D A(t|j) + T \Delta^2$$
$$\Rightarrow A(t|j) = (I - TD)^{-1}T[\Delta^1 + \Delta^2]$$
\begin{equation}\label{xi}
    \begin{pmatrix} A_1(t|j) \\  \vdots \\ A_{k'}(t|j) \\ \vdots \\ A_m(t|j) \end{pmatrix} =\left(I-TD\right)^{-1}T \cdot \begin{pmatrix} \overline{e}_1s_1  - \beta (x_1^\ast + t\onebb_j - t_{0}) \\ \vdots \\ \overline{e}_{k'}s_{k'} - \beta (x_{k'}^\ast + t\onebb_j - t_{k'-1})\\ \vdots \\  \overline{e}_ms_m  - \beta (x_m^\ast + t\onebb_j - t_{m-1}) \end{pmatrix}.
\end{equation}
}
}

Let $\Delta$ denote the column vector whose $k^{th}$ row is $s_k^{-1}\beta (x_k^\ast + t\onebb_j - t_{k-1}).$ 
Then,
$$A(t|j) = \bT D{\bar e} + \bT DA(t|j) - \bT D\Delta$$
$$\Rightarrow A(t|j) = (I-\bT D)^{-1}\bT D({\bar e}- \Delta)$$
\begin{equation}\label{xi}
    \begin{pmatrix} A_1(t|j) \\  \vdots \\ A_{k'}(t|j) \\ \vdots \\ A_m(t|j) \end{pmatrix} =\left(I-\bT D\right)^{-1}\bT \cdot \begin{pmatrix} \overline{e}_1s_1  - \beta (x_1^\ast + t\onebb_j - t_{0}) \\ \vdots \\ \overline{e}_{k'}s_{k'} - \beta (x_{k'}^\ast + t\onebb_j - t_{k'-1})\\ \vdots \\  \overline{e}_ms_m  - \beta (x_m^\ast + t\onebb_j - t_{m-1}) \end{pmatrix}.
\end{equation}

We use equation \eqref{xi} to determine $\frac{\partial A_k(t|j)}{\partial t}$. To this end, denote the entry in the $k^{th}$ row and $r^{th}$ column of $\left(I-\bT D\right)^{-1}\bT$  by $\{\left(I-\bT D\right)^{-1}\bT\}_{k, r}.$ Then,
$$\frac{\partial A_k(t|j)}{\partial t} = -\beta \{\left(I-\bT D\right)^{-1}\bT\}_{k,j}.$$
If we perturb $x^*_{k'}$ by $t$ this reduces the value of all firms in block $k$ by $\{\left(I-\bT D\right)^{-1}\bT\}_{kk'}\beta t$. The additional firms that fail as a consequence are those whose values were between $v^*$ and $v^*+\{\left(I-\bT D\right)^{-1}\bT\}_{kk'}\beta t$ before the perturbation. Since $v_{x_{k^*}}=v^*$ and the relevant firms' endowments and values differ only by a constant, these are exactly the firms with endowments between  $f(x_k^*)$ and $f(x_k^*)+\{\left(I-\bT D\right)^{-1}\bT\}_{kk'}\beta t$. The measure of this set of firms is
$$f^{-1}(f(x_k^*)+\{\left(I-\bT D\right)^{-1}\bT\}_{kk'}\beta t)-x_k^*.$$
The entry $B_{kk'}$ is defined to be the derivative of this measure in $t$ evaluated at $t=0$. Applying the inverse function rule we determine this derivative to be
$$
\frac{d}{dt} \left.\left(f^{-1}(f(x_k^*)+\{\left(I-\bT D\right)^{-1}\bT\}_{kk'}\beta t)-x_k^*\right)\right|_{t=0}= \frac{1}{f'(x_k^*)}\cdot \{\left(I-\bT D\right)^{-1}\bT\}_{kk'}\beta
$$
as desired.
\end{proof}

\begin{example}\label{ex:oneblock}
Consider a single block with linear firm endowments given by $f(x) = ax+e^0$. Suppose the maximal equilibrium before the cash infusion is characterized by an interior cutoff $x^*$ but the budget, $K,$ is insufficient to make all firms solvent under an optimal infusion. 

To determine the spillover matrix, observe that the matrix $\bT$ in this case will be scalar with value $c$ and $D$ will just be the number 1. Hence, $\left(I-\bT D\right)^{-1}\bT = \frac{c}{1-c}$. It is also easy to see that $\frac{\beta}{f'(x_k^*)}= \frac{\beta}{a}$. Thus, the
matrix $B$  is the scalar $b=\frac{c}{1-c}\cdot \frac{\beta}{a}.$ Stability requires that $ b < 1$. 
\end{example}
\Xomit{
\textcolor{orange}{This is the measure of additional firms failing under the values with cutoffs $(x_k^*-\mathbf{1}_{k=k'} t)$ compared to the values with cutoffs $(x_k^*)$. Finally, to calculate the measure of the set of firms in Definition \ref{def:spillover}, we must exclude the firms in group $k'$ in the interval $[x_{k'}^*-t,x_{k'}^*]$. The derivative at $t=0$ of the measure of the remaining firms is
$$\{\left(I-TD\right)^{-1}TD-I\}_{kk'} \frac{\beta}{f'(x_k^*)}.\qedhere$$
\end{proof}
}
\textcolor{orange}{Old definition 15:
 Define the {\bf spillover matrix} $B = [B_{kk'}]$ to have entries $B_{kk'}$ equal to $$\frac{d}{dt} \left[\mu(\{x \in \mathscr{C}_k: (\mathbf{V}^{\infty}(\phi_{(x_k^*-\mathbf{1}_{k=k'} t)}))_x < v^* \text{ and } \phi_{(x_k^*-\mathbf{1}_{k=k'} t)}(x) = 1\}) \right]_{t=0}$$ if $x_k^*$ and $x_{k'}^*$ are interior and $0$ otherwise.
}
}
\Xomit{
\textcolor{red}{Version 3: Let ${\hat T}$ be the matrix whose $(r,k)^{th}$ entry is $T_{rk}s_k.$ Let $\Delta$ be the column vector whose $k^{th}$ row is $s_k^{-1}\beta (x_k^\ast + t\onebb_j - b_{k-1}).$ Then,
$$A(t|j) = {\hat T}{\bar e} + {\hat T}A(t|j) - {\hat T}\Delta$$
$$\Rightarrow A(t|j) = (I-{\hat T})^{-1}{\hat T}({\bar e}- \Delta)$$
}
\textcolor{red}{
We use equation \eqref{xi} to determine $\frac{\partial A_k(t|j)}{\partial t}$. To this end, denote the entry in the $k^{th}$ row and $r^{th}$ column of $\left(I-{\hat T}\right)^{-1}{\hat T}$  by $\{\left(I-{\hat T}\right)^{-1}{\hat T}\}_{k, r}.$ Then,
$$\frac{\partial A_k(t|j)}{\partial t} = -\beta \sum_{r=1}^m\{\left(I-{\hat T}\right)^{-1}{\hat T}\}_{k,r}.$$
If we reduce the value of all firms in group $k$ by $\{\left(I-{\hat T}\right)^{-1}{\hat T}\}_{kk'}\beta t$, the limit as $t\rightarrow 0$ of the measure of additional firms failing is $$\{\left(I-{\hat T}\right)^{-1}{\hat T}\}_{kk'} \frac{\beta}{f'(x_k^*)}.$$
This is the measure of additional firms failing under the values with cutoffs $(x_k^*-\mathbf{1}_{k=k'} t)$ compared to the values with cutoffs $(x_k^*)$. Finally, to calculate the measure of the set of firms in Definition \ref{def:spillover}, we must exclude the firms in group $k'$ in the interval $[x_{k'}^*-t,x_{k'}^*]$. The derivative at $t=0$ of the measure of the remaining firms is
$$\{\left(I-{\hat T}\right)^{-1}{\hat T}-I\}_{kk'} \frac{\beta}{f'(x_k^*)}.\qedhere$$\end{proof}
}
}

\Xomit{
\textcolor{blue}{
\begin{align*}
A_k(t) &= \sum_\ell T_{k\ell}\int_{\BT_\ell} v\bigl(\kappa(\,\cdot\,; \bx^\ast + t\onebb_\ell)\bigr)_y\,dy\\ 
& = \sum_{\ell \neq k} T_{k\ell}\int_{\BT_\ell} v\bigl(\kappa(\,\cdot\,; \bx^\ast + t\onebb_\ell)\bigr)_y\,dy + T_{kk}\int_{\BT_k} v\bigl(\kappa(\,\cdot\,; \bx^\ast + t\onebb_k)\bigr)_y\,dy \\
 A_j(t)    &= \sum_{k=1}^m T_{jk} \int_{\BT_k} 
            \biggl[
                e_y + \sum_{\ell=1}^m T_{k \ell}
                \int_{\BT_\ell} v_z\,dz - \beta 1_{[b_{k-1}, x_k^\ast+ t\onebb_\ell)}(y)
            \biggr]\,dy\\
        &= \sum_{k=1}^m T_{jk} \int_{\BT_k}
            \bigl[
                e_y + \sum_{\ell=1}^m T_{k \ell}A_k(t) - \sum_{\ell=1}^m T_{k \ell}\beta 1_{[b_{k-1}, x_k^\ast+ t\onebb_\ell)}(y)
            \bigr]\,dy\\
        &= \sum_{k=1}^m T_{jk} 
            \bigl[ 
                \overline{e}_k s_k + \sum_{\ell=1}^m T_{k \ell}A_k(t) s_k - \sum_{\ell=1}^m T_{k \ell}\beta (x_k^\ast + t\onebb_\ell - b_{k-1})
            \bigr]
\end{align*}
} 

\textcolor{red}{
\begin{equation}\label{xi}
    \begin{pmatrix} A_1(t) \\  \vdots \\ A_{k'}(t) \\ \vdots \\ A_m(t) \end{pmatrix} =\left(I-T\right)^{-1} \cdot \begin{pmatrix} \overline{e}_1 + x^*_{1}s_1 \beta \\ \vdots \\ \overline{e}_{k'} + (x^*_{k'}-t-\sum_{k=1}^{k'-1}s_k)s_{k'} \beta \\ \vdots \\  \overline{e}_m + (x^*_{m}-\sum_{k=1}^{m-1}s_k)s_m \beta  \end{pmatrix}.
\end{equation}
}
}
\Xomit{
\textcolor{green}{Old Stuff Below}
\\
\\
Let $$\overline{e}_k=\frac{1}{s_k}\int_{\mathscr{C}_k} f_k(x)$$ be the average endowment of firms in block $k$.

We can write
$$V^{\infty}(\phi_{(x_k^*)})_x = e_x + \xi_k  \text{ when }x \geq x_k^* \text{ and } V^{\infty}(\phi_{(x_k^*)})_x = e_x + \xi_k - \beta  \text{ when }x < x_k^*$$
where
$$\begin{pmatrix} \xi_1 \\ \xi_2 \\ \vdots \\ \xi_m \end{pmatrix} =\left(I-T\right)^{-1} \cdot \begin{pmatrix} s_1\overline{e}_1 - x^*_{1} \beta \\ s_2\overline{e}_2 - (x^*_{2}-s_1) \beta \\ \vdots \\  s_m\overline{e}_m - (x^*_{m}-\sum_{k=1}^{m-1}s_k) \beta  \end{pmatrix}.$$

So we have $$\frac{\partial \xi_k(t)}{\partial t} = (I-T)^{-1}_{kk'}\beta.$$
If we reduce the value of all firms in group $k$ by $(I-T)^{-1}_{kk'}t \beta$, the limit as $t\rightarrow 0$ of the measure of additional firms failing is $$(I-T)^{-1}_{kk'} \frac{\beta}{f'(x_k^*)}.$$

This is the measure of additional firms failing under the values with cutoffs $(x_k^*-\mathbf{1}_{k=k'} t)$ compared to the values with cutoffs $(x_k^*)$. Finally, to calculate the measure of the set of firms in Definition \ref{def:spillover}, we must exclude the firms in group $k'$ in the interval $[x_{k'}^*-t,x_{k'}^*]$. The derivative at $t=0$ of the measure of the remaining firms is
$$[(I-T)^{-1}-I]_{kk'} \frac{\beta}{f'(x_k^*)}=[T(I-T)^{-1}]_{kk'} \frac{\beta}{f'(x_k^*)}.\qedhere$$\end{proof}
}

\medskip\noindent\textsc{Optimal infusions}

\smallskip\noindent Given a cash infusion $\iota(x)$, let $v_x$ be the firm valuations at the maximal equilibrium with cash endowments $f_k(x)$ and $\widetilde{v}_x$ be the firm valuations at the maximal equilibrium with endowments ${f}_k(x)+\iota(x)$. Our goal is to find a cash infusion with budget $K$  that maximizes the sum  of firm valuations, $\int_x \widetilde{v}_x dx$.\footnote{Since the objective only depends on integrals of $\iota(x)$, we will characterize optimal interventions up to changes in values on measure zero sets.} This is equivalent to maximizing the fraction, $\int_x \kappa_x dx$, of solvent firms. While we focus on the maximal equilibrium our analysis extends to the minimal equilibrium as well. In this way one can bracket the magnitude of the impact of a cash infusion.

Given a cutoff equilibrium $\bx^\ast$ on the graphon, recall matrix $B$ where
$$B_{kk'} =\{\left(I-\bT D\right)^{-1}\bT\}_{kk'}\frac{\beta}{f'(x_k^*)}.$$
The entries of matrix $B$ are the instantaneous rates at which firms in block $k$ fail when an infinitesimal fraction of firms in block $k'$ fail. Finally, let $\mathbf{1}$ be the vector of ones and $E_k$ be the $k^{\text{th}}$ standard unit basis vector.

\begin{prop}\label{prop:bailout} Suppose the support of an optimal intervention $\iota(x)$ is interior in each block. In each block $k$ there exists $y_k \geq 0$ such that $\iota(\cdot)$ increases the endowments of all firms in an interval with  measure $y_k$ upto a constant $\widehat{e}_k$, but leaves the endowments of other firms in the block unchanged.
\end{prop}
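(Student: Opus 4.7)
The plan is a water-filling argument within each block, exploiting the fact that the block equity graphon makes the cross-holding contribution a common constant across each block.

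Let $\tilde v$ be the post-intervention maximal equilibrium associated with $\iota$, with cutoffs $\tilde x_k^\ast$ (a cutoff equilibrium by Proposition~\ref{prop:block-graphon} applied to the intervened endowments, modulo the interior-support assumption). Because $C(x,y)$ in~\eqref{block-graphon} depends on $y$ only through its block, the contribution
\[
\tilde A_k := \sum_{\ell} T_{k\ell}\int_{\BT_\ell} \tilde v_y\,dy
\]
is the same constant for every firm in block $k$. Write $\widehat e_k := v^\ast - \tilde A_k$. Then a firm $x\in\BT_k$ is solvent at $\tilde v$ if and only if $f_k(x)+\iota(x)\geq \widehat e_k$, which reduces the question to one about post-infusion endowments within each block.

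\emph{Step 1 (no firm is over-rescued).} I would first show that every $x$ in the support of $\iota$ satisfies $f_k(x)+\iota(x)=\widehat e_k$. Suppose instead that on a positive-measure set $E\subset\BT_k$ one has $\iota(x)>0$ and $f_k(x)+\iota(x)>\widehat e_k$ strictly. Uniformly lower $\iota$ on $E$ until each firm in $E$ is brought down exactly to $\widehat e_k$. Firms in $E$ remain solvent and no solvency indicator flips, so the perturbation propagates through~\eqref{eq:univalue} as a \emph{linear} perturbation of the fixed point; the stability assumption (spectral radius of $\bB$ strictly less than one) plus a Neumann-series bound of the sort used in the proof of Theorem~\ref{t:spillovermatrix} guarantees the induced cascade is contractive and pushes no other firm below $v^\ast$. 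The freed budget can be deployed to rescue additional firms just below some interior cutoff, strictly increasing the number of solvent firms and contradicting optimality.

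\emph{Step 2 (interval at the top of the insolvent range).} Since $f_k$ is strictly increasing, the set $\{x\in\BT_k:f_k(x)<\widehat e_k\}$ is an interval $[t_{k-1},\bar x_k]$ with $\bar x_k := f_k^{-1}(\widehat e_k)$. Firms with $x>\bar x_k$ are passively solvent under $\tilde A_k$, so by Step 1 they must receive $\iota(x)=0$. Within $[t_{k-1},\bar x_k]$, the cost of rescuing firm $x$ is $\widehat e_k-f_k(x)$, which is \emph{strictly decreasing} in $x$. A swap argument then forces the rescued set to be right-aligned: if $x_1<x_2\leq \bar x_k$ with $x_1$ rescued and $x_2$ not, exchange them; the count of rescued firms is preserved while a strictly positive budget $f_k(x_2)-f_k(x_1)$ is liberated, which (by interiority) can be spent on additional rescues, again contradicting optimality. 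Hence the rescued set is an interval $[\tilde x_k^\ast,\bar x_k]$.

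Combining the two steps, $\iota(x)=\widehat e_k-f_k(x)$ on the interval $[\tilde x_k^\ast,\bar x_k]$ of measure $y_k:=\bar x_k-\tilde x_k^\ast$, and $\iota(x)=0$ elsewhere in $\BT_k$, which is exactly the stated structure.

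\emph{Main obstacle.} The delicate part is the cascade control in Step 1. While reducing $\iota$ on strictly over-rescued firms keeps those firms solvent, the induced cross-holding reductions could in principle drag near-cutoff firms in other blocks below $v^\ast$; ruling this out requires a careful localization of the perturbation via the spillover machinery behind Theorem~\ref{t:spillovermatrix} together with the stability hypothesis. A secondary subtlety is that the exchange in Step 2 perturbs $\tilde A_k$ and hence $\widehat e_k$, but this is a second-order effect for infinitesimal swaps and does not overturn the first-order cost savings that drive the contradiction.
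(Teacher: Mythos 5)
Your two structural conclusions are correct (at an optimum every bailed-out firm is topped up to exactly $\widehat e_k = v^\ast - \tilde A_k$, and the bailed-out set in each block is an interval), but the perturbation machinery you propose for proving them has a genuine gap that the paper's own argument avoids entirely. In your Step 1 you lower $\iota$ on the over-rescued set $E$ and \emph{then} redeploy the freed budget; in between, $\int_{\BT_k}\iota$ drops, so the cross-holding contribution $\tilde A_\ell$ drops for every $\ell$, and marginal firms in any block may flip to insolvency. The claim ``no solvency indicator flips'' is therefore unwarranted as stated, and invoking the stability assumption only bounds the cascade size -- it does not rule it out or show that the net count of solvent firms rises. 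You flag this yourself as the main obstacle; it is real, and it is not patched in the proposal.

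The paper avoids the obstacle by never freeing budget and redeploying it across an intermediate state. In Lemma~\ref{lem:bailout} (and again in the interval argument), the reallocation is done \emph{simultaneously and within the same block}, so $\int_{\BT_k}\iota_0 = \int_{\BT_k}\iota$ for every $k$ and the per-block failure integral $\int_{\BT_k}(1-\kappa_y)\,dy$ weakly decreases. Evaluated at the old solvency profile $\kappa$, the putative values satisfy $w(\kappa)_x = \tilde v_x + \iota_0(x) - \iota(x) \geq v^\ast$ wherever $\kappa_x = 1$, so Lemma~\ref{lem:eq_construction} -- a pure Tarski/monotonicity statement with no stability hypothesis -- upgrades this to: the maximal equilibrium under $\iota_0$ keeps at least the old solvent set. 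Strict improvement then comes from spending the intrablock surplus on a positive measure of previously insolvent firms. No spillover matrix, no Neumann series, and no stability assumption are needed for this proposition. A secondary issue: you cite Proposition~\ref{prop:block-graphon} to assert the post-infusion equilibrium is a cutoff equilibrium, but that proposition requires the endowment density to be strictly increasing in each block, which $f_k + \iota$ need not be \emph{before} you have established the conclusion of the present proposition -- so that citation is circular. The paper's proof does not presume any cutoff structure for the post-infusion equilibrium; it only uses the putative-value formula \eqref{eq:valuesseriesexpansion} and Lemma~\ref{lem:eq_construction}.
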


The optimal policy uses global information about the network encoded in the matrix $B$ as well as local information, the endowments of each firm. The measure of firms directly rescued in each block is proportional to the Katz-Bonacich centrality of the block in the weighted network defined by the $m \times m$ matrix $B$, adjusted to account for the cost of rescuing firms in that block. 

Within each block, the optimal infusion increases the cash endowments of an interval of firms with endowments in $[\widehat{e}_k-\epsilon,\widehat{e}_k]$ to $\widehat{e}_k$. At the post-infusion maximal equilibrium, these firms all have value $v^*$. There is an additional interval of firms with endowments $e_x>\widehat{e}_k$ that would be insolvent without the infusion, but are solvent after the infusion. Intuitively, these correspond to firms that would have not failed directly but are exposed to contagion  if other firms are allowed to fail. A graphical depiction appears in Figure~\ref{f:infusion} of Example~\ref{ex:one_group}.

The next proposition explicitly characterizes the optimal intervention when the $f_k(\cdot)$ are linear. It will specify for each block, which firms should be bailed out and by how much. It will also tell us the measure of firms within a block that are {\em indirectly} bailed out because of positive spillovers.

\begin{thrm}\label{t:bailout_linear} Suppose the support of an optimal intervention $\iota(x)$ is interior in each block and each $f_k(x)$ is linear with slope $a_k$. Then, for each block $k$ there exists $\delta_k$ and $y_k$ such that the optimal intervention increases the endowments of firms in the intervals $[x_k^*-\delta_k-y_k,x_k^* - \delta_k]$ to $e_{x_k^*-\delta_k}$, where the $y_k$ are characterized by
\begin{equation}\label{eq:ratio}\frac{\beta/(a_ky_k) + 1}{\beta/ (a_{k'}y_{k'}) +1} = \frac{\mathbf{1}^{\top} (I-B)^{-1} E_{k'}}{\mathbf{1}^{\top} (I-B)^{-1} E_k
 } \end{equation}
 for all $k$ and $k'$ and the budget constraint
 $$\frac12 \sum_k a_ky_k^2 = K,$$
and the $\delta_k$  are given by $$ \delta_k = E_k^{\top} (I-{B})^{-1}B \begin{pmatrix}
    y_1 + \frac{a_1y_1^2}{2\beta}
    \\  \vdots \\ y_m + \frac{a_my_m^2}{2\beta}\end{pmatrix}.$$
\end{thrm}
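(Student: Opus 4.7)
The plan is to parametrize the intervention by the vector $y=(y_1,\dots,y_m)$ of measures of directly bailed-out firms per block and then optimize via Lagrangian. By Proposition~\ref{prop:bailout}, in each block $k$ the intervention raises endowments in an interval $[x_k^\ast-\delta_k-y_k,\,x_k^\ast-\delta_k]$ to a constant $\widehat{e}_k$; linearity of $f_k$ together with the requirement that endowments only be raised (not lowered) forces $\widehat{e}_k=e_{x_k^\ast-\delta_k}$ for some $\delta_k\ge 0$ to be determined. The per-block cost is
\[
    \int_{x_k^\ast-\delta_k-y_k}^{x_k^\ast-\delta_k}\bigl(e_{x_k^\ast-\delta_k}-e_x\bigr)\,dx=\frac{a_k y_k^2}{2},
\]
which gives the budget constraint $\tfrac{1}{2}\sum_k a_k y_k^2=K$.

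The next step is to express $\delta_k$ in terms of $y$ by analyzing the post-bailout maximal equilibrium. Proposition~\ref{prop:block-graphon} places the marginal solvent firm in block $k$ at $x_k^\ast-\delta_k-y_k$, with endowment $\widehat{e}_k$ and value $v^\ast$. Subtracting this from the pre-bailout cutoff equation and writing $\Delta_\ell:=\int_{\BT_\ell}(\widetilde v_y-v_y)\,dy$ yields $\sum_\ell T_{k\ell}\Delta_\ell=a_k\delta_k$. Partitioning $\BT_\ell$ into (i) still-insolvent firms, (ii) the bailout interval of measure $y_\ell$, (iii) the indirectly saved sub-interval $[x_\ell^\ast-\delta_\ell,x_\ell^\ast]$, and (iv) the already-solvent firms, and observing that every firm in block $\ell$ picks up the common cross-holdings shift $\widetilde S_\ell-S_\ell=a_\ell\delta_\ell$, a direct integration produces
\[
    \Delta_\ell=\beta(y_\ell+\delta_\ell)+\frac{a_\ell y_\ell^2}{2}+a_\ell s_\ell\delta_\ell.
\]
Letting $A=\operatorname{diag}(a_1,\dots,a_m)$ and noting that $A$ commutes with $D$ (both diagonal), the system rearranges to $A(I-B)\delta=(I-\bT D)^{-1}\bT\bigl(\beta y+A y^2/2\bigr)$, where $A y^2$ denotes the vector with components $a_\ell y_\ell^2$ and $B=\beta A^{-1}(I-\bT D)^{-1}\bT$ is the spillover matrix from Theorem~\ref{t:spillovermatrix}. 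Simplifying gives $(I-B)\delta=B\bigl(y+A y^2/(2\beta)\bigr)$, which inverts to the formula for $\delta_k$ in the statement.

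The final step is the optimization. The objective is the total measure of newly solvent firms, $\mathbf{1}^{\top}(y+\delta)$. Using the identity $I+(I-B)^{-1}B=(I-B)^{-1}$ and setting $d_k:=\mathbf{1}^{\top}(I-B)^{-1}E_k$, so that $\mathbf{1}^{\top}(I-B)^{-1}BE_k=d_k-1$, the first-order condition $\partial_{y_k}[\mathbf{1}^{\top}(y+\delta)]=\lambda a_k y_k$ collapses, after a few lines of algebra, to
\[
    d_k\Bigl(\frac{\beta}{a_k y_k}+1\Bigr)=\lambda\beta+1,
\]
which is the same constant for every $k$. Taking the ratio of the $k$-th and $k'$-th equations yields~\eqref{eq:ratio}, and the budget constraint then fixes the common multiplier $\lambda$; the interior-support hypothesis guarantees the FOCs all hold as equalities so no boundary cases arise.

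The main obstacle is the $\Delta_\ell$ calculation. The tempting shortcut is to count only the endowment injection $\tfrac{1}{2}a_\ell y_\ell^2+\beta y_\ell$ in block $\ell$, but every firm in the block also feels the common cross-holdings shift $a_\ell\delta_\ell$, contributing an extra $a_\ell s_\ell\delta_\ell$ to $\Delta_\ell$. This extra term is precisely what upgrades the naive $\delta=B(y+A y^2/(2\beta))$ to $\delta=(I-B)^{-1}B(y+A y^2/(2\beta))$ and produces the Katz--Bonacich factor appearing in~\eqref{eq:ratio}. Once $\Delta_\ell$ is in hand, the remainder is algebraic.
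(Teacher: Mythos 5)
Your proof is correct, and it essentially reproduces the paper's argument but packaged a bit differently. The paper runs a marginal-perturbation argument: increase the cash to block $k$ by $t$, count the marginal direct and iterated indirect spillovers, and equate the marginal benefit per unit cash across blocks; the $\delta_k$ formula is then obtained by the same geometric-series spillover logic applied to the finite injection. You instead solve for $\delta$ in closed form as a function of $y$ (via the change-in-cross-holdings equation $\sum_\ell T_{k\ell}\Delta_\ell = a_k\delta_k$ and the explicit integration of $\Delta_\ell$), and then run a direct Lagrangian on $\mathbf{1}^{\top}(y+\delta)$. Both arrive at the same first-order condition and the same $\delta$ formula; the equivalence can be seen by noting that your per-unit-cash marginal benefit $\bigl(1 + m_k(1 + a_ky_k/\beta)\bigr)/(a_ky_k)$, with $m_k = \mathbf{1}^{\top}(I-B)^{-1}BE_k = d_k-1$, rearranges to the paper's $d_k/(a_ky_k) + (d_k-1)/\beta$.

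One point where your write-up is actually tighter than the paper's: the paper, as literally worded, equalizes only the \emph{indirect} marginal benefit $\mathbf{1}^{\top}(I-B)^{-1}BE_k\bigl(1/(a_ky_k)+1/\beta\bigr)$ across blocks, which on its face would produce $\mathbf{1}^{\top}(I-B)^{-1}BE_{k'}/\mathbf{1}^{\top}(I-B)^{-1}BE_k$ on the right side of \eqref{eq:ratio}, i.e.\ with an extra factor of $B$. The formula in the statement requires also accounting for the direct marginal benefit $1/(a_ky_k)$; your $\mathbf{1}^{\top}y$ term builds this in automatically, and then the common constant $1/\beta$ drops out of the equalization so that $d_k(\beta/(a_ky_k)+1)$ is what is constant across $k$. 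So your Lagrangian route reproduces the intended result and, in my reading, is the cleaner justification of equation~\eqref{eq:ratio}. The derivations of the budget constraint and of $\delta_k=(I-B)^{-1}B\bigl(y+Ay^2/(2\beta)\bigr)$ agree with the paper and check out.

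One small remark: you should make explicit (as the paper implicitly does) why the interval of bailed-out firms sits flush against $x_k^\ast-\delta_k$ with top endowment exactly $e_{x_k^\ast-\delta_k}$; this follows from Lemma~\ref{lem:bailout} plus the observation that if $\widehat e_k < e_{r_k}$ at the right endpoint $r_k$ of the interval one could save cash, while $\widehat e_k \le e_{x_k^\ast}$ because the intervention raises $\widetilde A_k$ above $A_k$. You gesture at this (``linearity together with the requirement that endowments only be raised'') but it is worth a sentence.
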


The assumption that $f_k(\cdot)$ is linear can be interpreted as the endowments in each block being uniformly distributed. Nevertheless, Theorem~\ref{t:bailout_linear} can be extended to accommodate $f_k(\cdot)$ that are non-linear but approximable by a piecewise linear function with a bounded number of segments. Within each block split the interval of endowments into several pieces consisting of firms with endowments in the same linear segment and treat each of these pieces as a block.

Theorem~\ref{t:bailout_linear} characterizes the $y_k$ implicitly. The next two examples provide an illustration of how the optimal intervention can be computed from this characterization.
 
\begin{example}\label{ex:one_group} We continue Example \ref{ex:oneblock} by computing the optimal cash infusion. Recall that the
matrix $B$  is the scalar $b=\frac{c}{1-c}\cdot \frac{\beta}{a}.$ 
 
 If $y$ is the measure of firms in the block that receive an injection, then, Proposition~\ref{prop:bailout} tells us that for some $\delta >0$ the optimal intervention increases the endowments of firms in an interval $[x^*-\delta-y, x^* - \delta]$ to $e_{x^*-\delta}$. Therefore, each firm $x \in [x^*-\delta-y, x^* - \delta)$ receives an infusion of $a(x^*-\delta -x),$ which means the total infusion is $a\int_{x^*-\delta -y}^{x^*-\delta}(x^*-\delta -x)dx = 0.5ay^2.$ If $K$ is the total budget, and $2K/a \leq 1$, then, $y = \sqrt{2K/a}$, otherwise $y=1.$ Notice, the measure of firms that receive an injection is independent of either the exposure, $c$, or the distress cost $\beta$. In the remainder of this example we suppose $2K/a \leq 1$. 
 \Xomit{The infusion provides $a(x-(x^*-y-\delta))$ to each firm in this interval, so the budget constraint is
$K = ay^2/2$ and the measure of firms receiving a cash infusion is $y = \sqrt{2K/a}$.}

Consider Figure~\ref{f:infusion}. The top two panels (Figures ~\ref{f:infusion}(a) and ~\ref{f:infusion}(b)) show firm valuations (blue solid lines) and endowments (red dashed line) before and after the infusion, respectively. The support of the infusion is an interval of firms $[x^*-\delta-y,x^*-\delta]$ {\em strictly} to the left of the cutoff $x^*$. The bottom panel (Figure~\ref{f:infusion}(c)) depicts which firms receive an infusion. Each of them receives sufficient cash to achieve value $v^*$ after the intervention.
\begin{figure}
\subfigure{
  \includegraphics[width=.5\textwidth]{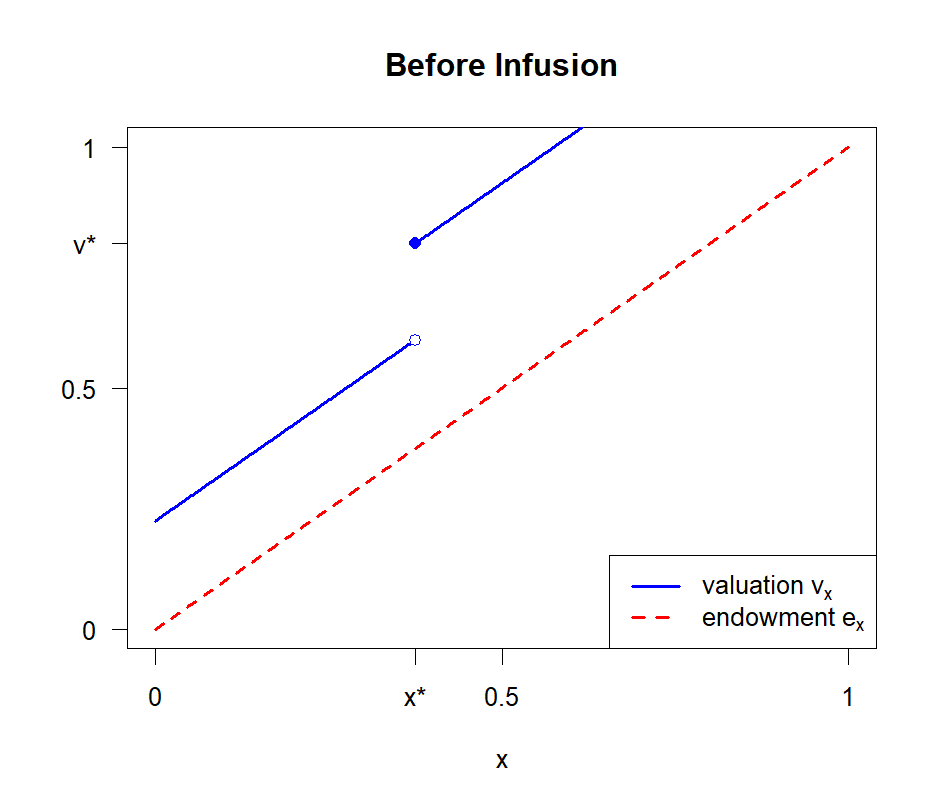}
  \label{f:infusion-a}
}
\subfigure{
  \includegraphics[width=.5\textwidth]{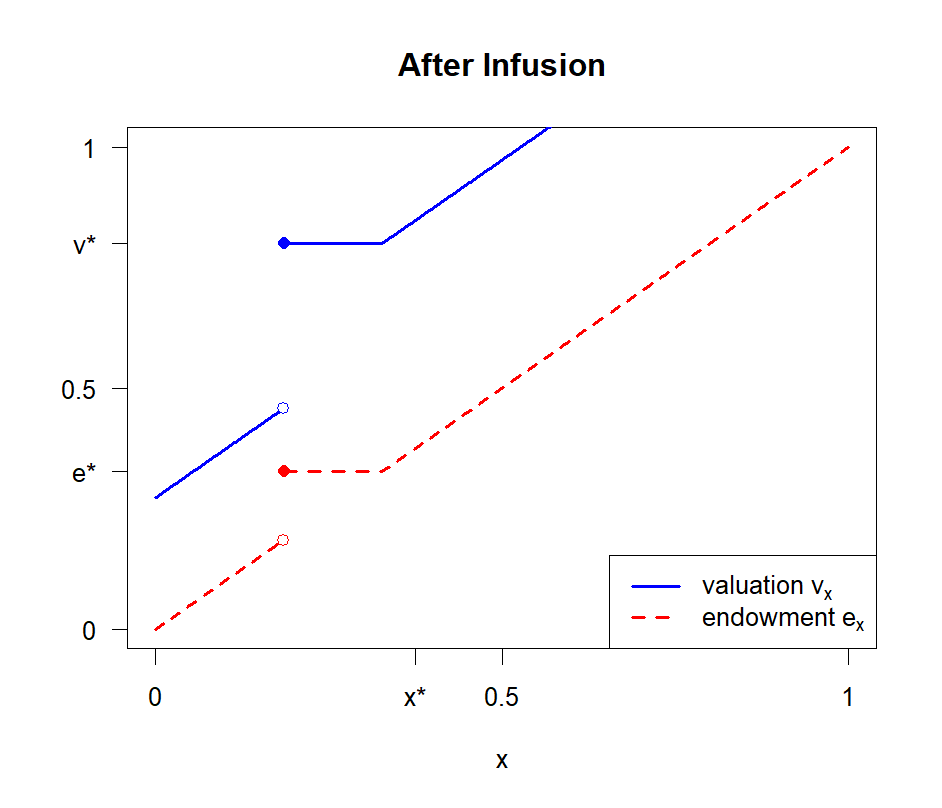}
  \label{f:infusion-b}
}
\center\subfigure{
  \includegraphics[width=.5\textwidth]{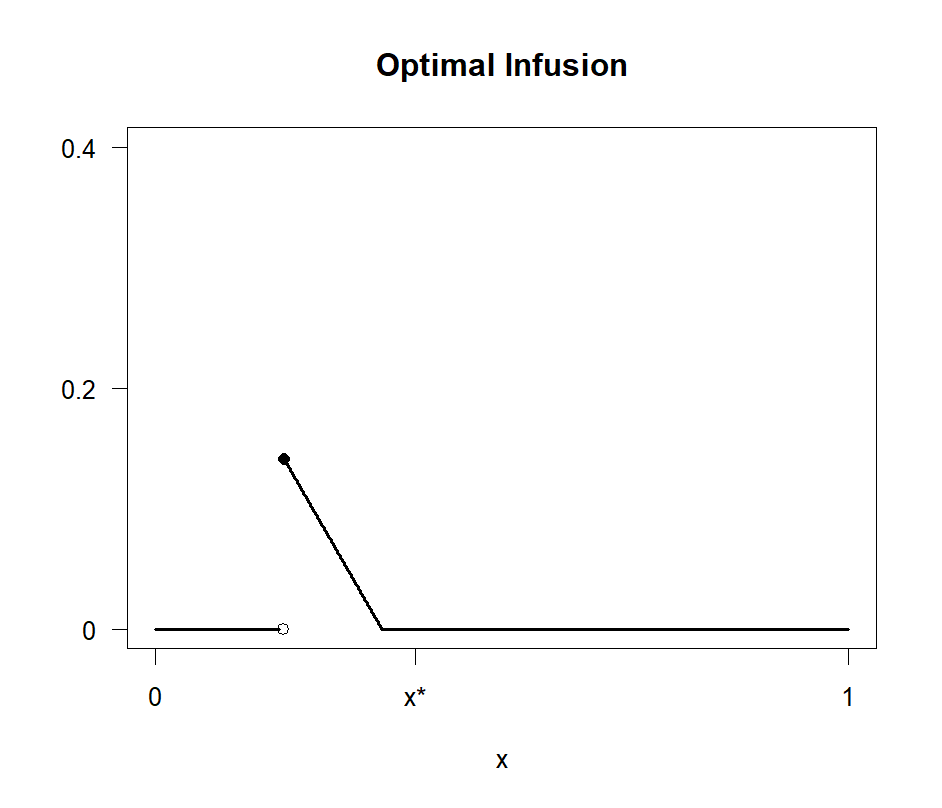}
  \label{f:infusion-c}
}
\caption{Optimal cash infusions with linear endowments and a single block.}
\label{f:infusion}
\end{figure}

\Xomit{Figure~\ref{f:infusion}(a) shows the values (solid blue line) and endowments (dashed red line) before the cash infusion. Figure~\ref{f:infusion}(b) shows the values (solid blue line) and endowments $e_x+\iota(x)$ (dashed red line) after the cash infusion. Figure~\ref{f:infusion}(c) shows the optimal infusion.}

An additional interval of firms $[x^*-\delta,x^*]$ is not directly bailed out by the cash infusion, but are nevertheless rescued due to spillovers. One interpretation is that this interval contains firms that would fail due to a domino effect in the absence of an intervention. We can, in this example, compute the measure of such firms. 

Directly moving $\epsilon$ firms from insolvency to solvency can lead to at most $\frac{\epsilon}{1-b}$ insolvent firms becoming solvent. Therefore, $\delta$, the measure of firms {\em indirectly} rescued by the optimal cash infusion is 
$$\delta = (1-b)^{-1}b(y+ 0.5ay^2/\beta)= \frac{b}{1-b}\left(\sqrt{\frac{2K}{a}} + \frac{K}{\beta}\right).$$
As one might expect, as $c$, the exposure of each firm, increases, the measure of firms that are indirectly rescued increases. This is because $b$ and $(1-b)^{-1}$ increase with $c$.

As $a$, the slope of the endowment function, increases, the measure of firms that receive an injection declines. Furthermore, because both $\frac{b}{1-b}$ and $\sqrt{2K/a}$ decline with $a$, the measure of firms that are indirectly rescued, also shrinks. In both cases it is because each firm needs less cash to get over the solvency threshold.

The number of indirectly rescued firms increases with $\beta$. This is because $\frac{b}{1-b}$ increases with $\beta$. Increasing $K$, the amount of cash available to inject into firms has three effects. First, the measure of firms that receive an injection increases. Second, the interval of firms that receive an injection shifts to the left, because the cutoff type $x^*$ becomes smaller. Thus, an insolvent firm that was close to the solvency threshold that received an injection when $K$  was small, will not necessarily receive it when $K$ increases. Third, the measure of firms indirectly rescued increases with $K$, but this indirect effect suffers diminishing returns. 
\end{example}


\Xomit{From the condition $\widetilde{v}_{x^*-\delta} = v^*$, we find that the length of the interval $y$ satisfies
\begin{equation}\label{eq:oneblock}\frac{c}{1-c} \cdot \left(K + \beta y\right)= \frac{ab}{1-b}y.\end{equation}
The left-hand side is the change $\widetilde{v}_x - v_x$ in value of firms that do not receive cash directly from the optimal cash infusion. The cash infusion is $K$ and the total decrease in failure costs is $\beta y$. The right-hand side is $v^*-{v}_{x^*-\delta}$ which is the gap between the threshold and the value of firm $x^*-\delta$ before the cash infusion, because $\delta = \frac{b}{1-b}y$ and the difference in values between firm $x^*$ and firm $x^*-\delta$ is $a\delta$.

Substituting $b=\frac{c}{1-c}\cdot\beta/a$ in the numerator of the right-hand side of equation \eqref{eq:oneblock} gives
$$\frac{c}{1-c} \cdot \left(K + \beta y\right)= \frac{c}{1-c} \cdot \frac{ \beta y }{1-b}$$
and therefore
$$y= \left(\frac{1}{b}-1\right) \cdot \frac{K}{\beta}= 
\left(\frac{(1-c)a}{c \beta}-1\right)\frac{K}{\beta}.$$}

\Xomit{As $\beta$, the distress cost rises, the measure of firms that receive an injection goes down. In other words, capital injections are targeted to a smaller number of firms.}

\Xomit{
\begin{enumerate}
\item We can vary $c$, $\beta$ and $a$ as long as $\frac{c \beta}{(1-c)a}<1.$
\item As $c$ increases the measure of firms that receive an injection goes down. As $c$ goes up, we have more spillovers which reduces the number of firms that need an injection. WARNING: we are working with book values and not market values, so this conclusion is spurious. This is because failure thresholds also change with a change in $c$.
\end{enumerate}
}

\begin{example}
We consider a simple core-periphery model where the core will be block 1 and the periphery will be block 2 but the cross-holdings satisfy $g_{22}=0$. Thus, firms in the periphery {\em only} own shares in core firms. Firm endowments in the core are described by $f_1(x) = a_1 x+e^0_1$ and by  $f_2(x) = a_2 x+e^0_2$ in the periphery. For economy of exposition only, assume that the core and periphery each contain half of the firms.

Before the infusion, the maximal equilibrium is characterized by two interior cutoffs,  $x_1^*$ for block 1  and $x_2^*$ for block 2.  Proposition~\ref{prop:bailout} tells us that for some $\delta_1,\delta_2>0$, the optimal intervention increases the endowments of firms in intervals $[x_k^*-\delta_k-y_k, x_k^* - \delta_k]$ to $e_{x_k^*-\delta_k}$ for $k \in \{1,2\}.$ We describe the intervention when $K$ is small enough, so that $x_k^*-\delta_k-y_k$ is interior in each block. 
The budget constraint is
$$\frac12 a_1y_1^2 + \frac12 a_2y_2^2 = K.$$

The matrix of equity cross-holdings is given by $$T = 2c  \begin{pmatrix} \frac{g_{11}}{g_{11}+g_{21}} & 1 \\ \frac{g_{21}}{g_{11}+g_{21}} & 0  \end{pmatrix},$$
while $D=\frac12 I$. By Theorem \ref{t:spillovermatrix}, the spillover matrix is
\begin{equation}\label{core-per}B = 
\frac{c \beta}{(1-c) (c g_{21} + g_{11} + g_{21})}
\begin{pmatrix}
(c g_{21} + g_{11})/a_1 & (g_{11} + g_{21})/a_1  \\
g_{21}/a_2 & c g_{21}/a_2
\end{pmatrix}.
\end{equation}
\Xomit{
\textcolor{red}{
\text{Spectral Radius} $$= \max \left( \left| \frac{-\left(\frac{(c g_{21} + g_{11})}{a_1} + \frac{(c g_{21})}{a_2}\right) + \sqrt{\left(\frac{(c g_{21} + g_{11})}{a_1} + \frac{(c g_{21})}{a_2}\right)^2 - 4 \left(\frac{(c g_{21} + g_{11})(c g_{21})}{a_1a_2} - \frac{g_{11}g_{21}}{a_1a_2}\right)}}{2} \right|, \right.
$$
$$\left| \frac{-\left(\frac{(c g_{21} + g_{11})}{a_1} + \frac{(c g_{21})}{a_2}\right) - \sqrt{\left(\frac{(c g_{21} + g_{11})}{a_1} + \frac{(c g_{21})}{a_2}\right)^2 - 4 \left(\frac{(c g_{21} + g_{11})(c g_{21})}{a_1a_2} - \frac{g_{11}g_{21}}{a_1a_2}\right)}}{2} \right|$$
The larger is clearly,
$$ \frac{\left(\frac{(c g_{21} + g_{11})}{a_1} + \frac{(c g_{21})}{a_2}\right) + \sqrt{\left(\frac{(c g_{21} + g_{11})}{a_1} + \frac{(c g_{21})}{a_2}\right)^2 - 4 \left(\frac{(c g_{21} + g_{11})(c g_{21})}{a_1a_2} - \frac{g_{11}g_{21}}{a_1a_2}\right)}}{2}
$$
$$= \frac{\left(\frac{(c g_{21} + g_{11})}{a_1} + \frac{(c g_{21})}{a_2}\right) + \sqrt{\left(\frac{(c g_{21} + g_{11})}{a_1} - \frac{(c g_{21})}{a_2}\right)^2  - \frac{g_{11}g_{21}}{a_1a_2}}}{2}
$$}
}
Applying equation \eqref{eq:ratio}, \begin{equation}\label{eq:indifference_core_periphery}
\frac{\beta /(a_1y_1)+1}{\beta/ (a_2 y_2) + 1} = \frac{a_1a_2(c g_{21} + g_{11}+ g_{21}) + a_2c \beta g_{21}}{a_1a_2(c g_{21} + g_{11}+ g_{21})+  a_1c \beta g_{21}}.\end{equation}
 When $a_1=a_2$, for example, because endowments are drawn from the same distribution in the core and periphery, the optimal bailout injects cash to the same measure of firms in each of the blocks. In general, the optimal bailout injects cash to more firms in a block where the endowments are less dispersed because there are more firms close to the failure threshold. But that difference diminishes when $g_{21}$ is larger and $g_{11}$ is smaller, i.e., when the peripheral firms are more exposed to the core firms.

The budget constraint describes an ellipse in the $(y_1, y_2)$ space. Equation \eqref{eq:indifference_core_periphery} describes a hyperbola and it is straightforward to show that there is a single intersection in the non-negative orthant.
 
\Xomit{The budget constraint is of the form 
$$\frac{a}{x}+a= \frac{b}{y}+b\,\, \Rightarrow y= \frac{bx}{a-(b-a)x}.$$
This makes it a hyperbola.
The budget constraint describes an ellipse in the $(y_1, y_2)$ space. When $a_1=a_2$, equation \eqref{eq:ratio} becomes a straight line with positive slope and the budget constraint is a circle, so there is a unique solution in non-negative orthant. In general, they could intersect at upto 4 points. Possibly two of them could be in the non-negative orthant. However, the derivative of $y$ wrt $x$ is $\frac{ba}{[a-(b-a)x]^2}$ which is positive. So, only one intersection in non-negative orthant.}
\Xomit{Recall the budget constraint, $0.5a_1y_1^2+ 0.5a_2y_2^2 = K.$
This along with the relationship between $y_1$ and $y_2$ in equation \eqref{eq:indifference_core_periphery}  pin down the sizes of the intervals $y_1$ and $y_2$.
}
The measure $\delta_k$ of firms indirectly rescued in each block is $$ \delta_k = E_k^{\top} (I-{B})^{-1}B \begin{pmatrix}
    y_1 + \frac{a_1y_1^2}{2\beta}
    \\ y_2 + \frac{a_2y_2^2}{2\beta}
\end{pmatrix}.$$

In the special case $a_1=a_2=a$, so that $y_1 = y_2$ as well, we have
$$\frac{\delta_1}{\delta_2} =1 + \frac{2a g_{11}}{(a(1+c)+\beta c)g_{21} }>1.$$
Therefore, while the number of firms directly bailed out is the same in each block in this case, more firms are indirectly snatched from insolvency in the core than in the periphery. This is because firms in the core have equity holdings in both groups rather than only the core, so they benefit more from spillovers from the bailout. The gap is larger when $g_{11}$ is larger relative to $g_{21}$ as then the periphery's holdings are smaller.

\end{example}

\medskip\noindent\textsc{Optimal Cash Infusions in Random Networks}

\smallskip\noindent

In this section we discuss how to extend the optimal cash infusion in the graphon setting to a finite instance sampled from the graphon. We show that given a cash infusion in the graphon setting, there exists a cash infusion in the corresponding finite instance keeping at least the same proportion of firms solvent for a slightly higher budget.

As before, we consider a stochastic block equity graphon $G(x,y)$ and a sequence of $\mathscr{G}(n)$ sampled from $G(x,y)$. A cash infusion $\mathbf{\iota}^{(n)}$ for a finite random network specifies the amount of cash $\iota_i$ given to each firm $i \in \{0,\frac{1}{n-1},\hdots,1\}.$ Given a cash infusion $\mathbf{\iota}^{(n)}$, we write $\widetilde{V}$ for the values of firms at the maximal equilibrium after the cash infusion.

To maintain consistency with the continuum model, suppose the budget for the finite random network cash infusion is $(K+\epsilon)n$ for some small $\epsilon >0$. Thus, the average budget per firm is $K + \epsilon$. In other words, we increase the budget per firm by an arbitrarily small amount when we switch from the graphon to the sampled finite network. We now show that given the optimal intervention on the continuum, we can find an intervention keeping at least the same fraction of firms solvent on large random networks using a slightly higher budget. To formalize this on the graphon, recall we use $\lambda$ for the Lebesgue measure.

\begin{prop}\label{prop:finite_bailout}
Let $\epsilon>0$ and consider the optimal cash infusion $\iota(x)$ with budget $K$ for the block equity graphon. There exists a sequence of cash infusions $\iota^{(n)}$ with budgets $K+\epsilon$ such that $$\liminf_n \frac{1}{n}|\{i: \widetilde{V}_i \geq v^*\}|  \geq \lambda(\{x \in (0,1]:\widetilde{v}_x \geq v^* \}).$$
\end{prop}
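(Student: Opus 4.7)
The plan is to discretize $\iota$ on the sampled network, spend the additional $n\epsilon$ budget as a uniform safety margin that absorbs stochastic fluctuations, and then invoke Theorem~\ref{thm:maximaleq} on a suitably modified graphon problem. First, for each vertex $i\in\{0,1/(n-1),\dots,1\}$ of $\mathscr{G}(n)$, I would set $\iota^{(n)}_i := \iota(i) + \epsilon/2$. Since $\iota$ is integrable and may without loss be assumed bounded, a Riemann-sum approximation gives $\sum_i \iota(i) = nK + o(n)$, so the total infusion budget is $nK + n\epsilon/2 + o(n) \leq n(K+\epsilon)$ for all $n$ sufficiently large.

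Next, consider the graphon with uniformly augmented endowments $e_x + \iota(x) + \epsilon/2$ and let $\widehat{v}$ denote its maximal equilibrium. By monotonicity of the maximal fixed point in the endowment density (an immediate consequence of the lattice structure in Proposition~\ref{prop:Tarski}), $\widehat{v}\geq\widetilde{v}$ pointwise, so $\{x:\widetilde{v}_x\geq v^*\}\subseteq\widehat{S}:=\{x:\widehat{v}_x\geq v^*\}$. A direct calculation using~\eqref{eq:lineargraphon} on the common solvency region $\{x:\widetilde{v}_x\geq v^*\}$ yields $\widehat{v}_x - \widetilde{v}_x \geq \epsilon/2 + \int C(x,y)(\widehat{v}_y-\widetilde{v}_y)\,dy \geq \epsilon/2$ there, producing a strict margin $\widehat{v}_x \geq v^* + \epsilon/2$ on the original solvent set.

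Applying Theorem~\ref{thm:maximaleq} to the modified graphon, the cutoffs $\widehat{x}_k^\ast$ of $\widehat{v}$ are the almost-sure limits of the maximal indices of failing firms of each type in $\mathscr{G}(n)$ under the infusion $\iota^{(n)}$. Consequently,
\begin{equation*}
    \frac{1}{n}\bigl|\{i:\widetilde{V}_i\geq v^*\}\bigr|\xrightarrow{\text{a.s.}}\lambda(\widehat{S})\geq\lambda\bigl(\{x:\widetilde{v}_x\geq v^*\}\bigr),
\end{equation*}
which delivers the claimed liminf inequality.

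The main obstacle is verifying the hypotheses of Theorem~\ref{thm:maximaleq} for $\widehat{v}$: its spillover matrix must be stable, and its cutoffs must be interior in each block. Since the spillover matrix depends continuously on the graphon cutoffs via the closed form in Theorem~\ref{t:spillovermatrix}, and the cutoffs themselves vary continuously with the perturbation $\epsilon/2$ so long as they remain interior, stability of the post-infusion equilibrium $\widetilde{v}$ (tacit throughout Section~\ref{sec:infusion}) is preserved for all sufficiently small $\epsilon$. A secondary technicality is the handling of firms sampled within $1/n$ of a block boundary and the Riemann-sum error, both of which are absorbed into the $o(n)$ slack in the budget accounting.
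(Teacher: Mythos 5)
There is a genuine gap. Your modified graphon problem with endowments $e_x + \iota(x) + \epsilon/2$ is \emph{not} a valid block equity graphon to which Theorem~\ref{thm:maximaleq} applies. The graphon model requires the type-specific endowment functions $f_k$ to be continuously differentiable and increasing on each block, and this hypothesis is used substantively (e.g., Proposition~\ref{prop:block-graphon} and the $1/f_k'(x_k^*)$ terms in the spillover calculus). But by Proposition~\ref{prop:bailout} and Theorem~\ref{t:bailout_linear}, the optimal $\iota$ lifts an interval of endowments up to a constant $\widehat{e}_k$, so $f_k(x) + \iota(x)$ is flat on $[x_k^*-\delta_k-y_k, x_k^*-\delta_k]$ and has an upward jump at the left endpoint. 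Adding the constant $\epsilon/2$ translates this function but does not restore strict monotonicity or continuity, so the hypotheses of Theorem~\ref{thm:maximaleq} still fail for the graphon you wish to invoke it on.

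The uniform $\epsilon/2$ margin is also solving the wrong problem. Theorem~\ref{thm:maximaleq} already absorbs the stochastic fluctuations through the concentration argument and stability assumption; what it cannot absorb is the degeneracy of a non-increasing endowment. The paper instead spends the entire $\epsilon$ slack on a smooth, strictly increasing majorant: it picks $g_k(x) \geq f_k(x) + \iota(x)$, smooth and strictly increasing, with total budget increase at most $\epsilon$, then sets $\iota^{(n)}_i = g_k(i) - f_k(i)$. Monotonicity of the maximal equilibrium in endowments (via Lemma~\ref{lem:eq_construction}) gives $\widetilde{y}_k^* \leq \widetilde{x}_k^*$ for the resulting cutoffs, and Theorem~\ref{thm:maximaleq}---now legitimately applicable---delivers convergence. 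If you want to rescue your approach, you need to replace the additive shift by a smoothing step of this kind; the shift alone leaves the flat region and the jump intact.

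A secondary point: your argument that stability is preserved ``for all sufficiently small $\epsilon$'' would, at best, prove the result only for small $\epsilon$, and you would then need to add a remark that a larger budget is trivially at least as good. The paper, like you, does not explicitly verify stability of the post-infusion equilibrium, so I would not weight this heavily, but the smoothing gap above is essential.
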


The construction gives a bit of additional cash to firms near the solvency threshold to insure against adverse network draws. As the network grows large, our concentration results imply the amount of additional cash needed vanishes.


\section{Conclusion}
Even if one ignores the problems of moral hazard, private information and equilibrium multiplicity, the problem of finding which firms in a financial network to bailout, is a difficult one. This paper uses a continuum analog of the underlying financial network to identify which firms should be prioritized for a bailout. Whether a firm is bailed out or not depends upon its cash endowment and how `central' is the block it belongs to. The number of firms directly bailed out in each block is proportional to the block's Katz-Bonacich centrality, adjusted to account for the cost of rescuing firms in that block.

\section*{Acknowledgements}
We thank Mengjia Xia for useful comments and Chengyang Zhu for useful comments and implementing numerical examples.
\bibliographystyle{ecta}
\bibliography{ER}

\appendix
\section{Omitted Proofs}

\begin{proof}[Proof of Theorem~\ref{thm:fixedkappa}]
We will prove a more general result allowing link probabilities $g_{k\ell}$ to vanish at a polynomial rate.

Given a matrix $B$, we let $\| B \|_2 = \max_{\|x\|=1} \|Bx\|_2$ be the matrix 2-norm. We omit the indices $n$ of matrices in the proof.

Define $\Lambda = \bC - \overline{\bC}$ to be the difference between the matrix of cross-holdings and its expectation. We begin with a lemma bounding this difference.

\begin{lemma}\label{lem:matrixnorm}
Let $p = \min_{\substack{k,\ell \\ g_{k\ell} > 0}} g_{k\ell}$. Given $\alpha<1$ and  $\zeta < 1-\alpha$, there exists $\gamma>0$ such that for $n$ sufficiently large
$$\|\boldsymbol{\Lambda}\|_2\leq  \sqrt{\frac{1}{pn^{\alpha}}}$$
with probability at least $1-\exp(-\gamma pn^{\zeta})$.
\end{lemma}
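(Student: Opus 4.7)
The plan is to split $\Lambda = \bC - \overline{\bC}$ into contributions from fluctuations in the Bernoulli numerators $A_{ij}$ and from fluctuations in the in-degree denominators $d_j$. Writing $\overline{d}_j = \psi_\ell$ for $j$ in block $\ell$ and adding and subtracting $cA_{ij}/\overline{d}_j$,
$$C_{ij} - \overline{C}_{ij} = \frac{c(A_{ij}-g_{k\ell})}{\overline{d}_j} + cA_{ij}\Bigl(\frac{1}{d_j}-\frac{1}{\overline{d}_j}\Bigr) =: (\Lambda_1)_{ij} + (\Lambda_2)_{ij},$$
so by the triangle inequality it suffices to bound each $\|\Lambda_i\|_2$ by about half the target $(pn^\alpha)^{-1/2}$.

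For $\Lambda_1$: the entries are independent, mean-zero, and uniformly bounded by $c/\overline{d}_j \leq c/(ps_{\min}n) =: L$, where $s_{\min}$ is the smallest block fraction. Writing $\Lambda_1 = \sum_{ij}(\Lambda_1)_{ij}\be_i\be_j^\trans$ as a sum of independent zero-mean matrices, I would apply matrix Bernstein in its non-Hermitian (rectangular) form, which controls $\|\Lambda_1\|_2$ in terms of $L$ and the larger of $\|\sum E[X_{ij}X_{ij}^\trans]\|$ and $\|\sum E[X_{ij}^\trans X_{ij}]\|$. A direct row/column variance computation using $\overline{d}_j \geq p s_{\min} n$ yields a matrix variance parameter of order $\sigma^2 = O(c^2/(pn))$. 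Setting $t = \tfrac12 (pn^\alpha)^{-1/2}$, the variance term $t^2/\sigma^2$ is of order $pn^{1-\alpha}$ and dominates the linear term $Lt$, giving a tail bound $2n\exp(-\Omega(pn^{1-\alpha}))$.

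For $\Lambda_2$: factor $\Lambda_2 = cA\cdot\diag(1/d_j - 1/\overline{d}_j)$, so $\|\Lambda_2\|_2 \leq c\|A\|_2 \cdot \max_j|1/d_j - 1/\overline{d}_j|$. The deterministic bound $\|\overline{A}\|_2 \leq n$ combined with a second matrix Bernstein estimate on $A - \overline{A}$ gives $\|A\|_2 = O(n q)$ with high probability, where $q = \max_{k,\ell} g_{k\ell}$. For the scalar factor, $d_j - \overline{d}_j$ is a sum of $n$ independent Bernoullis with variance $\leq \overline{d}_j$, so scalar Bernstein yields $P(|d_j - \overline{d}_j|\geq s) \leq 2\exp(-s^2/(2\overline{d}_j + 2s/3))$; a union bound over $j$ controls the maximum, and dividing by $d_j\overline{d}_j \asymp (pn)^2$ (which itself holds once each $d_j$ is concentrated around $\overline{d}_j$) translates this into a bound on $\max_j|1/d_j - 1/\overline{d}_j|$. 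Picking $s$ so that $c\|A\|_2 \cdot s/(d_j\overline{d}_j) \leq \tfrac12(pn^\alpha)^{-1/2}$ again yields a Bernstein exponent of order $pn^{1-\alpha}$, and the union-bound prefactor of $n$ is absorbed.

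The two bounds combine via the triangle inequality to give the claim, with $\gamma$ chosen slightly smaller than the smaller of the two Bernstein constants. The main obstacle is that both $\sigma^2$ and $L$ in each Bernstein application scale with the (possibly vanishing) link-probability parameter $p$, and the Bernstein exponent switches between variance-dominated and bound-dominated regimes; the gap $\zeta < 1-\alpha$ provides exactly enough slack to verify that we sit in the variance-dominated regime and that the exponent is at least $\gamma p n^\zeta$ for all sufficiently large $n$. A secondary technical point is that $A$ is not Hermitian, which the paper explicitly flags; this is handled by using the rectangular form of matrix Bernstein in place of results specialized to symmetric matrices.
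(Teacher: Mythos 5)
Your decomposition-based strategy is sound, but it takes a genuinely different route from the paper. The paper does not split $\boldsymbol{\Lambda}$; instead it passes directly to the Gram matrix via $\|\boldsymbol{\Lambda}\|_2 = \sqrt{\|\boldsymbol{\Lambda}^\top\boldsymbol{\Lambda}\|_2} = \sqrt{\|\sum_i \boldsymbol{\Lambda}_{*i}\boldsymbol{\Lambda}_{*i}^\top\|_2}$, which expresses the problem as a sum of independent rank-one \emph{Hermitian} matrices (the columns $\boldsymbol{\Lambda}_{*i}$ depend on disjoint sets of the $A_{ij}$'s, so they are independent). It then conditions throughout on a degree-concentration event $\mathcal{Q}$, estimates the conditional mean $B^{(i)}$, the matrix variance, and the uniform bound $\|\mathbf{Z}^{(i)}\|_2$, and applies the \emph{Hermitian} matrix Bernstein inequality. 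Your route instead splits $\boldsymbol{\Lambda}=\boldsymbol{\Lambda}_1+\boldsymbol{\Lambda}_2$ into numerator fluctuations (handled by rectangular matrix Bernstein on a sum of independent rank-one terms $X_{ij}\be_i\be_j^\trans$) and denominator fluctuations (handled by factoring $\boldsymbol{\Lambda}_2 = c\bA\cdot\diag(1/d_j - 1/\overline{d}_j)$ and combining a spectral bound on $\bA$ with scalar Bernstein plus a union bound over $j$); you then union-bound the two pieces. Both paths reach the target, and yours is arguably more elementary in that the $\boldsymbol{\Lambda}_1$ piece avoids the conditional-moment bookkeeping that occupies most of the paper's proof; the price is the extra bookkeeping in $\boldsymbol{\Lambda}_2$ where you must carry the spectral norm of $\bA$ (which costs a factor of $\max_{k\ell}g_{k\ell}$, not $p$), and verify that the concentration of $d_j$ holds before dividing by $d_j\overline{d}_j$. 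One small arithmetic slip worth flagging: for $\boldsymbol{\Lambda}_1$ with $t = \tfrac12(pn^\alpha)^{-1/2}$ and $\sigma^2 = O(1/(pn))$, the variance ratio is $t^2/\sigma^2 = \Theta(n^{1-\alpha})$, not $\Theta(pn^{1-\alpha})$ as you wrote; since $p\le 1$ your claimed exponent is a weaker statement that still exceeds $pn^\zeta$ for $\zeta < 1-\alpha$, so the conclusion survives, but the $p$-factor should be removed. You should also check, as the paper does implicitly through its choice of $\rho>\alpha$ and later the remark that $p$ may vanish no faster than $n^{-a}$ for $a<\tfrac12$, that the implicit restriction $p\gtrsim n^{-\alpha}$ is available so that the variance term, not the uniform bound $L$, controls the Bernstein exponent.
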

\begin{proof}

We first observe that:
$$\|\boldsymbol{\Lambda}\|_{2}=\sqrt{\left\|\boldsymbol{\Lambda}^{\top}\mathbf{\Lambda}\right\|_{2}}=\sqrt{\left\|\sum_{i=1}^{n} \boldsymbol{\Lambda}_{* i} \mathbf{\Lambda}_{* i}^{\top}\right\|_{2}}.$$

We will bound the norm $\left\|\sum_{i=1}^{n} \boldsymbol{\Lambda}_{* i} \mathbf{\Lambda}_{* i}^{\top}\right\|_{2}$ using Bernstein's inequality:
\begin{lemma}\label{bern}[Bernstein's Inequality for Matrices] Suppose $\mathbf{Z}_{1}, \ldots, \mathbf{Z}_{n}$ is a sequence of independent $d \times d$ Hermitian matrices satisfying
$$
\mathbf{E}\left(\mathbf{Z}_{i}\right)={0} \text { and }\left\|\mathbf{Z}_{i}\right\|_{2} \leq \mathcal{L} \quad(1 \leq i \leq n)
$$
Write $\mathbf{Y}=\sum_{i} \mathbf{Z}_{i}$ and let $\mathcal{V}(\mathbf{Y})$ denote the matrix variance statistic of the sum:
$$
\mathcal{V}(\mathbf{Y})=\left\|\mathbf{E}\left(\mathbf{Y}^{2}\right)\right\|_{2}=\left\|\sum_{i} \mathbf{E}\left(\mathbf{Z}_{i}^{2}\right)\right\|_{2}
$$
Then, for all $t \geq 0$,

$$
\mathbf{P}\left[\|\mathbf{Y}\|_{2} \geq t\right] \leq 2 d \exp \left(\frac{-t^{2} / 2}{\mathcal{V}(\mathbf{Y})+\mathcal{L} t / 3}\right) \leq 2 d \exp \left(-\frac{1}{4} \min \left\{\frac{t^{2}}{\mathcal{V}(\mathbf{Y})}, \frac{t}{\mathcal{L}}\right\}\right)
.$$\end{lemma}
\medskip

Fix $\rho \in (0,1)$. To apply Lemma \ref{bern}, condition on the event $\mathcal{Q}$ that for all nodes $i$ and all blocks $\ell$, the number of links $d_{\ell i}$ are in the interval $$[g_{\ell k}S_{\ell} (1-n^{-\rho}), g_{\ell k}S_{\ell} (1+n^{-\rho})]$$
 (where $k$ is the block containing node $i$). The event holds when all $d_{\ell i}$ are sufficiently close to their expected values. By the Chernoff bounds, the probability of the complement of this event vanishes at an exponential rate in $(pn)^{\rho}$.

We define the matrix $B^{(i)}$ by $B^{(i)}_{jk} = \mathbf{E}[(\boldsymbol{\Lambda}_{* i} \mathbf{\Lambda}_{* i}^{\top})_{jk}|\mathcal{Q}]$ and let $\mathbf{Z}^{(i)} = \boldsymbol{\Lambda}_{* i} \boldsymbol{\Lambda}_{* i}^{\top}- B^{(i)}$. We will:

\begin{enumerate}
\item[(1)] Bound the spectral norm of the entry-wise mean $\|B^{(i)}\|_2$.
\item[(2)] Bound the spectral norm of the variance $\|\sum_i\mathbf{E}[(\mathbf{Z}^{(i)})^2|\mathcal{Q}]\|_2$.
\item[(3)] Bound the spectral norm  $\|\mathbf{Z}^{(i)}\|_2$ uniformly on the event $\mathcal{Q}$.
\end{enumerate}

We now show three claims. We will use the following growth rates repeatedly: is $S_{\ell}$  grows at the same rate as $n$ for all $\ell$ because each block size is non-vanishing, when the event $\mathcal{Q}$ holds $\frac{g_{\ell k}}{\overline{d}_i}$ is $O(\frac{1}{n})$ for all blocks $\ell$ and $k$ and nodes $i$ in block $k$, and  when the event $\mathcal{Q}$ holds  $\frac{1}{d_i}$ and $\frac{1}{\overline{d}_i}$ are $O(\frac{1}{pn})$ for all $i$.

Claim (1): $\|B^{(i)}\|_2$ is $O(\frac{1}{pn^{1+\rho}}).$

The entries $\boldsymbol{\Lambda}_{ij}$ for $i$ in block $k$ and $j$ in block $\ell$ are equal to $\frac{c}{d_j}-\frac{cg_{k\ell}}{\overline{d}_j}$ if $A_{ij}=1$ and $-\frac{cg_{k\ell}}{\overline{d}_j}$ otherwise. So if $i$ is in block $k$ and $j$ is in block $\ell$, we compute that
\begin{align*}B^{(i)}_{jj}& =\mathbf{E}\left[\boldsymbol{\Lambda}_{j i}^2\middle|\mathcal{Q}\right]
\\ & =  c^2\mathbf{E}_{d_i}\left[\frac{1}{S_{\ell}} \left(\left(S_{\ell}-d_{\ell i}\right) \frac{g_{\ell k}^2}{\overline{d}_i^2} + d_{\ell i} \left(\frac{1}{d_i}-\frac{ g_{\ell k}}{\overline{d}_i}\right)^2\right)\middle|\mathcal{Q}\right].
\end{align*}
The right-hand side is $O(\frac{1}{pn^2}).$

Now consider nodes $i$ in block $k$, $j$ in block $\ell$, and $j'$ in block $\ell'$ with $j\neq j'$. We compute that for $\ell\neq \ell'$,
\begin{multline*}
    B^{(i)}_{jj'} =\mathbf{E}\left[\boldsymbol{\Lambda}_{j i}     
        \boldsymbol{\Lambda}_{j' i}|\mathcal{Q}\right]
    = c^2\mathbf{E}_{d_i}
        \biggl[
            \frac{1}{S_{\ell}S_{\ell'}} \biggl(d_{\ell i}\biggl(\frac{1}{d_i} - \frac{g_{\ell k}}{\overline{d}_i}\biggr) \biggl(\frac{d_{\ell'i}}{d_i}-\frac{ S_{\ell'}  g_{\ell' k}}{\overline{d}_i}\biggr)\\
            + \biggl(S_\ell - d_{\ell i} \biggr) \biggl(- \frac{g_{\ell' k}}{\overline{d}_i}  \biggr) \biggl(\frac{d_{\ell'i}}{d_i}-\frac{S_{\ell'}  g_{\ell' k}}{\overline{d}_i}\biggr)\biggr)
        \biggm| \mathcal{Q}\biggr].
\end{multline*}
Since we have conditioned on the event $\mathcal{Q}$, the quantity $$\frac{d_{\ell'i}}{d_i}-\frac{S_{\ell'}  g_{\ell' k}}{\overline{d}_i}= \frac{(\overline{d}_i - d_i)d_{\ell' i} + d_i (d_{\ell'i} - S_{\ell'} g_{\ell' k})}{d_i \overline{d}_i}$$ is $O(n^{-\rho})$ because $\overline{d}_i-d_i$ and $d_{\ell'i} - S_{\ell'}g_{\ell'k}$ are $O(pn^{1-\rho})$. So $B_{jj'}$ is $O(\frac{1}{n^{2+\rho}}).$

Similarly, for $\ell = \ell'$ we compute that
\begin{align*}B^{(i)}_{jj'}& =\mathbf{E}\left[\boldsymbol{\Lambda}_{j i} \boldsymbol{\Lambda}_{j' i}|\mathcal{Q}\right]
\\ & = c^2\mathbf{E}_{d_i}\left[\frac{1}{S_{\ell}(S_{\ell'}-1)} \left(d_{\ell i}\left(\frac{1}{d_i} - \frac{g_{\ell k}}{\overline{d}_i}\right) \left(\frac{d_{\ell i}-1}{d_i}-\frac{ (S_{\ell} -1) g_{\ell k}}{\overline{d}_i}\right)  \right. \right. \\ &  \left. \left. + \left(S_\ell - d_{\ell i} \right) \left(- \frac{g_{\ell k}}{\overline{d}_i}  \right) \left(\frac{d_{\ell i}-1}{d_i}-\frac{(S_{\ell}  -1)g_{\ell k}}{\overline{d}_i}\right)\right)\biggm| \mathcal{Q} \right]. \end{align*}
The right-hand side is again $O(\frac{1}{n^{2+\rho}})$ when the event $\mathcal{Q}$ holds.

From the bounds on the entries of $B^{(i)}$, we can express $B^{(i)}$ as the sum of a diagonal matrix with entries $O(\frac{1}{pn^2})$ and an arbitrary matrix with entries $O(\frac{1}{n^{2+\rho}})$. Both of these summands have 2-norm at most $O(\frac{1}{pn^{1+\rho}})$. Applying the triangle inequality to this sum, we can conclude that $\|B^{(i)}\|_2$ is $O(\frac{1}{pn^{1+\rho}}).$

Claim (2): $\|\sum_i\mathbf{E}[(\mathbf{Z}^{(i)})^2|\mathcal{Q}]\|_2$ is $O(\frac{1}{p^2n^{1+\rho}})$.

By the triangle inequality, it is sufficient to show that $\|\mathbf{E}[(\mathbf{Z}^{(i)})^2|\mathcal{Q}]\|_2$ is $O(\frac{1}{p^2n^{2+\rho}})$ for all $i$. We have
$$\|\mathbf{E}[(\mathbf{Z}^{(i)})^2|\mathcal{Q}]\|_2 \leq \|\mathbf{E}\left[\boldsymbol{\Lambda}_{* i} \boldsymbol{\Lambda}_{* i}^{\top}\boldsymbol{\Lambda}_{* i} \boldsymbol{\Lambda}_{*i}^{\top}|\mathcal{Q}\right]\|_2  + \|(B^{(i)})^2\|_2.$$

We showed in the previous claim that $\|B^{(i)}\|_2$ is $O(\frac{1}{pn^{1+\rho}}),$ so $\|(B^{(i)})^2\|_2$ is $O(\frac{1}{p^2n^{2+2\rho}})$ by the submultiplicativity of the spectral norm. Relaxing the upper bound, this term is $O(\frac{1}{p^2n^{1+\rho}})$ since $\rho>0$.

To complete the proof of the claim, we must bound $\|\mathbf{E}\left[\boldsymbol{\Lambda}_{* i} \boldsymbol{\Lambda}_{* i}^{\top}\boldsymbol{\Lambda}_{* i} \boldsymbol{\Lambda}_{*i}^{\top}|\mathcal{Q}\right]\|_2$. Say $i$ is in block $k$. The inner product is\begin{equation}\label{eq:innerproduct}\boldsymbol{\Lambda}_{* i}^{\top}\boldsymbol{\Lambda}_{* i} = c^2 \sum_{\ell} \left( d_{\ell i}  \left(\frac{1}{d_i} - \frac{g_{\ell k}}{\overline{d}_i}\right)^2 + (S_{\ell}-d_{\ell i}) \left(-\frac{g_{\ell k}}{\overline{d}_i}\right)^2\right).\end{equation}
So we have
$$\mathbf{E}\left[\boldsymbol{\Lambda}_{* i} \boldsymbol{\Lambda}_{* i}^{\top}\boldsymbol{\Lambda}_{* i} \boldsymbol{\Lambda}_{* i}^{\top}|\mathcal{Q}\right]   = c^2\mathbf{E}\left[\left(\sum_{\ell} \left( d_{\ell i}  \left(\frac{1}{d_i} - \frac{g_{\ell k}}{\overline{d}_i}\right)^2 + (S_{\ell}-d_{\ell i}) \left(-\frac{g_{\ell k}}{\overline{d}_i}\right)^2\right)\right) \boldsymbol{\Lambda}_{* i}  \boldsymbol{\Lambda}_{* i}^{\top}\middle|\mathcal{Q}\right].$$

We computed in the previous claim that conditioning on the event $\mathcal{Q}$, the expectations of the diagonal entries of $\boldsymbol{\Lambda}_{* i}  \boldsymbol{\Lambda}_{* i}^{\top}$ are $O(\frac{1}{pn^{2}})$ and the expectations of the off-diagonal entries are $O(\frac{1}{n^{2+\rho}})$. From equation \eqref{eq:innerproduct}, the inner product $\boldsymbol{\Lambda}_{* i}^{\top}\boldsymbol{\Lambda}_{* i}$ is $O(\frac{1}{pn})$ when $\mathcal{Q}$ holds.
So conditional on the event $\mathcal{Q}$, the diagonal entries of
$$c^2\mathbf{E}\left[\left(\sum_{\ell} \left( d_{\ell i}  \left(\frac{1}{d_i} - \frac{g_{\ell k}}{\overline{d}_i}\right)^2 + (S_{\ell}-d_{\ell i}) \left(-\frac{g_{\ell k}}{\overline{d}_i}\right)^2\right)\right) \boldsymbol{\Lambda}_{* i}  \boldsymbol{\Lambda}_{* i}^{\top}\middle|\mathcal{Q}\right].$$are $O(\frac{1}{p^2n^{3}})$ and the off-diagonal entries are $O(\frac{1}{pn^{3+\rho}})$. Therefore $\|\mathbf{E}\left[\boldsymbol{\Lambda}_{* i} \boldsymbol{\Lambda}_{* i}^{\top}\boldsymbol{\Lambda}_{* i} \boldsymbol{\Lambda}_{*i}^{\top}|\mathcal{Q}\right]\|_2$ is $O(\frac{1}{p^2n^{1+\rho}})$.

Now recall that
$$\|\mathbf{E}[(\mathbf{Z}^{(i)})^2|\mathcal{Q}]\|_2 \leq \|\mathbf{E}\left[\boldsymbol{\Lambda}_{* i} \boldsymbol{\Lambda}_{* i}^{\top}\boldsymbol{\Lambda}_{* i} \boldsymbol{\Lambda}_{*i}^{\top}|\mathcal{Q}\right]\|_2  + \|(B^{(i)})^2\|_2.$$
We have shown both terms on the right-hand side are $O(\frac{1}{p^2n^{1+\rho}})$, which proves the claim.

Claim (3): There exists  $M>0$ such that $\|\mathbf{Z}^{(i)}\|_2 \leq \frac{M}{pn}$ when $\mathcal{Q}$ holds.

\begin{align*}
\left\|\mathbf{Z}^{(i)}\right\|_{2}=\left\|\mathbf{\Lambda}_{* i} \mathbf{\Lambda}_{* i}^{\top}-B^{(i)}\right\|_{2} & \leq\left\|\boldsymbol{\Lambda}_{* i} \mathbf{\Lambda}_{* i}^{\top}\right\|_{2}+\left\|B^{(i)}\right\|_{2} \\
&\leq\left\|\boldsymbol{\Lambda}_{* i}\right\|_{2}^{2}+\left\|B^{(i)}\right\|_{2} 
\\ & = c^2 \sum_{\ell} \left( d_{\ell i}  \left(\frac{1}{d_i} - \frac{g_{\ell k}}{\overline{d}_i}\right)^2 + (S_{\ell}-d_{\ell i}) \left(-\frac{g_{\ell k}}{\overline{d}_i}\right)^2\right)+\left\|B^{(i)}\right\|_{2} .
\end{align*}
The summation is $O(\frac{1}{pn})$. Claim (1) showed that $\left\|B^{(i)}\right\|_{2}$ is $O(\frac{1}{pn^{1+\rho}})$. So $\left\|\mathbf{Z}_{i}\right\|_{2}$ is $O(\frac{1}{pn})$  when $\mathcal{Q}$ holds.

This completes the proofs of the claims. We now apply Bernstein's inequality with $d=n$. For all $t \geq 0$,
$$\mathbf{P}\left[\left\|\sum_{i=1}^n \mathbf{Z}^{(i)} \right\|_2 \geq t\middle| \mathcal{Q}\right] \leq 2n  \exp \left(-\frac{1}{4} \min \left\{\frac{t^{2}}{\mathcal{V}(\sum_{i=1}^n \mathbf{Z}^{(i)} |\mathcal{Q})}, \frac{t}{\mathcal{L}}\right\}\right).$$
We let $t=\frac{1}{2pn^\alpha}$ and will compute the minimum on the right-hand side. By Claim (2), there exists $M'>0$ such that $$\mathcal{V}\left(\sum_{i=1}^n\mathbf{Z}^{(i)} \middle|\mathcal{Q}\right) \leq \frac{M'}{p^2n^{1+\rho}}.$$
Therefore, $$\frac{t^2}{\mathcal{V}\left(\sum_{i=1}^n\mathbf{Z}^{(i)} |\mathcal{Q}\right)} \geq \frac{n^{1+\rho-2\alpha}}{ 4 M'}.$$
By Claim (3), there exists $M''>0$ such that
$$\mathcal{L} \leq \frac{M''}{pn}.$$
Therefore,
$$\frac{t}{\mathcal{L}}\geq \frac{ n^{1-\alpha}}{2 M''}.$$

Since we can choose $\rho>\alpha$, these bounds imply there exists $\gamma>0$ such that
$$\mathbf{P}\left[\left\|\sum_{i=1}^n \mathbf{Z}^{(i)} \right\|_2 \geq \frac{1}{2pn^{\alpha}} \middle| \mathcal{Q}\right] \leq 2n  \exp \left(- \gamma n^{1-\alpha} \right)$$
for $n$ sufficiently large.

We now compute that for $n$ large
\begin{align*} \mathbf{P}\left[\left\|\sum_{i=1}^n \mathbf{Z}^{(i)} \right\|_2 \geq \frac{1}{2 pn^{\alpha}} \middle| \mathcal{Q}\right]  & = \mathbf{P}\left[ \left\| \sum_{i=1}^n  \boldsymbol{\Lambda}_{*i} \boldsymbol{\Lambda}_{*i}^{\top}   + nB^{(i)}\right\|_2  \geq\frac{1}{2pn^{\alpha}}\middle|\mathcal{Q}\right]
\\ & \geq  \mathbf{P}\left[ \left\| \sum_{i=1}^n \boldsymbol{\Lambda}_{*i} \boldsymbol{\Lambda}_{*i}^{\top} \right\|_2  \geq \frac{1}{pn^{\alpha}}\middle| \mathcal{Q}\right]
\\ & = \mathbf{P}\left[\|\boldsymbol{\Lambda} \|_2  \geq \sqrt{\frac{1}{pn^{\alpha}}} \middle| \mathcal{Q}\right],
\end{align*}
where the inequality follows from Claim (1) and the triangle inequality, continuing to take $\rho>\alpha$. We conclude that $$\mathbf{P}\left[\|\boldsymbol{\Lambda} \|_2  \geq \sqrt{\frac{1}{pn^{\alpha}}} \middle| \mathcal{Q}\right]\leq 2n  \exp \left(-\gamma n^{1-\alpha} \right)$$
for $n$ large. We have shown the probability of the complement of the event $\mathcal{Q}$ vanishes at an exponential rate in $(pn)^{\rho}$, and increasing $\rho$ if necessary we can take $\rho >\zeta$. Since $\zeta<1-\alpha$ by assumption, this proves the lemma.
\end{proof}

By the preceding lemma, the Neumann series
$$(I-\boldsymbol{\Lambda})^{-1}=\sum_{k=0}^{\infty} \boldsymbol{\Lambda}^k$$
converges with probability at least $1-\exp(-\gamma pn^{\zeta})$ for $n$ sufficiently large. Since $\|\overline{\bC}\|_2=c<1$ and therefore $\|(I-\overline{\bC})^{-1}\|_2$ is bounded, the Neumann series $$(I-(I-\overline{\bC})^{-1}\boldsymbol{\Lambda}))^{-1}=\sum_{k=0}^{\infty} ((I-\overline{\bC})^{-1}\boldsymbol{\Lambda})^{k}$$
also converges with probability at least $$1-\exp(-\gamma pn^{\zeta})$$ for $n$ sufficiently large.

As in the proof of Lemma 19 of \cite*{amelkin2021}, we have
$$
\|{V}(\boldsymbol{\kappa}) - \overline{V}(\boldsymbol{\kappa}) \|_{\infty} = \| (I-(I-\overline{\bC})^{-1}\boldsymbol{\Lambda})^{-1}(I-\overline{\bC})^{-1}\boldsymbol{\Lambda} \overline{V}(\boldsymbol{\kappa})\|_{\infty}.$$
So
\begin{equation*}
\|{V}(\boldsymbol{\kappa}) - \overline{V}(\boldsymbol{\kappa}) \|_{\infty}   \leq \| (I-\overline{\bC})^{-1}\boldsymbol{\Lambda} \overline{V}(\boldsymbol{\kappa})\|_{\infty} +  \| \sum_{k=1}^{\infty} ((I-\overline{\bC})^{-1}\boldsymbol{\Lambda})^{k}(I-\overline{\bC})^{-1}\boldsymbol{\Lambda} \overline{V}(\boldsymbol{\kappa})\|_{\infty}\end{equation*}by the triangle inequality. We conclude that
\begin{equation}\|{V}(\boldsymbol{\kappa}) - \overline{V}(\boldsymbol{\kappa}) \|_{\infty}   \leq  \| (I-\overline{\bC})^{-1}\boldsymbol{\Lambda} \overline{V}(\boldsymbol{\kappa})\|_{\infty} +  \| \sum_{k=1}^{\infty} ((I-\overline{\bC})^{-1}\boldsymbol{\Lambda})^{k}(I-\overline{\bC})^{-1}\boldsymbol{\Lambda} \overline{V}(\boldsymbol{\kappa})\|_{2}\label{eq:boundV}
\end{equation}
by the comparison of the $\|\cdot\|_{\infty}$ and $\|\cdot\|_2$ norms.

\begin{lemma}
$$\left\| (I-\overline{\bC})^{-1}\boldsymbol{\Lambda} \overline{V}(\boldsymbol{\kappa})\right\|_{\infty}  \rightarrow 0$$  almost surely.
\end{lemma}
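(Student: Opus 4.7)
The plan is to bound $(I-\overline{\bC})^{-1}\boldsymbol{\Lambda}\overline{V}(\bkappa)$ entrywise. I would first establish a uniform $\ell^\infty$ bound on $\overline{V}(\bkappa)$ itself; then bound $w:=\boldsymbol{\Lambda}\overline{V}(\bkappa)$ in $\ell^\infty$ via scalar concentration; and finally convert this to an $\ell^\infty$ bound for $(I-\overline{\bC})^{-1}w$ by exploiting the block-regular structure of $\overline{\bC}$.

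\textbf{Step 1 (Uniform bound on $\overline{V}(\bkappa)$).} Because $\overline{C}_{ij}$ depends only on the blocks $k(i),\ell(j)$, the putative equilibrium $\overline{V}(\bkappa)$ of~\eqref{dequ-putative} has the block-decomposable form $\overline{V}_i = e_i + A_{k(i)} - \beta(1-\kappa_i)$, where the block-constants $A_k$ satisfy a fixed $m\times m$ linear system whose coefficient matrix has spectral radius bounded by $c<1$ (uniformly in $n$, since block fractions $S_\ell/n$ are bounded away from $0$). Hence the $A_k$ are bounded uniformly in $n$, and so $\|\overline{V}(\bkappa)\|_\infty = O(1)$.

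\textbf{Step 2 ($\|w\|_\infty\to 0$ almost surely).} Condition on the event $\mathcal{Q}$ of Lemma~\ref{lem:matrixnorm} on which $|d_j-\psi_{\ell(j)}| \leq \psi_{\ell(j)} n^{-\rho}$ uniformly in $j$; its complement has summable probability. On $\mathcal{Q}$, for $i$ in block $k$ and $j$ in block $\ell$, decompose
\begin{equation*}
\Lambda_{ij} \;=\; \frac{c(A_{ij}-g_{k\ell})}{\psi_\ell} \;+\; \frac{cA_{ij}}{\psi_\ell}\cdot \frac{\psi_\ell-d_j}{d_j}.
\end{equation*}
Summing against $\overline{V}_j$, the second contribution is deterministically $O(n^{-\rho})$, since $\overline V$ is bounded by Step~1 and $\sum_j cA_{ij}/\psi_{\ell(j)} = O(1)$. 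The first contribution equals $\sum_\ell \frac{c}{\psi_\ell}\sum_{j\in\ell}(A_{ij}-g_{k\ell})\overline{V}_j$, a sum of independent, centered, uniformly bounded random variables over $\Theta(n)$ terms; Hoeffding's inequality gives a tail bound of $2\exp(-C\log n)$ for its magnitude exceeding a constant multiple of $\sqrt{\log n/n}$, with $C$ adjustable. A union bound over $i\in\{1,\dots,n\}$ and $\ell$ leaves the failure probability summable in $n$, so Borel--Cantelli yields $\|w\|_\infty = O(\sqrt{\log n/n}+n^{-\rho}) \to 0$ almost surely.

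\textbf{Step 3 (Transferring the bound through $(I-\overline{\bC})^{-1}$).} Let $y = (I-\overline{\bC})^{-1}w$. Here is the crucial use of block structure: the identity $y = w + \overline{\bC}y$ together with $\overline{C}_{ij}$ depending only on $(k(i),\ell(j))$ implies $(\overline{\bC}y)_i$ is constant on each block, so $y_i = w_i + Z_{k(i)}$ for some vector $Z\in\real^m$. Summing $y_i$ over block $k$ yields $Y_k = W_k + S_k Z_k$ (with $W_k := \sum_{i\in k}w_i$) and substituting into $Z_k = \sum_\ell \frac{cg_{k\ell}}{\psi_\ell} Y_\ell$ gives the $m\times m$ linear system $(I - M)Z = r$, where $M_{k\ell} = cg_{k\ell}S_\ell/\psi_\ell$ and $r_k = \sum_\ell (cg_{k\ell}/\psi_\ell)W_\ell$. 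Any eigenvector of $M$ extends to a block-constant eigenvector of $\overline{\bC}$ with the same eigenvalue, so $M$ inherits spectral radius $\leq c<1$ and, being of fixed dimension $m$, has $\|(I-M)^{-1}\|_\infty = O(1)$. Since $|W_\ell|\leq S_\ell \|w\|_\infty$ and $S_\ell/\psi_\ell = O(1)$, one obtains $\|r\|_\infty = O(\|w\|_\infty)$, hence $\|Z\|_\infty = O(\|w\|_\infty)$, and finally $\|y\|_\infty \leq \|w\|_\infty + \|Z\|_\infty \to 0$ almost surely, as required.

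\textbf{Anticipated obstacle.} The delicate part is Step~2: the naive bound $\|\boldsymbol{\Lambda}\overline{V}\|_2 \leq \|\boldsymbol{\Lambda}\|_2\|\overline{V}\|_2$ from Lemma~\ref{lem:matrixnorm} is too weak (since $\|\overline{V}\|_2 = \Theta(\sqrt n)$). We must exploit the \emph{special structure} of $\overline V$ (boundedness entrywise, established in Step~1) rather than treating $\overline V$ as an arbitrary vector of the size indicated by its $\ell^2$ norm, and handle the coupling introduced by random degrees $d_j$ by restricting to the event $\mathcal Q$.
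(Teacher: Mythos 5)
Your proposal is correct in substance and follows the same two-stage strategy as the paper: first show $w:=\boldsymbol{\Lambda}\overline{V}(\bkappa)$ is small entrywise (not just in $\ell^2$), then transfer that control through $(I-\overline{\bC})^{-1}$ using its block-constant structure. The technical details differ in both stages, and both differences are worth noting. For Step~2, the paper directly writes out $(\boldsymbol{\Lambda}\overline{V}(\bkappa))_i$ as a sum indexed by $j$ and invokes the Lyapunov CLT for triangular arrays to get a vanishing uniform threshold $h(n)$; you instead explicitly split $\Lambda_{ij}$ into a centered, column-sum-free piece $\frac{c(A_{ij}-g_{k\ell})}{\psi_\ell}$ and a remainder involving $\psi_\ell-d_j$, bounding the former by Hoeffding and the latter deterministically on $\mathcal{Q}$. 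Your decomposition is arguably cleaner in that it makes the independence structure explicit and sidesteps the need to reason about the random denominators $d_j$ inside the concentration step. For Step~3, the paper uses the observation that $\sum_{j\in\text{block }\ell}(I-\overline{\bC})^{-1}_{ij}$ is uniformly $O(1)$ (a row-sum bound) to go directly from $\|w\|_\infty\to 0$ to $\|(I-\overline{\bC})^{-1}w\|_\infty\to 0$; your reduction to the fixed $m\times m$ system $(I-M)Z = r$ proves essentially the same fact by a slightly longer but equally valid route, with the eigenvalue-lifting argument for $\rho(M)\leq c$ doing the work that the paper's Neumann-series row-sum estimate does.

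One technical wrinkle you should patch. In Step~2 you write ``condition on $\mathcal{Q}$'' and then apply Hoeffding to the first contribution as if the $A_{ij}$ remained independent; but $\mathcal{Q}$ is defined in terms of the column sums $d_{\ell j}=\sum_{i'\in\ell}A_{i'j}$, so conditioning on $\mathcal{Q}$ does disturb the joint law of the $A_{ij}$. The clean fix (which costs you nothing) is to apply Hoeffding to the first contribution \emph{unconditionally} — the $A_{ij}$ over $j$ in row $i$ are unconditionally independent — and then write
\begin{equation*}
\Pr\bigl[\lvert w_i\rvert > t + Cn^{-\rho}\bigr] \;\leq\; \Pr\bigl[\lvert\text{first contribution}\rvert > t\bigr] + \Pr[\mathcal{Q}^c],
\end{equation*}
taking a union bound over $i$ at the end. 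This is entirely within the spirit of your argument and gives exactly the summable tail you claim. With that repair, the proof is sound and complete.
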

\begin{proof}

We first compute  $\boldsymbol{\Lambda} \overline{V}(\boldsymbol{\kappa})$. By symmetry, $\overline{V}(\boldsymbol{\kappa})-\mathbf{e}$ has two entries for each block $k$ corresponding to firms in that block with $\kappa_j = 0$ and $\kappa_j=1$. We call these $v_{k,0}$ and $v_{k,1}$, respectively. Then given $i$ in block $\ell$, $$(\boldsymbol{\Lambda} \overline{V}(\boldsymbol{\kappa}))_i = \sum_k \left(\sum_{\substack{j \text{ in block }k \\ \kappa_j=0}}c\left( \frac{A_{ij} }{d_j} - \frac{ g_{k \ell}}{\overline{d}_j} \right)(v_{k,0} +e_j) + \sum_{\substack{j \text{ in block }k \\ \kappa_j=1}}c\left( \frac{A_{ij} }{d_j} -\frac{ g_{k \ell}}{\overline{d}_j} \right) (v_{k,1}+e_j)\right) .$$
By the Lyapunov central limit theorem for triangular arrays, we can choose $h(n)\rightarrow 0$ such that the probability that the right-hand side has absolute value at least $h(n)$ for some node $i$ vanishes at an exponential rate in $pn$.

Next, observe that the entries $(I-\overline{\bC})^{-1}_{ij}$ depend only on the blocks of $i$ and $j$. For each node $i$ and each block $\ell$, we have
\begin{align*}
\sum_{j \text{ in block }\ell}(I-\overline{\bC})^{-1}_{ij} & = \sum_{j \text{ in block }\ell}\sum_{k=0}^{\infty}(\overline{\bC}^k)_{ij} \\ & \leq \eta \cdot \frac{1}{1-c}\end{align*}
for some constant $\eta>0$.

The $i^{th}$ entry of $ (I-\overline{\bC})^{-1}\boldsymbol{\Lambda} \overline{V}(\boldsymbol{\kappa})$ is equal to $$\sum_j (I-\overline{\bC})^{-1}_{ij} \left(\boldsymbol{\Lambda} \overline{V}(\boldsymbol{\kappa})\right)_j \leq \sum_{\text{ blocks }\ell} \eta\cdot \frac{1}{1-c} \cdot \max_{j \text{ in block }\ell} (\boldsymbol{\Lambda} \overline{V}(\boldsymbol{\kappa}))_{j}.$$
By the bound on $(\boldsymbol{\Lambda} \overline{V}(\boldsymbol{\kappa}))_{j}$ above, the probability that each summand on the right-hand side has absolute value at least $\frac{\eta h(n)}{1-c}$ vanishes at an exponential rate in $pn$. Since there are $m$ summands, this implies almost sure convergence.
\end{proof}

\begin{lemma}
$$ \left\| \sum_{k=1}^{\infty} ((I-\overline{\bC})^{-1}\boldsymbol{\Lambda})^{k}(I-\overline{\bC})^{-1}\boldsymbol{\Lambda} \overline{V}(\boldsymbol{\kappa})\right\|_{2}\rightarrow 0$$  almost surely.
\end{lemma}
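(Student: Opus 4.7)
The plan is to reduce the statement to a geometric-series estimate on operator norms and then convert this into a vector-norm bound by exploiting uniform boundedness of the entries of $\overline{V}(\bkappa)$. Set $M_n := (I-\overline{\bC})^{-1}\boldsymbol{\Lambda}$. Then the sum in question equals $M_n^2(I-M_n)^{-1}\overline{V}(\bkappa)$, and by submultiplicativity of the spectral norm,
\begin{equation*}
\Bigl\|\sum_{k=1}^{\infty}((I-\overline{\bC})^{-1}\boldsymbol{\Lambda})^k(I-\overline{\bC})^{-1}\boldsymbol{\Lambda}\,\overline{V}(\bkappa)\Bigr\|_2
\;\leq\;\frac{\|M_n\|_2^{\,2}}{1-\|M_n\|_2}\,\|\overline{V}(\bkappa)\|_2
\end{equation*}
whenever $\|M_n\|_2<1$.

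Next I would apply Lemma~\ref{lem:matrixnorm} with a fixed exponent $\alpha\in(1/2,1)$, say $\alpha=3/4$, giving $\|\boldsymbol{\Lambda}\|_2\leq (pn^\alpha)^{-1/2}$ with probability at least $1-\exp(-\gamma pn^{\zeta})$. This probability is summable in $n$ under the standing hypotheses, so Borel--Cantelli promotes the bound to an almost-sure statement valid for all sufficiently large $n$. Combined with the estimate $\|(I-\overline{\bC})^{-1}\|_2\leq (1-c)^{-1}$ used earlier in the excerpt (which follows from $\|\overline{\bC}\|_2=c<1$ via the Neumann series), this yields $\|M_n\|_2\leq q_n := (1-c)^{-1}(pn^\alpha)^{-1/2}\to 0$.

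For the vector-norm factor, note that $\overline{V}(\bkappa)=(I-\overline{\bC})^{-1}(\be - \beta(\mathbf{1}-\bkappa))$, so under the standing assumption that the endowment sequence $\be(n)$ is componentwise uniformly bounded and $\beta$ is fixed, each entry of $\overline{V}(\bkappa)$ is bounded by a constant $M$ independent of $n$. Hence $\|\overline{V}(\bkappa)\|_2\leq M\sqrt{n}$, and for $n$ large enough that $q_n<1/2$ the right-hand side of the display above is bounded by
\begin{equation*}
\frac{q_n^{\,2}}{1-q_n}\cdot M\sqrt{n}
\;\leq\;\frac{2M}{(1-c)^2}\cdot\frac{\sqrt{n}}{p\,n^\alpha}
\;=\;\frac{2M}{(1-c)^2\,p}\cdot n^{1/2-\alpha}.
\end{equation*}
Since $\alpha>1/2$ this tends to $0$, yielding the desired almost-sure convergence.

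The main obstacle is the loss of a factor $\sqrt{n}$ in passing from $\|\overline{V}(\bkappa)\|_\infty$ to $\|\overline{V}(\bkappa)\|_2$. This is what forces the choice $\alpha>1/2$, but Lemma~\ref{lem:matrixnorm} permits any $\alpha<1$, so the loss is benign. Everything else is a routine triangle-inequality/Neumann-series computation, parallel in spirit to the previous lemma's handling of $\|(I-\overline{\bC})^{-1}\boldsymbol{\Lambda}\,\overline{V}(\bkappa)\|_\infty$.
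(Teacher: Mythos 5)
Your argument is correct and follows essentially the same route as the paper: bound the sum by a geometric series in $\|(I-\overline{\bC})^{-1}\boldsymbol{\Lambda}\|_2$, invoke Lemma~\ref{lem:matrixnorm} with $\alpha>\tfrac12$, use $\|\overline{V}(\bkappa)\|_2\leq M\sqrt{n}$ from boundedness of endowments, and conclude from the $O(n^{1/2-\alpha})$ decay plus Borel--Cantelli. The only cosmetic difference is that you fold the leading factor $(I-\overline{\bC})^{-1}\boldsymbol{\Lambda}$ into the geometric sum (writing it as $M_n^2(I-M_n)^{-1}$) rather than keeping it separate as the paper does, which yields an algebraically equivalent bound.
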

\begin{proof}
By the submultiplicativity of the $\|\cdot\|_2$ norm,		
$$ \| \sum_{k=1}^{\infty} ((I-\overline{\bC})^{-1}\boldsymbol{\Lambda})^{k}(I-\overline{\bC})^{-1}\boldsymbol{\Lambda} \overline{V}(\boldsymbol{\kappa})\|_{2}\leq  \| \sum_{k=1}^{\infty} ((I-\overline{\bC})^{-1}\boldsymbol{\Lambda})^{k} \|_2 \|(I-\overline{\bC})^{-1}\|_2\|\boldsymbol{\Lambda}\|_2\| \overline{V}(\boldsymbol{\kappa})\|_{2}.$$

By Lemma \ref{lem:matrixnorm},  given $\alpha <1$ and $\zeta<1-\alpha$ we have $\|\boldsymbol{\Lambda}\|_2 \leq  \sqrt{\frac{1 }{pn^{\alpha}}}$ with probability at least $1-\exp(-\gamma pn^{\zeta})$. Because $\|\overline{C}\|_2=c < 1$, the norm $\|(I-\overline{C})^{-1}\|_2$ is bounded by some constant $M'>0$. Finally, the norm $\| \overline{V}(\boldsymbol{\kappa})\|_{2}$ is at most $M'' \sqrt{n}$ for some constant $M''>0$ because the endowments $\mathbf{e}$ are bounded. Combining these observations, we find that
$$\left\| \sum_{k=1}^{\infty} ((I-\overline{C})^{-1}\boldsymbol{\Lambda})^{k}\right\|_2 \|(I-\overline{C})^{-1}\|_2\|\boldsymbol{\Lambda}\|_2\| \overline{V}(\boldsymbol{\kappa})\|_{2} \leq \sum_{k=1}^{\infty} \left(M'  \sqrt{\frac{1}{pn^{\alpha}}} \right)^k   \left(  M'\sqrt{\frac{1}{pn^{\alpha}}}\right) M'' \sqrt{n}$$
with probability at least $1-\exp(-\gamma pn^{\zeta})$. Whenever $p$ grows at rate at least $n^{-a}$ for some $a<\frac12$, we can choose $\alpha<1$ such that the right-hand side converges to zero as $n\rightarrow \infty$. Taking any positive $\zeta<1-\alpha$, this shows almost sure convergence.
\end{proof}

Applying the two previous lemmas to inequality \eqref{eq:boundV} completes the proof.
\end{proof}
\begin{proof}[Proof of Proposition \ref{prop:Tarski}]The equilibria of the equity graphon $C$ are the fixed points (if any exist) of the map $\tau\colon v\mapsto w$ where
\begin{equation*}
	w_x = (\tau v)_x := e_x + \int_0^1 C(x,y) v_y\,dy - \beta\onebb_{v_x<v^*}.
\end{equation*}
Begin with the easy observation that $\tau$ is order-preserving. Suppose indeed that $u\leq v$. Then $\onebb_{u_x < v^\ast} \geq \onebb_{v_x < v^\ast}$ and, as $C(x,y)$ is non-negative, $\int_0^1 C(x,y) u_y\,dy \leq \int_0^1 C(x,y) v_y\,dy$. We conclude that
\begin{equation*}
	(\tau u)_x = e_x + \int_0^1 C(x,y) u_y\,dy - \beta\onebb_{u_x < v^\ast}
		\leq e_x + \int_0^1 C(x,y) v_y\,dy - \beta\onebb_{v_x < v^\ast}
		= (\tau v)_x
\end{equation*}
for each $x$, whence $\tau u\leq\tau v$.

Moreover, if $v$ lies in a sufficiently large bounded box, then so does $w = \tau v$, or, in notation, if, for some sufficiently large $D$, $|v|\leq D$, then $|w|\leq D$. Indeed, there exists $M$ such that $|e_x|\leq M$ as the endowment function is bounded and, bearing in mind the sub-stochasticity condition on the equity graphon $C$, we see that
\begin{equation*}
	|(\tau v)_x| \leq M + c\int_0^1 |v_y|\,dy + \beta \leq M + cD + \beta.
\end{equation*}
The right-hand side is $\leq D$ for any selection $D\geq (M + \beta)/(1 - c)$.

With $D$ so chosen, consider the space $\sF_D$ of measurable, bounded functions $v\colon (0,1]\to\real$ with $|v|\leq D$. Restricting $\tau$ to $\sF_D$, we conclude that $\tau\big|_{\sF_D}$ is order-preserving and maps each element of $\sF_D$ back into $\sF_D$. The stage is set for the Knaster--Tarski theorem~\citep[Theorem~1, (i-iii)]{tarski1955lattice}: as $\sF_D$ is a non-empty, complete lattice with respect to the pointwise partial order $\leq$, the set of fixed points of $\tau\big|_{\sF_D}$ forms a non-empty, complete lattice. \end{proof}

\begin{proof}[Proof of Proposition~\ref{prop:block-graphon}] The proof is by contradiction and follows by a small (and simpler) variant of the swap construction of Example~\ref{eg:swap}. Begin with an extremal equilibrium $v$.

(i) Suppose that for some $k$, there exist $t>0$ and points $x$, $x+t\in\BT_k$ with $v_x > v_{x+t}$. By~\eqref{eq:univalue},
\begin{gather*}
	v_x = f_k(x) + \sum\nolimits_\ell T_{k\ell} \int_{\BT_\ell} v_y\,dy 
		- \beta\onebb_{v_x < v^\ast},\\
	v_{x+t} = f_k(x+t) + \sum\nolimits_\ell T_{k\ell} \int_{\BT_\ell} v_y\,dy 
		- \beta\onebb_{v_{x+t} < v^\ast}.
\end{gather*}
As $f_k(x)<f_k(x+t)$ and the term $ \sum_{\ell=1}^m T_{k\ell} \int_{\BT_\ell} v_y\,dy$ is the same in both expressions, it must be the case that $v_x \geq v^\ast$ (whence $\onebb_{v_x < v^\ast} = 0$) while $v_{x+t} < v^\ast$ (whence $\onebb_{v_{x+t} < v^\ast} = 1$).

If $v$ is maximal, consider the effect of changing $v_{x+t}$ to $v'_{x+t} = v_{x+t} + \beta$ and keeping all other values the same, $v'_\xi = v_\xi$ if $\xi\neq x+t$. 
Then
\begin{equation*}
	v'_{x + t} = v_{x+t} + \beta 
		=
			\Bigl[
				f_k(x+t) + \sum\nolimits_\ell T_{k\ell}\int_{\BT_\ell} v_y\,dy - \beta
			\Bigr] + \beta
		> f_k(x) + \sum\nolimits_\ell T_{k\ell}\int_{\BT_\ell} v_y\,dy
		= v_x \geq v^\ast,
\end{equation*}
so that $\onebb_{v'_{x+t} < v^\ast} = 0$. The cross-share holdings term is unaffected if we replace $v$ by $v'$ as the integral is invariant with respect to changes in function value at isolated points,
\begin{equation*}
	\sum\nolimits_\ell T_{k\ell}\int_{\BT_\ell} v_y\,dy
		= \sum\nolimits_\ell T_{k\ell}\int_{\BT_\ell} v'_y\,dy,
\end{equation*}
and we conclude that
\begin{equation*}
	v'_{x+t} = f_k(x+t) + \sum\nolimits_\ell T_{k\ell}\int_{\BT_\ell} v'_y\,dy
		- \beta\onebb_{v'_{x+t} < v^\ast},
\end{equation*}
whence $v'$ is a new equilibrium. But then $v'_{x+t} > v_{x+t}$, leading to a contradiction of maximality.

Likewise, if $v$ is minimal, consider the effect of changing $v_x$ to $v''_x = v_x - \beta$ and keeping all other values the same, $v''_\xi = v_\xi$ if $\xi\neq x$. Arguing exactly as before, 
\begin{equation*}
	v''_x = v_x - \beta =
			\Bigl[
				f_k(x) + \sum\nolimits_\ell T_{k\ell}\int_{\BT_\ell} v_y\,dy
			\Bigr] - \beta
		< f_k(x+t) + \sum\nolimits_\ell T_{k\ell}\int_{\BT_\ell} v_y\,dy - \beta
		= v_{x+t} < v^\ast,
\end{equation*}
so that $\onebb_{v''_x < v^\ast} = 1$. We are thus led to the conclusion
\begin{equation*}
	v''_x = f_k(x) + \sum\nolimits_\ell T_{k\ell}\int_{\BT_\ell} v''_y\,dy
		- \beta\onebb_{v''_x < v^\ast},
\end{equation*}
whence we have yet another equilibrium. But now $v''_x < v_x$, leading to a contradiction of minimality.

Thus, in either case, $v\big|_{\BT_k}$ is increasing.

(ii) Introduce the nonce notation $\BT_k^\geq$ for the set consisting of the points $x$ in $\BT_k$ for which $v_x\geq v^\ast$ and let $\BT_k^< = \BT_k\setminus\BT_k^\geq$ consist of the remaining points in $\BT_k$ at which $v_x < v^\ast$. As $v\big|_{\BT_k}$ is monotone per part (i), $\BT_k^\geq$ and $\BT_k^<$ are disjoint intervals (with one of the two possibly empty or consisting of a singleton point). Writing $A_k = \sum_\ell T_{k\ell}\int_{\BT_\ell} v_y\,dy$ to compact notation, the fixed-point equation~\eqref{eq:univalue} satisfies $v_x = f_k(x) + A_k$ for $x$ in the sub-interval $\BT_k^\geq$, while it satisfies $v_x = f_k(x) + A_k - \beta$ for $x$ in the sub-interval $\BT_k^<$. The continuity of the endowment function $f_k$ on $\BT_k$ now carries the day and $v$ is continuous in each of the sub-intervals $\BT_k^\geq$ and $\BT_k^<$.

If either $\BT_k^\geq$ or $\BT_k^<$ is empty then $v$ is continuous everywhere on $\BT_k$. If both intervals are non-empty, by the monotonicity of $v$, there exists a unique point $x_k^\ast\in\BT_k$ such that $v_x\geq v^\ast$ for $x>x_k^\ast$ and $v_x< v^\ast$ for $x<x_k^\ast$. It follows that there is a jump discontinuity at $x_k^\ast$ with $v$ continuous on either side of it. To show that $v$ is continuous from the right it will suffice to show that $x_k^\ast\in\BT_k^{\geq}$. We will show a little bit more besides following the same pattern of construction as in part (i).


Suppose $v$ has a jump at the point $x_k^\ast$ in $\BT_k$. Then
\begin{equation*}
	v_x = \begin{cases} f_k(x) + A_k & \text{if $x_k^\ast < x \leq t_k$,}\\
			f_k(x) + A_k - \beta & \text{if $t_{k-1} < x < x_k^\ast$.} \end{cases}
\end{equation*}
Approaching $x_k^\ast$ from the right and the left in turn, by the continuity of $f_k$, we see that
\begin{equation}\label{eq:jump}
	v_{x_k^\ast+} = f_k(x_k^\ast) + A_k\;\;\text{and}\;\; 
	v_{x_k^\ast-} = f_k(x_k^\ast) + A_k - \beta.
\end{equation}
The first of the identities is vacuous if $x_k^\ast = t_k$ when the jump is on the boundary of $\BT_k$.

Suppose, to establish a contradiction, that $v_{x_k^\ast} < v^\ast$. Then $v_{x_k^\ast} = f_k(x_k^\ast) + A_k - \beta = v_{x_k^\ast-}$. But then $x_k^\ast$ cannot be on the boundary as else there is no jump. So $x_k^\ast$ must be an interior point of jump. Writing $\delta = v^\ast - v_{x_k^\ast}$ for the deficit, by the continuity of $f_k$, we may select $x_k^\ast < \eta\leq t_k$ so that $0 < f_k(\eta) - f_k(x_k^\ast) < \delta/2$. (Any factor $< 1$ could be chosen on the right; the choice of $1/2$ just keeps the algebra neat.) By construction, $\eta\in\BT_k^\geq$, whence $v_\eta = f_k(\eta) + A_k\geq v^\ast$.

If $v$ is maximal, mirror the construction in part (i) by forming a new function $v'$ by setting $v'_{x_k^\ast} = v_{x_k^\ast} + \beta$ and keeping all the other values of $v$ unchanged. Then
\begin{equation*}
    v'_{x_k^\ast} = v_{x_k^\ast} + \beta = f_k(x_k^\ast) + A_k = v_{x_k^\ast+}\geq v^\ast
\end{equation*}
and we have constructed an equilibrium larger than the maximal equilibrium.

If $v$ is minimal, form a new function $v''$ by setting $v''_\eta = v_\eta - \beta$ and keeping all other values of $v$ unchanged. Then 
\begin{multline*}
	v''_\eta = v_\eta - \beta = f_k(\eta) + A_k - \beta 
		= f_k(x_k^\ast) + A_k - \beta + \bigl(f_k(\eta) - f_k(x_k^\ast)\bigr)\\
		= v_{x_k^\ast} + \bigl(f_k(\eta) - f_k(x_k^\ast)\bigr)
		= v^\ast - (v^\ast - v_{x_k^\ast}) + \bigl(f_k(\eta) - f_k(x_k^\ast)\bigr)
		< v^\ast - \delta/2
\end{multline*}
and we have constructed an equilibrium smaller than the minimal equilibrium.

Having established a contradiction in all cases, we conclude that $v_{x_k^\ast}\geq v^\ast$ and, \emph{a fortiori}, $v\big|_{\BT_k}$ is continuous from the right.

(iii) It only remains to consider the behavior of the valuation at a point of jump. Suppose that $v$ has a jump at the point $x_k^\ast$ in $\BT_k$. We've established that $v_{x_k^\ast} \geq v^\ast$, whence $v_{x_k^\ast} = f(x_k^\ast) + A_k$. By~\eqref{eq:jump}, it follows that the size of the jump, $v_{x_k^\ast} - v_{x_k^\ast-}$, is exactly equal to the distress cost $\beta$. 

Now, on the one hand, $v_{x_k^\ast}\geq v^\ast$ from the conclusion of the proof of part (ii), while on the other, $v_{x_k^\ast}\leq v^\ast + \beta$ as a consequence of the just-established size $\beta$ of the jump. Indeed, if $v_{x_k^\ast} > v^\ast + \beta$ then $v_{x_k^\ast-} = v_{x_k^\ast} - \beta > v^\ast$ and $\BT_k^<$ is forced to contain points with valuations in excess of $v^\ast$ by the continuity of $v$ on the sub-interval $\BT_k^< = (t_{k-1}, x_k^\ast)$. But this contradicts the definition of $\BT_k^<$. We conclude that $v^\ast\leq v_{x_k^\ast}\leq v^\ast + \beta$.

Suppose $v$ is maximal and, to set up a contradiction, suppose further that $v_{x_k^\ast} > v^\ast$. Reusing notation, write $\delta = v_{x_k^\ast} - v^\ast$ for the excess. By the continuity of $f_k$ we may select a point $t_{k-1} < \xi < x_k^\ast$ in $\BT_k^<$ such that $0 < f_k(x_k^\ast) - f_k(\xi) < \delta/2$ and we so do. Now form a new function $v'$ by setting $v'_\xi = v_\xi + \beta$ and keeping all other values of $v$ unchanged. Then
\begin{multline*}
	v'_\xi = v_\xi + \beta = \bigl(f_k(\xi) + A_k - \beta\bigr) + \beta
		= f_k(x_k^\ast) + A_k - \bigl(f_k(x^\ast) - f_k(\xi)\bigr)\\
		= v_{x_k^\ast} - \bigl(f_k(x^\ast) - f_k(\xi)\bigr)
		= v^\ast + (v_{x_k^\ast} - v^\ast) - \bigl(f_k(x^\ast) - f_k(\xi)\bigr)
		> v^\ast + \delta/2
\end{multline*}
and we have constructed an equilibrium larger than the maximal equilibrium. With our hypothesis leading to a contradiction, it must hence be the case that $v_{x_k^\ast} = v^\ast$ if $v$ is maximal.

The case when $v$ is minimal is even simpler. To set up a contradiction again, suppose that $v_{x_k^\ast} < v^\ast + \beta$. Form a new function $v''$ by setting $v''_{x_k^\ast} = v_{x_k^\ast} - \beta$ and keeping all other values of $v$ unchanged. Then $v''_{x_k^\ast} = f_k(x_k^\ast) + A_k - \beta < v^\ast\leq v_{x_k^\ast}$ and we have constructed an equilibrium smaller than the minimal equilibrium. As this is impossible, we conclude that $v_{x_k^\ast} = v^\ast + \beta$ if $v$ is minimal.
This concludes the proof.
\end{proof}

\begin{proof}[Proof of Theorem~\ref{thm:maximaleq}]
We assume that $v_x$ is a maximal equilibrium. The minimal equilibrium case proceeds by a straightforward modification of the same arguments.

Let $\boldsymbol{\kappa}$ be a putative solvency vector and recall that $\mathbf{V}(\boldsymbol{\kappa})$ are the corresponding values, so that $\mathbf{V}(\boldsymbol{\kappa})$ solves $$     	\bV(\bkappa) = \be + \bC\bV(\bkappa) - \beta(\mathbf{1} - \bkappa).\\
$$
Given a vector of firms $\mathbf{y}^*=(y^*_k)_{k=1}^m$ with one in each block, we define a putative solvency vector $\boldsymbol{\kappa}_{(y_k^*)}$ for a finite random network  by $({\kappa}_{(y_k^*)})_i = \onebb_{i \geq y_k^*}$ for all $i$ in block $k$. That is, a firm in block $k$ is solvent if and only if its index is at least the threshold $y_k^*$.

Similarly, given a vector of firms $\boldsymbol{y}^* = (y^*_k)_{k=1}^m$, recall we define a putative equilibrium for the graphon  $\kappa(x; \boldsymbol{y}^*)$ by supposing firms $x \in \BT_k$ are solvent if and only if $x \geq y_k^*$. Recall we refer to the corresponding values as
$v\bigl(\kappa(\,\cdot\,; \boldsymbol{y}^*)\bigr)$ and that these values solve
\begin{equation*}
	v\bigl(\kappa(\,\cdot\,;\boldsymbol{y}^*)\bigr)_x = f_k(x) + \sum_{\ell=1}^m T_{k\ell} \int_{\BT_\ell} v\bigl(\kappa(\,\cdot\,; \boldsymbol{y}^*)\bigr)_y\,dy 
		- \beta(1- \kappa(x; \boldsymbol{y}^*)).
\end{equation*}

Given $\boldsymbol{\kappa}$, let $\overline{V}(\boldsymbol{\kappa})$ be the values of firms on a finite network with deterministic link weights $g_{k\ell}$ and firm failures described by the putative solvency vector $\boldsymbol{\kappa}$. We begin with a lemma establishing several properties of $v(\boldsymbol{\kappa})$, $V(\boldsymbol{\kappa})$, and $\overline{V}(\boldsymbol{\kappa}).$

\begin{lemma}\label{lem:value-properties}
    (i) $v(\boldsymbol{\kappa})$, $V(\boldsymbol{\kappa})$, and $\overline{V}(\boldsymbol{\kappa})$ are increasing in $\kappa$,
    
    (ii) $v(\boldsymbol{\kappa}_{(y_k^*)})_x$ is continuous in $x$ and $y_k^*$ in any neighborhood where $(\kappa_{(y_k^*)})_x$ is constant and $x$ is contained in a single block,
    
    (iii) $\overline{V}(\boldsymbol{\kappa}_{(y^*_k)})_i\rightarrow v\bigl(\kappa(\,\cdot\,; \boldsymbol{y}^*)\bigr)_i$ uniformly in $i$.
\end{lemma}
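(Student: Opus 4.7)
The plan is to dispatch the three parts in order, exploiting the linear structure of the putative equilibrium equations for the first two parts and using a discretization-error argument for the third.

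For part (i), the fixed-point equation $\boldsymbol{V}(\boldsymbol{\kappa}) = \boldsymbol{e} + \boldsymbol{C}\boldsymbol{V}(\boldsymbol{\kappa}) - \beta(\mathbf{1} - \boldsymbol{\kappa})$ rearranges to $\boldsymbol{V}(\boldsymbol{\kappa}) = (I - \boldsymbol{C})^{-1}[\boldsymbol{e} - \beta(\mathbf{1} - \boldsymbol{\kappa})]$. Since $\boldsymbol{C}$ has non-negative entries and spectral radius at most $c < 1$ (its column sums are bounded by $c$), the Neumann series $(I - \boldsymbol{C})^{-1} = \sum_{k\geq 0} \boldsymbol{C}^k$ converges entrywise to a non-negative matrix. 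Increasing $\boldsymbol{\kappa}$ pointwise decreases $\beta(\mathbf{1} - \boldsymbol{\kappa})$, hence increases the bracketed right-hand side, and applying the non-negative operator $(I - \boldsymbol{C})^{-1}$ preserves the monotonicity componentwise. The identical argument with $\overline{\bC}$ in place of $\boldsymbol{C}$ handles $\overline{\boldsymbol{V}}$; for the graphon version $v$, the same reasoning applies to the bounded positive integral operator with kernel $C(x,y)$.

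For part (ii), restrict to a neighborhood where $\kappa(\cdot\,;\boldsymbol{y}^*)$ is locally stationary in $(x,\boldsymbol{y}^*)$ and $x$ stays in a single block $\BT_k$. Introducing $A_k(\boldsymbol{y}^*) := \sum_\ell T_{k\ell} \int_{\BT_\ell} v(\kappa(\cdot\,;\boldsymbol{y}^*))_y\,dy$, the putative equilibrium on $\BT_k$ reads $v_x = f_k(x) + A_k(\boldsymbol{y}^*) - \beta(1 - \kappa(x;\boldsymbol{y}^*))$. Integrating over $\BT_\ell$ yields $\int_{\BT_\ell} v_y\,dy = \int_{\BT_\ell} f_\ell(y)\,dy + s_\ell A_\ell(\boldsymbol{y}^*) - \beta(y_\ell^* - t_{\ell-1})$, which substituted back gives a linear system $(I - \bT D)\boldsymbol{A} = \bT \boldsymbol{b}(\boldsymbol{y}^*)$ of the form appearing in Example~\ref{eg:PiecewiseLinear} and the proof of Theorem~\ref{t:spillovermatrix}. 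Its right-hand side depends continuously (in fact linearly) on $\boldsymbol{y}^*$, and since $\bT D$ has spectral radius at most $c<1$, $(I - \bT D)^{-1}$ is well-defined. The solution $\boldsymbol{A}(\boldsymbol{y}^*)$ is therefore continuous in $\boldsymbol{y}^*$; combined with continuity of $f_k$ in $x$, this yields joint continuity of $v(\kappa(\cdot\,;\boldsymbol{y}^*))_x$ on the specified neighborhood.

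For part (iii), which is the crux of the lemma, identify firm $i$ of the block-regular clique with a sample point $x_i \in \BT_k$ and compare $\overline{\boldsymbol{V}}(\boldsymbol{\kappa}_{(y_k^*)})$ with the sampled graphon values $W_i^{(n)} := v(\kappa(\cdot\,;\boldsymbol{y}^*))_{x_i}$. Both satisfy linear fixed-point equations; subtracting produces $(I - \overline{\bC})(\overline{\boldsymbol{V}} - W^{(n)}) = \boldsymbol{r}^{(n)}$, where $\boldsymbol{r}^{(n)}$ collects three discretization errors: the endowment gap $e_i - f_k(x_i)$ (zero by construction of the sampled model), the Riemann-sum error between $\sum_j \overline{C}_{ij} W_j^{(n)}$ and $\sum_\ell T_{k\ell} \int_{\BT_\ell} v_y\,dy$, and an indicator-mismatch term from $\onebb_{i < y_k^*}$ versus $\onebb_{x_i < y_k^*}$. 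The Riemann-sum error is $O(1/n)$ uniformly because $v(\kappa(\cdot\,;\boldsymbol{y}^*))$ is piecewise $C^1$ on each $\BT_\ell$ with at most one jump (Proposition~\ref{prop:block-graphon}), and because $\psi_\ell^{\text{finite}}/n$ converges to the graphon $\psi_\ell$ at rate $O(1/n)$ under equispaced sampling. The indicator-mismatch affects at most one firm per block, contributing $O(\beta/n)$ on average. To invert $I - \overline{\bC}$ despite its potentially large $\ell^\infty$ row-sum norm, exploit its rank-$m$ block structure: $\overline{\bC}$ acts as $\bT D$ on block-constant vectors (whose spectral radius is at most $c<1$) and annihilates zero-block-average vectors. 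Decomposing $\boldsymbol{r}^{(n)}$ accordingly, both pieces of $(I-\overline{\bC})^{-1}\boldsymbol{r}^{(n)}$ are $O(1/n)$, so $\|\overline{\boldsymbol{V}} - W^{(n)}\|_\infty \to 0$, delivering uniform convergence.

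The principal obstacle is the careful bookkeeping in (iii): ensuring the discretization error is uniformly $O(1/n)$ near the single jump point of $v$ in each block, and handling the inversion of $I - \overline{\bC}$ despite its non-sub-stochastic row sums. The single-jump structure from Proposition~\ref{prop:block-graphon} controls the Riemann-sum and indicator-mismatch errors, while the rank-$m$ block decomposition of $\overline{\bC}$ reduces the inversion to the finite-dimensional problem governed by $I - \bT D$.
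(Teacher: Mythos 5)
Your proof is correct throughout. Parts (i) and (ii) follow essentially the paper's route: for (i) the paper reduces the graphon fixed-point to the finite $(I-\bT D)^{-1}\bT$ form before arguing positivity, whereas you argue directly from the Neumann series of $(I-\boldsymbol{C})^{-1}$ and the abstract positive integral operator; for (ii) both you and the paper read continuity off equation~\eqref{eq:valuesgraphonlinear}. Part (iii), however, takes a genuinely different approach. The paper writes both $v(\boldsymbol{\kappa})_i$ and $\overline{V}(\boldsymbol{\kappa})_i$ in closed form (via $(I-\bT D)^{-1}\bT$ and $(I-\overline{\bC})^{-1}\overline{\bC}$ respectively), observes that the entries of $n(I-\overline{\bC})^{-1}\overline{\bC}$ converge to those of $(I-\bT D)^{-1}\bT$ because both are block-constant objects, and then invokes Riemann-sum convergence of the data term. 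You instead form the difference $\overline{\bV}-W^{(n)}$ directly, obtain $(I-\overline{\bC})(\overline{\bV}-W^{(n)})=\boldsymbol{r}^{(n)}$ where $\boldsymbol{r}^{(n)}$ collects the discretization errors, and invert $I-\overline{\bC}$ by decomposing along the $m$-dimensional block-constant subspace (where $\overline{\bC}$ acts like $\bT D$) and its complement of zero-block-average vectors (which $\overline{\bC}$ annihilates). Both proofs exploit the same rank-$m$ block structure, just in different guises: the paper uses it to compute the inverse explicitly, you use it to control the inverse's action on the residual. Your version is perhaps more transparent about where the $O(1/n)$ errors enter and why the inversion is benign despite the non-sub-stochastic row sums, at the cost of slightly more machinery.

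Two small nits. First, you cite Proposition~\ref{prop:block-graphon} for the piecewise-$C^1$-with-single-jump structure of $v(\kappa(\cdot\,;\boldsymbol{y}^*))$, but that proposition concerns extremal equilibria, not putative ones; the structure you need actually follows immediately from the formula $v(\kappa)_x = f_k(x) + A_k - \beta(1-\kappa(x;\boldsymbol{y}^*))$, since $A_k$ is constant on $\BT_k$ and $\kappa(\cdot\,;\boldsymbol{y}^*)$ has a single jump at $y_k^*$. Second, under the paper's sampling convention ($x_i = i$ with $e_i = f_k(i)$ and $(\kappa_{(y_k^*)})_i = \onebb_{i\geq y_k^*}$) the endowment-gap and indicator-mismatch terms in $\boldsymbol{r}^{(n)}$ are identically zero, not merely $O(1/n)$; your argument still goes through but is slightly over-cautious there.
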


\begin{proof}
(i) We prove monotonicity by expressing each of the values in terms of firm endowments and failure costs. Given $x \in \BT_r$, we have
\begin{align*}
v(\boldsymbol{\kappa})_x & = e_x + \sum_{k} T_{rk} \int_{\BT_k} v_y dy - \beta (1-\kappa_x)
\\ & = e_x + A_r - \beta (1-\kappa_x)
\end{align*}
where $A_r =  \sum_{k} T_{rk} \int_{\BT_k} v_y dy $.
Substituting for $v_y$, we obtain
\begin{align*}A_r & = \sum_k T_{rk} \int_{\BT_k} (e_y - \beta(1-\kappa_y) + A_k)dy
\\ & = \sum_k T_{rk} (s_k \overline{e}_k+ s_k A_k - \beta \int_{\BT_k}(1-\kappa_y)dy).\end{align*}
Solving and expressing the $A_r$ in vector form,
\begin{equation}\label{eq:valuesgraphonlinear}\begin{pmatrix}A_1 \\ \vdots \\ A_m \end{pmatrix} = (I-\bT D)^{-1}\bT\begin{pmatrix} s_1 \overline{e}_1 -  \beta \int_{\BT_1}(1-\kappa_y)dy \\ \vdots \\s_m \overline{e}_m - \beta \int_{\BT_m}(1-\kappa_y)dy \end{pmatrix}.\end{equation}
We conclude that $A_r$ is a linear combination of the $\overline{e}_k$ and $-\int_{\BT_k} (1-\kappa_y)dy$ with non-negative coefficients. Since $-\int_{\BT_k} (1-\kappa_y)dy$ is increasing in $\boldsymbol{\kappa}$, so are the $A_r$ and therefore the values $v(\boldsymbol{\kappa})_x.$

Similarly, we have
$$\boldsymbol{V}(\boldsymbol{\kappa}) = \boldsymbol{e} + \boldsymbol{C} \boldsymbol{V}(\boldsymbol{\kappa})- \beta (\boldsymbol{1}-\boldsymbol{\kappa})$$
and therefore
$$\boldsymbol{V}(\boldsymbol{\kappa}) =(\boldsymbol{I}-\boldsymbol{C})^{-1}( \boldsymbol{e} - \beta (\boldsymbol{1}-\boldsymbol{\kappa})).$$
We conclude that each entry of $\boldsymbol{V}(\boldsymbol{\kappa})$ is a linear combination of the entries of $\boldsymbol{e}$ and $-(\boldsymbol{1}-\boldsymbol{\kappa})$ with non-negative coefficients. Since $-(\boldsymbol{1}-\boldsymbol{\kappa})$ is increasing  in $\boldsymbol{\kappa}$, so is $\boldsymbol{V}(\boldsymbol{\kappa})$.

Finally, we have
$$\overline{\boldsymbol{V}}(\boldsymbol{\kappa}) = \boldsymbol{e} + \overline{\boldsymbol{C}} \,\overline{\boldsymbol{V}}(\boldsymbol{\kappa}) - \beta (\boldsymbol{1}-\boldsymbol{\kappa})$$
and therefore
$$\overline{\boldsymbol{V}}(\boldsymbol{\kappa}) =(\boldsymbol{I}-\overline{\boldsymbol{C}})^{-1}( \boldsymbol{e} - \beta (\boldsymbol{1}-\boldsymbol{\kappa})).$$
We conclude that each entry of $\overline{\boldsymbol{V}}(\boldsymbol{\kappa})$ is a linear combination of the entries of $\boldsymbol{e}$ and $-(\boldsymbol{1}-\boldsymbol{\kappa})$ with non-negative coefficients. Since $-(\boldsymbol{1}-\boldsymbol{\kappa})$  is increasing in $\boldsymbol{\kappa}$, so is $\overline{\boldsymbol{V}}(\boldsymbol{\kappa})$.

(ii) This follows from equation \eqref{eq:valuesgraphonlinear} since each term $\int_{\BT_k} (1-\kappa_y)dy$ is continuous in the thresholds $y_k^*$ and the endowment $e_x$ is continuous in $x$.

(iii) Let $\boldsymbol{\kappa} = \kappa(\cdot;(y_k^*))$ on the graphon and $\boldsymbol{\kappa} = \boldsymbol{\kappa}_{(y_k^*)}$ in the finite case. From equation \eqref{eq:valuesgraphonlinear}, we have
\begin{equation}\label{eq:valuesseriesexpansion}v(\boldsymbol{\kappa})_i  = e_i - \beta (1-\kappa_i) + 
\left[(I-\bT D)^{-1}\bT\begin{pmatrix} s_1 \overline{e}_1 -  \beta \int_{\BT_1}(1-\kappa_y)dy \\ \vdots \\s_m \overline{e}_m - \beta \int_{\BT_m}(1-\kappa_y)dy\end{pmatrix}\right]_r.\end{equation}
We also have
$$\overline{V}(\boldsymbol{\kappa})_i = e_i - \beta (1-\kappa_i) + [(I-\overline{\boldsymbol{C}})^{-1}\overline{\boldsymbol{C}}( \boldsymbol{e} - \beta (\boldsymbol{1}-\boldsymbol{\kappa}))]_i.$$
The entries of $n(I-\overline{\boldsymbol{C}})^{-1}\overline{\boldsymbol{C}}$ converge uniformly to the corresponding entries of $(I-\bT D)^{-1}\bT$, as the differences are due to the addition or deletion of a bounded number of firms near the boundaries of the intervals $\BT_k$. Since the Riemann sums $$\sum_{k} \sum_{i \in \BT_k} [(I-D \bT)^{-1}\bT]_{rk} \cdot \frac{e_i -\beta (1-\kappa_i)}{n}$$
converge to the integral $$\sum_{k} \int_{y \in \BT_k} [(I-D \bT)^{-1}\bT]_{rk} (e_y -\beta (1-\kappa_y))dy,$$
the desired limit holds. The convergence is uniform in $i$ since there are only finitely many blocks and therefore finitely many such sequences of Riemann sums.
\end{proof}

The proof of Theorem \ref{thm:fixedkappa} shows that there exists a sequence $h_1(n)\rightarrow 0$ such that the probability that $$\max_i |\overline{V}(\boldsymbol{\kappa}_{(y_k^*)})_i -V(\boldsymbol{\kappa}_{(y^*_k)})_i| > h_1(n)$$
decays at an exponential rate in $n$ (with constants independent of the firms $(y^*_k)$). 
We also have the deterministic limit $$\max_i |v\bigl(\kappa(\,\cdot\,; \boldsymbol{y}^*)\bigr)_i -\overline{V}(\boldsymbol{\kappa}_{(y^*_k)})_i| \rightarrow 0$$
by Lemma \ref{lem:value-properties}(iii).

Combining these limits and applying the triangle inequality, we can choose a sequence $h_2(n) \rightarrow 0$ such that probability that
$$\max_i |v\bigl(\kappa(\,\cdot\,; \boldsymbol{y}^*)\bigr)_i -V(\boldsymbol{\kappa}_{(y^*_k)})_i| > h_2(n)$$
also decays at an exponential rate in $n$.

Now let $\mathcal{Y}_n$ be the set of vectors of firms $\mathbf{y}^*=(y^*_k)_{k=1}^m$ such that each $y^*_k$ can be expressed as a rational number with denominator at most $n$. The cardinality of $\mathcal{Y}_n$ is polynomial in $n$, so 
$$\max_{\mathbf{y}^* \in \mathcal{Y}_n} \max_i |v\bigl(\kappa(\,\cdot\,; \boldsymbol{y}^*)\bigr)_i -V(\boldsymbol{\kappa}_{(y^*_k)})_i|\rightarrow 0$$
almost surely.

For the remainder of the proof, fix a sequence of realized random networks for which this random variable converges to zero. We will show that for all $k$, along this sequence $\overline{i}_k(n) \rightarrow x_k^*$ and  $\underline{i}_k(n) \rightarrow x_k^*$. This will imply that $\overline{i}_k(n) \rightarrow x_k^*$ almost surely and  $\underline{i}_k(n) \rightarrow x_k^*$ almost surely for all $k$.

\textbf{Step 1}: $\limsup \overline{i}_k(n) \leq x_k^*$ for all $k$.

Suppose for the sake of contradiction that $\limsup_n \overline{i}_{k'}(n) > x_{k'}^*$ for some $k'$. Passing to a convergent subsequence, we can assume that $\overline{i}_{k'}(n)$ converges to some $\overline{i}_{k'} > x_{k'}^*$.

Let $B$ be the spillover matrix. We next prove a matrix algebra lemma about $B$.

\begin{lemma}
Suppose $B$ is a non-negative $n \times n$ matrix with $\|B\|_2<1$. There exists a vector $\mathbf{w}$ with all entries positive and real and a constant $\gamma<1$ such that $(B w)_k \leq \gamma w_k$ for all $k$.
\end{lemma}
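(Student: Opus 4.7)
The plan is to exhibit $w$ explicitly as $(I-B)^{-1}\mathbf{1}$, where $\mathbf{1}$ is the all-ones vector, and read off the required inequality from the identity $Bw = w - \mathbf{1}$.

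First, I would note that $\|B\|_2 < 1$ implies the spectral radius $\rho(B) < 1$, so the Neumann series $\sum_{k=0}^{\infty} B^k$ converges and equals $(I-B)^{-1}$. Define $w := (I-B)^{-1}\mathbf{1} = \sum_{k=0}^{\infty} B^k \mathbf{1}$. Since $B$ is non-negative, every term $B^k \mathbf{1}$ is entry-wise non-negative, and the $k=0$ term contributes $\mathbf{1}$, so in particular $w_k \geq 1$ for every coordinate $k$. Thus $w$ is strictly positive and real.

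From $(I-B)w = \mathbf{1}$ we get $Bw = w - \mathbf{1}$, so $(Bw)_k = w_k - 1$ for each $k$. Setting $\gamma := 1 - 1/\|w\|_\infty$, which lies in $[0,1)$ since $\|w\|_\infty$ is finite (the Neumann series converges in norm), I would verify
\[
(Bw)_k \;=\; w_k - 1 \;=\; \left(1 - \tfrac{1}{w_k}\right) w_k \;\leq\; \left(1 - \tfrac{1}{\|w\|_\infty}\right) w_k \;=\; \gamma\, w_k,
\]
where the inequality uses $w_k \leq \|w\|_\infty$ together with $w_k \geq 1 > 0$. This gives the componentwise bound for all $k$.

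There is no real obstacle here: the only mildly delicate point is ensuring $w$ is componentwise strictly positive (which follows for free from the $k=0$ summand $\mathbf{1}$) and that $\gamma < 1$ (which follows from $\|w\|_\infty < \infty$). The argument does not use irreducibility of $B$ and so avoids any appeal to Perron--Frobenius; the only ingredient beyond non-negativity is convergence of the Neumann series, guaranteed by $\|B\|_2 < 1$.
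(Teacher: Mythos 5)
Your proof is correct, and it takes a genuinely different and more elementary route than the paper. The paper's proof perturbs $B$ to a strictly positive matrix $B'$ entrywise dominating $B$ with $\|B'\|_2 < 1$, then invokes the Perron--Frobenius theorem to extract a positive eigenvector $w$ of $B'$ and sets $\gamma$ to the associated eigenvalue. Your construction $w = (I-B)^{-1}\mathbf{1}$ avoids Perron--Frobenius entirely: the identity $Bw = w - \mathbf{1}$ together with $1 \leq w_k \leq \|w\|_\infty < \infty$ gives the componentwise contraction for free with $\gamma = 1 - 1/\|w\|_\infty$. This is cleaner in two respects. First, it needs no perturbation to strict positivity and no appeal to spectral theory beyond convergence of the Neumann series. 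Second, it sidesteps a small wrinkle in the paper's argument: the Perron eigenvalue of $B'$ is the spectral radius $\rho(B')$, not $\|B'\|_2$, and these coincide only for special (e.g., normal) matrices; the paper's line should read $B'w = \rho(B')\,w$ with $\gamma = \rho(B') \leq \|B'\|_2 < 1$. Your argument has no such issue, though it does make use of the stronger assumption $\|B\|_2 < 1$ only through the weaker consequence $\rho(B) < 1$ (which is all that is needed for $(I-B)^{-1} = \sum_k B^k$ to converge), so in fact it proves the lemma under the weaker hypothesis $\rho(B)<1$, i.e., the stability assumption as stated in the body of the paper.
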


\begin{proof}
Because $\|\cdot\|_2$ is continuous, we can choose $B'$ such that $B'_{ij} > 0$ for all $i$ and $j$, $B'_{ij}\geq B_{ij}$ for all $i$ and $j$, and $\|B'\|_2 < 1$.

By the Perron-Frobenius theorem, we can choose a vector $w$ with all entries real and positive such that $B'w=\|B'\|_2w$. Taking $\gamma=\|B'\|_2,$ we have
$$(B w)_k \leq (B'w)_k = \gamma w_k $$
as desired.
\end{proof}

Choose $w$ and $\gamma<1$ as in the preceding lemma and  consider the thresholds $\boldsymbol{x}^*+tw$ for $t>0$. Define $ \lambda\bigl(\BI_k(t; \bx^\ast + tw)\bigr)$ to be the Lebesgue measure of firms in block $k$ above the threshold $x_k^*$ with values under the putative equilibrium $\boldsymbol{\kappa}(\boldsymbol{x}^*+tw)$ below $v^*$. Because the derivative of $ \lambda\bigl(\BI_k(t; \bx^\ast + tw)\bigr)$ at $t=0$ is linear in each $1/f_k'(x_k^*)$, we have	\begin{align*}	\frac{d}{dt} \lambda\bigl(\BI_k(t; \bx^\ast + tw)\bigr)\bigg|_{t=0} & = \sum_{\ell} w_{\ell} \frac{d}{dt} \lambda\bigl(\BI_k(t; \bx^\ast + t \onebb_\ell\bigr)\bigg|_{t=0}
\\ & = \sum_{\ell} w_{\ell} B_{ k\ell}
\\ & = (B w)_k 
\\ & \leq \gamma w_k\end{align*}
where the last line follows from our choice of $w$ and $\gamma$. Because the $f_k'(x)$ are continuously differentiable, we can choose $\alpha$ with $1<\alpha<1/\gamma$ and $\overline{t}>0$ such that $$\lambda\bigl(\BI_k(t; \bx^\ast + tw)\bigr)< \alpha \gamma w_k$$
for all $k$ and all $t < \overline{t}$. There exists $\epsilon>0$ such that $$v\bigl(\kappa(\,\cdot\,; \boldsymbol{x}^*+tw)\bigr)_{x_k^*+tw_k}>v^*+\epsilon$$
for all $k$ and all $t < \overline{t}$. By monotonicity (Lemma \ref{lem:value-properties}(i)),$$v\bigl(\kappa(\,\cdot\,; \boldsymbol{x}^*+tw)\bigr)_{x}>v^*+\epsilon$$
for all $x \in \BT_k$ with $x \geq x_k^*+tw_k$ and all $t < \overline{t}$.

Recall we chose $k'$ such that $\overline{i}_{k'}>x_{k'}^*$. So we can fix $t < \overline{t}$ such that $x^*_{k'}+tw_{k'} < \overline{i}_{k'}$. By Lemma \ref{lem:value-properties}(ii) and the density of the rational numbers, we can choose $\mathbf{y}^* = (y_k^*)_{k=1}^m$ with all $y_k^*$ rational and $y_k^*<x_k^*+tw_k$ such that $$v\bigl(\kappa(\,\cdot\,; \boldsymbol{y}^*)\bigr)_{x}>v^*+\epsilon$$
for all $x \in \BT_k$ with $x \geq y_k^*$. By our choice of a sequence of realized random networks above, we have
$$\max_i |v\bigl(\kappa(\,\cdot\,;y_k^*)\bigr)_i -V(\boldsymbol{\kappa}_{(y_k^*)})_i| \rightarrow 0.$$

So for $n$ sufficiently large, $$\max_i |v\bigl(\kappa(\,\cdot\,; \boldsymbol{y}^*)\bigr)_i -V(\boldsymbol{\kappa}_{(y_k^*)})_i| <\epsilon.$$ By the triangle inequality, for $n$ sufficiently large $V(\boldsymbol{\kappa}_{(y_k^*)})_i \geq v^*$ whenever $i \in \BT_k$ with $i \geq y_k^*$. We will show that this implies that the same holds at the maximal equilibrium.

The profile $\boldsymbol{\kappa}_{(y_k^*)}$ need not be an equilibrium, but we have shown that $V(\boldsymbol{\kappa}_{(y_k^*)})_i\geq v^*$ whenever $(\boldsymbol{\kappa}_{(y_k^*)})_i = 1$. Therefore, we can obtain an equilibrium from $\boldsymbol{\kappa}_{(y_k^*)}$ as follows. Set $\boldsymbol{\kappa}^{(0)} = \boldsymbol{\kappa}_{(y_k^*)}$. Then, repeatedly define $\boldsymbol{\kappa}^{(\ell+1)}$ by beginning with $\boldsymbol{\kappa}^{(\ell)}$ and setting entries $(\kappa^{(\ell+1)})_i$ to $1$ if $\mathbf{V}(\boldsymbol{\kappa}^{(\ell)})_i \geq v^*$.  Because $\mathbf{V}(\boldsymbol{\kappa})$ is monotone increasing in $\boldsymbol{\kappa}$ (Lemma \ref{lem:value-properties}(i)), this process does not decrease any firms' values at any point. So for all $\ell$ we have $\mathbf{V}(\boldsymbol{\kappa}^{(\ell)})_i \geq v^*$ whenever $\kappa^{(\ell)}_i = 1$.

This process converges in at most $n$ steps. By construction the limit $\boldsymbol{\kappa}^{(n)}$ must satisfy $\kappa^{(n)}_i = 1$ whenever $\mathbf{V}(\boldsymbol{\kappa}^{(n)})_i \geq v^*$. Since we noted in the previous paragraph that $\mathbf{V}(\boldsymbol{\kappa}^{(n)})_i \geq v^*$ whenever $\kappa^{(n)}_i = 1$, the putative solvency vector $\boldsymbol{\kappa}^{(n)}$ is feasible and therefore is indeed an equilibrium.

For $n$ sufficiently large $V(\boldsymbol{\kappa}^{(n)})_i \geq v^*$ whenever $i \in \BT_k$ with $i \geq y_k^*$. Since  $y_k^*<x^*_{k'}+tw_{k'} < \overline{i}_{k'}$, for $n$ sufficiently large we must have $\kappa^{(n)}_i=1$ for all $i$ in an open neighborhood of $\overline{i}_{k'}$. But any firm that is solvent under the equilibrium $\boldsymbol{\kappa}^{(n)}$ is also solvent under the maximal equilibrium. So this contradicts the definition of $\overline{i}_{k'}$. We conclude that $\limsup_n \overline{i}_k(n) \leq x_k^*$ for all $k$, completing step 1.

\textbf{Step 2}: $\liminf_n \underline{i}_k(n) \geq x_k^*$ almost surely for all $k$.

Suppose for the sake of contradiction that $\liminf_n \underline{i}_{k'}(n) < x^*_{k'}$ for some $k'$. Passing to a subsequence, we can assume that $\underline{i}_{k}(n)$ converges to some $\underline{i}_k$ for each $k$ and that $\underline{i}_{k'} < x_{k'}^*$.

We write $\boldsymbol{\kappa}^{max}$ for the solvency vector at the maximal equilibrium for the realized random network with $n$ agents. By the definition of $\underline{i}_k(n)$, each firm $\underline{i}_k(n)$ has value $V(\boldsymbol{\kappa}^{max})_{\underline{i}_k(n)} \geq v^*$. The putative solvency vector $\boldsymbol{\kappa}_{(\underline{i}_k(n))} \geq \boldsymbol{\kappa}^{max}$, so by monotonicity (Lemma \ref{lem:value-properties}(i)) we have $V(\boldsymbol{\kappa}_{(\underline{i}_k(n))})_{\underline{i}_k(n)} \geq v^*$.

Note that the values $\underline{i}_k(n)$ are all rational with denominators at most $n-1$, so $(\underline{i}_k(n))_{k=1}^m \in \mathcal{Y}_n$.  By our choice of a sequence of realized random networks above, we have
$$\max_i |v\bigl(\kappa(\,\cdot\,; (\underline{i}_k(n)))\bigr)_i -V(\boldsymbol{\kappa}_{(y^*_k)})_i|\rightarrow 0,$$
and therefore
$$\liminf_n v\bigl(\kappa(\,\cdot\,; (\underline{i}_k(n)))\bigr)_{\underline{i}_k(n)} \geq v^*$$
for all $k$. Since $\underline{i}_k(n) \rightarrow \underline{i}_k$ for all $k$, by Lemma \ref{lem:value-properties}(ii) we have
$v\bigl(\kappa(\,\cdot\,; (\underline{i}_k)\bigr)_{\underline{i}_k} \geq v^*$ for all $k$. Applying Lemma \ref{lem:value-properties}(i) again, we have $v\bigl(\kappa(\,\cdot\,; (\underline{i}_k)\bigr)_x \geq v^*$ whenever $x \in \BT_k$ satisfies $x \geq \underline{i}_k$.

We claim that this implies that the maximal equilibrium value $v_{\underline{i}_k} \geq v^*$ for all $k$. Since we assumed at the start of Step 2 that there exists a $k'$ such that $\underline{i}_{k'} < x_{k'}^*,$ this will give a contradiction.

The claim is a special case of the following lemma:
\begin{lemma}\label{lem:eq_construction}
Let ${\kappa}$ be a putative solvency function. If $v(\kappa)_x \geq v^*$ whenever  $\kappa_x = 1$, then each firm that is solvent under $\kappa$ is solvent under the maximal equilibrium.
\end{lemma}

\begin{proof} To prove the lemma, recall the order-preserving operator $\tau$ from the proof of Proposition~\ref{prop:Tarski}. Let $\mathcal{F}^*$ be the set of measurable, bounded functions $w\colon (0,1]\to\real$ with $w \geq v(\kappa)$. Because $v(\kappa)_x \geq v^*$ whenever $x\in \BT_k$ satisfies $\kappa_x=1$, for any $w \in \mathcal{F}^*$ we have $w_x \geq v^*$ whenever $x \in \BT_k$ satisfies $\kappa_x=1$. 

Suppose that $w \in \mathcal{F}^*$. Then for $x \in \BT_k$, \begin{align*}(\tau w)_x & =  e_x + \int_0^1 C(x,y) w_y\,dy - \beta\onebb_{w_x<v^*}
\\ & \geq e_x +  \int_0^1 C(x,y) w_y\,dy - \beta(1-\kappa_x)
\\ & \geq e_x + \int_0^1 C(x,y) v(\kappa)_y\,dy - \beta (1-\kappa_x) \text{ by the definition of }\mathcal{F}^*
\\ & = v(\kappa)_x.
\end{align*}
This shows that $\tau w \in \mathcal{F}^*$, so $\tau$ maps the set $\mathcal{F}^*$ to itself. Applying Tarksi's fixed point theorem as in the proof of Proposition~\ref{prop:Tarski}, we conclude there is an equilibrium $w_x$ in $\mathcal{F}^*$. The  equilibrium values $w_x$ satisfy $w_x \geq v(\kappa)_x \geq v^*$ whenever $\kappa_x=1$. So the maximal equilibrium also satisfies $v_{x} \geq v^*$ whenever $\kappa_x=1$.
\end{proof}
We have established a contradiction, so we can conclude that $\liminf_n \underline{i}_k(n) \geq x_k^*$ for all $k$.

\textbf{Step 3}: $\lim \underline{i}_k(n) =\lim \overline{i}_k(n) = x_k^*$ for all $k$.

Fix $k$. We have shown that $\limsup \overline{i}_k(n) \leq x_k^*$ and $\liminf_n \underline{i}_k(n) \geq x_k^*$. So we can choose $\epsilon(n)\rightarrow 0$ such that all firms in $i \in \BT_k$ that are insolvent have $i< x_k^*+\epsilon(n)$ and all firms $i \in \BT_k$ that are solvent have $i> x_k^*-\epsilon(n)$. Equivalently, all firms with $i>x_k^*+\epsilon(n)$ are solvent and all firms with $i<x_k^*-\epsilon(n)$ are insolvent. So $\underline{i}_k(n), \overline{i}_k(n) \in [x_k^*-\epsilon(n),x_k^*+\epsilon(n)]$ for all $n$. Since $\epsilon(n) \rightarrow 0$, this implies that $\lim \underline{i}_k(n) =\lim \overline{i}_k(n) = x_k^*$.
\end{proof}

\Xomit{
Let $\boldsymbol{B}$ be the spillover matrix. We next prove a matrix algebra lemma about $\boldsymbol{B}$.

\begin{lemma}
Suppose $\boldsymbol{B}$ is a non-negative $n \times n$ matrix with $\|\boldsymbol{B}\|_2<1$. There exists a vector $\mathbf{w}$ with all entries positive and real and a constant $\gamma<1$ such that $(\boldsymbol{B w})_k \leq \gamma w_k$ for all $k$.
\end{lemma}

\begin{proof}
Because $\|\cdot\|_2$ is continuous, we can choose $\boldsymboL{B'}$ such that $B'_{ij} > 0$ for all $i$ and $j$, $B'_{ij}\geq B_{ij}$ for all $i$ and $j$, and $\|\boldsymbol{B}'\|_2 < 1$.

By the Perron-Frobenius theorem, we can choose a vector $\boldsymbol{w}$ with all entries real and positive such that $\boldsymbol{B}'\boldsymbol{w}=\|\boldsymbol{B}'\|_2\boldsymbol{w}$. Taking $\gamma=\|\boldsymbol{B}'\|_2,$ we have
$$(\boldsymbol{B w})_k \leq (\boldsymbol{B}'\boldsymbol{w})_k = \gamma w_k $$
as desired.
\end{proof}

Choose $\textbf{w}$ and $\gamma<1$ as in the preceding lemma and  consider the thresholds $\boldsymbol{x}^*+t\boldsymbol{w}$ for $t>0$. Define $ \lambda\bigl(\BI_k(t; \bx^\ast + t\boldsymbol{w})\bigr)$ to be the Lebesgue measure of firms in block $k$ above the threshold $x_k^*$ with values under the putative equilibrium $\boldsymbol{\kappa}(\boldsymbol{x}^*+t\boldsymbol{w})$ below $v^*$. Because the derivative of $ \lambda\bigl(\BI_k(t; \bx^\ast + t\boldsymbol{w})\bigr)$ at $t=0$ is linear in each $1/f_k'(x_k^*)$, we have	\begin{align*}	\frac{d}{dt} \lambda\bigl(\BI_k(t; \bx^\ast + t\boldsymbol{w})\bigr)\bigg|_{t=0} & = \sum_{\ell} w_{\ell} \frac{d}{dt} \lambda\bigl(\BI_k(t; \bx^\ast + t \onebb_\ell\bigr)\bigg|_{t=0}
\\ & = \sum_{\ell} w_{\ell} B_{ k\ell}
\\ & = (\boldsymbol{B w})_k 
\\ & \leq \gamma w_k\end{align*}
where the last line follows from our choice of $\textbf{w}$ and $\gamma$. Because the $f_k'(x)$ are continuously differentiable, we can choose $\alpha$ with $1<\alpha<1/\gamma$ and $\overline{t}>0$ such that $$\lambda\bigl(\BI_k(t; \bx^\ast + t\boldsymbol{w})\bigr)< \alpha \gamma w_k$$
for all $k$ and all $t < \overline{t}$. There exists $\epsilon>0$ such that $$v\bigl(\kappa(\,\cdot\,; \boldsymbol{x}^*+t\boldsymbol{w})\bigr)_{x_k^*+tw_k}>v^*+\epsilon$$
for all $k$ and all $t < \overline{t}$. By monotonicity (Lemma \ref{lem:value-properties}(i)),$$v\bigl(\kappa(\,\cdot\,; \boldsymbol{x}^*+t\boldsymbol{w})\bigr)_{x}>v^*+\epsilon$$
for all $x \in \BT_k$ with $x \geq x_k^*+tw_k$ and all $t < \overline{t}$.

Recall we chose $k'$ such that $\overline{i}_{k'}>x_{k'}^*$. So we can fix $t < \overline{t}$ such that $x^*_{k'}+tw_{k'} < \overline{i}_{k'}$. By Lemma \ref{lem:value-properties}(ii) and the density of the rational numbers, we can choose $\mathbf{y}^* = (y_k^*)_{k=1}^m$ with all $y_k^*$ rational and $y_k^*<x_k^*+tw_k$ such that $$v\bigl(\kappa(\,\cdot\,; \boldsymbol{y}^*)\bigr)_{x}>v^*+\epsilon$$
for all $x \in \BT_k$ with $x \geq y_k^*$. By our choice of a sequence of realized random networks above, we have
$$\max_i |v\bigl(\kappa(\,\cdot\,;y_k^*)\bigr)_i -V(\boldsymbol{\kappa}_{(y_k^*)})_i| \rightarrow 0.$$

So for $n$ sufficiently large, $$\max_i |v\bigl(\kappa(\,\cdot\,; \boldsymbol{y}^*)\bigr)_i -V(\boldsymbol{\kappa}_{(y_k^*)})_i| <\epsilon.$$ By the triangle inequality, for $n$ sufficiently large $V(\boldsymbol{\kappa}_{(y_k^*)})_i \geq v^*$ whenever $i \in \BT_k$ with $i \geq y_k^*$. We will show that this implies that the same holds at the maximal equilibrium.

The profile $\boldsymbol{\kappa}_{(y_k^*)}$ need not be an equilibrium, but we have shown that $V(\boldsymbol{\kappa}_{(y_k^*)})_i\geq v^*$ whenever $(\boldsymbol{\kappa}_{(y_k^*)})_i = 1$. Therefore, we can obtain an equilibrium from $\boldsymbol{\kappa}_{(y_k^*)}$ as follows. Set $\boldsymbol{\kappa}^{(0)} = \boldsymbol{\kappa}_{(y_k^*)}$. Then, repeatedly define $\boldsymbol{\kappa}^{(\ell+1)}$ by beginning with $\boldsymbol{\kappa}^{(\ell)}$ and setting entries $(\kappa^{(\ell+1)})_i$ to $1$ if $\mathbf{V}(\boldsymbol{\kappa}^{(\ell)})_i \geq v^*$.  Because $\mathbf{V}(\boldsymbol{\kappa})$ is monotone increasing in $\boldsymbol{\kappa}$ (Lemma \ref{lem:value-properties}(i)), this process does not decrease any firms' values at any point. So for all $\ell$ we have $\mathbf{V}(\boldsymbol{\kappa}^{(\ell)})_i \geq v^*$ whenever $\kappa^{(\ell)}_i = 1$.

This process converges in at most $n$ steps. By construction the limit $\boldsymbol{\kappa}^{(n)}$ must satisfy $\kappa^{(n)}_i = 1$ whenever $\mathbf{V}(\boldsymbol{\kappa}^{(n)})_i \geq v^*$. Since we noted in the previous paragraph that $\mathbf{V}(\boldsymbol{\kappa}^{(n)})_i \geq v^*$ whenever $\kappa^{(n)}_i = 1$, the putative solvency vector $\boldsymbol{\kappa}^{(n)}$ is feasible and therefore is indeed an equilibrium.

For $n$ sufficiently large $V(\boldsymbol{\kappa}^{(n)})_i \geq v^*$ whenever $i \in \BT_k$ with $i \geq y_k^*$. Since  $y_k^*<x^*_{k'}+tw_{k'} < \overline{i}_{k'}$, for $n$ sufficiently large we must have $\kappa^{(n)}_i=1$ for all $i$ in an open neighborhood of $\overline{i}_{k'}$. But any firm that is solvent under the equilibrium $\boldsymbol{\kappa}^{(n)}$ is also solvent under the maximal equilibrium. So this contradicts the definition of $\overline{i}_{k'}$. We conclude that $\limsup_n \overline{i}_k(n) \leq x_k^*$ for all $k$, completing step 1.

\textbf{Step 2}: $\liminf_n \underline{i}_k(n) \geq x_k^*$ almost surely for all $k$.

Suppose for the sake of contradiction that $\liminf_n \underline{i}_{k'}(n) < x^*_{k'}$ for some $k'$. Passing to a subsequence, we can assume that $\underline{i}_{k}(n)$ converges to some $\underline{i}_k$ for each $k$ and that $\underline{i}_{k'} < x_{k'}^*$.

We write $\boldsymbol{\kappa}^{max}$ for the solvency vector at the maximal equilibrium for the realized random network with $n$ agents. By the definition of $\underline{i}_k(n)$, each firm $\underline{i}_k(n)$ has value $V(\boldsymbol{\kappa}^{max})_{\underline{i}_k(n)} \geq v^*$. The putative solvency vector $\boldsymbol{\kappa}_{(\underline{i}_k(n))} \geq \boldsymbol{\kappa}^{max}$, so by monotonicity (Lemma \ref{lem:value-properties}(i)) we have $V(\boldsymbol{\kappa}_{(\underline{i}_k(n))})_{\underline{i}_k(n)} \geq v^*$.

Note that the values $\underline{i}_k(n)$ are all rational with denominators at most $n-1$, so $(\underline{i}_k(n))_{k=1}^m \in \mathcal{Y}_n$.  By our choice of a sequence of realized random networks above, we have
$$\max_i |v\bigl(\kappa(\,\cdot\,; (\underline{i}_k(n)))\bigr)_i -V(\boldsymbol{\kappa}_{(y^*_k)})_i|\rightarrow 0,$$
and therefore
$$\liminf_n v\bigl(\kappa(\,\cdot\,; (\underline{i}_k(n)))\bigr)_{\underline{i}_k(n)} \geq v^*$$
for all $k$. Since $\underline{i}_k(n) \rightarrow \underline{i}_k$ for all $k$, by Lemma \ref{lem:value-properties}(ii) we have
$v\bigl(\kappa(\,\cdot\,; (\underline{i}_k)\bigr)_{\underline{i}_k} \geq v^*$ for all $k$. Applying Lemma \ref{lem:value-properties}(i) again, we have $v\bigl(\kappa(\,\cdot\,; (\underline{i}_k)\bigr)_x \geq v^*$ whenever $x \in \BT_k$ satisfies $x \geq \underline{i}_k$.

We claim that this implies that the maximal equilibrium value $v_{\underline{i}_k} \geq v^*$ for all $k$. Since we assumed at the start of Step 2 that there exists a $k'$ such that $\underline{i}_{k'} < x_{k'}^*,$ this will give a contradiction.

The claim is a special case of the following lemma:
\begin{lemma}\label{lem:eq_construction}
Let ${\kappa}$ be a putative solvency function. If $v(\kappa)_x \geq v^*$ whenever  $\kappa_x = 1$, then each firm that is solvent under $\kappa$ is solvent under the maximal equilibrium.
\end{lemma}

\begin{proof} To prove the lemma, recall the order-preserving operator $\tau$ from the proof of Proposition~\ref{prop:Tarski}. Let $\mathcal{F}^*$ be the set of measurable, bounded functions $w\colon (0,1]\to\real$ with $w \geq v(\kappa)$. Because $v(\kappa)_x \geq v^*$ whenever $x\in \BT_k$ satisfies $\kappa_x=1$, for any $w \in \mathcal{F}^*$ we have $w_x \geq v^*$ whenever $x \in \BT_k$ satisfies $\kappa_x=1$. 

Suppose that $w \in \mathcal{F}^*$. Then for $x \in \BT_k$, \begin{align*}(\tau w)_x & =  e_x + \int_0^1 C(x,y) w_y\,dy - \beta\onebb_{w_x<v^*}
\\ & \geq e_x +  \int_0^1 C(x,y) w_y\,dy - \beta(1-\kappa_x)
\\ & \geq e_x + \int_0^1 C(x,y) v(\kappa)_y\,dy - \beta (1-\kappa_x) \text{ by the definition of }\mathcal{F}^*
\\ & = v(\kappa)_x.
\end{align*}
This shows that $\tau w \in \mathcal{F}^*$, so $\tau$ maps the set $\mathcal{F}^*$ to itself. Applying Tarksi's fixed point theorem as in the proof of Proposition~\ref{prop:Tarski}, we conclude there is an equilibrium $w_x$ in $\mathcal{F}^*$. The  equilibrium values $w_x$ satisfy $w_x \geq v(\kappa)_x \geq v^*$ whenever $\kappa_x=1$. So the maximal equilibrium also satisfies $v_{x} \geq v^*$ whenever $\kappa_x=1$.
\end{proof}
We have established a contradiction, so we can conclude that $\liminf_n \underline{i}_k(n) \geq x_k^*$ for all $k$.

\textbf{Step 3}: $\lim \underline{i}_k(n) =\lim \overline{i}_k(n) = x_k^*$ for all $k$.

Fix $k$. We have shown that $\limsup \overline{i}_k(n) \leq x_k^*$ and $\liminf_n \underline{i}_k(n) \geq x_k^*$. So we can choose $\epsilon(n)\rightarrow 0$ such that all firms in $i \in \BT_k$ that are insolvent have $i< x_k^*+\epsilon(n)$ and all firms $i \in \BT_k$ that are solvent have $i> x_k^*-\epsilon(n)$. Equivalently, all firms with $i>x_k^*+\epsilon(n)$ are solvent and all firms with $i<x_k^*-\epsilon(n)$ are insolvent. So $\underline{i}_k(n), \overline{i}_k(n) \in [x_k^*-\epsilon(n),x_k^*+\epsilon(n)]$ for all $n$. Since $\epsilon(n) \rightarrow 0$, this implies that $\lim \underline{i}_k(n) =\lim \overline{i}_k(n) = x_k^*$.
\end{proof}
}

\begin{proof}[Proof of Proposition~\ref{prop:bailout}] We begin with a lemma, which states that any firm receiving cash under an optimal infusion must have value $v^*$ after the infusion.
\begin{lemma}\label{lem:bailout}
If $\iota(x)$ is an optimal cash infusion, then, $\iota(x)>0 \Rightarrow \,\, \widetilde{v}_x = v^*$.
\end{lemma}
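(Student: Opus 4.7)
The plan is to argue by contradiction. Suppose $\iota$ is optimal and, for the sake of contradiction, that there is a measurable set $S\subseteq\{x:\iota(x)>0\}$ of positive measure on which $\widetilde{v}_x\neq v^*$. Passing to a subset if necessary, one may suppose $S\subseteq\BT_k$ for a single block $k$ and that either $\widetilde v_x>v^*$ throughout $S$ or $\widetilde v_x<v^*$ throughout $S$. In each case I will construct an alternative infusion $\iota'$ with the same budget whose post-infusion maximal equilibrium has strictly more solvent firms, contradicting optimality via the stated equivalence between maximizing $\int\widetilde v_x\,dx$ and maximizing the measure of solvent firms.

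The crucial observation driving the argument is that the aggregate cross-holding terms $A_\ell=\sum_j T_{\ell j}\int_{\BT_j}\widetilde v_y\,dy$ that appear in~\eqref{putative-graphon-equilibrium} depend on $\iota$ only through the block-level vectors $I_k:=\int_{\BT_k}\iota(y)\,dy$ and $L_k:=\int_{\BT_k}(1-\kappa(y))\,dy$, via the linear relation $V=(I-DT)^{-1}(E+I-\beta L)$ already derived inside the proof of Theorem~\ref{t:spillovermatrix}. Consequently, any within-block redistribution of $\iota$ that preserves both $I_k$ and $L_k$ leaves the equilibrium values of firms outside $\BT_k$ unchanged, so block $k$ can be analyzed in isolation.

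The heart of the argument is then a block-level greedy claim: for any fixed aggregate $A_k$ and budget $I_k$, the allocation of $\iota\big|_{\BT_k}$ that maximizes the measure of solvent firms in block $k$ assigns $\iota(x)=\max(0,v^*-f_k(x)-A_k)$ to a contiguous interval of highest-endowment firms below the block's cutoff and $\iota(x)=0$ elsewhere. For any firm receiving a positive amount under this greedy rule, $\widetilde v_x=f_k(x)+\iota(x)+A_k=v^*$ exactly, which is precisely the conclusion of the lemma. If instead $\iota$ gave cash to a firm that ended up over-solvent or to a firm that stayed insolvent, the greedy reallocation with the same $I_k$ would turn at least one additional firm solvent, strictly decreasing $L_k$ and hence, because $(I-DT)^{-1}$ has non-negative entries while $L$ enters $V$ with a negative coefficient, strictly increasing $\int\widetilde v_x\,dx$.

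The main obstacle I anticipate is the boundary case at the block cutoff, where Proposition~\ref{prop:block-graphon} gives $\widetilde v_{x_k^*}=v^*$ exactly and values just below the cutoff approach $v^*-\beta$. One must verify that after the greedy reallocation the resulting profile is genuinely a maximal equilibrium rather than only a putative one, and that the induced first-order change in the other $A_\ell$ through the feedback $V=(I-DT)^{-1}(E+I-\beta L)$ does not cascade back to destroy the improvement. The stability assumption on the spillover matrix $\bB$ is exactly what licenses this: it ensures the feedback acts as a contraction, so the first-order improvement inside block $k$ survives in the full system.
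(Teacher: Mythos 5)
Your overall shape — reallocate cash within a single block, show more firms end up solvent, contradict optimality — matches the paper's proof. But there is a genuine gap in how you close the argument, and it is not fixed by the stability assumption.

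The load-bearing step in your sketch is the "block-level greedy claim": for fixed $A_k$ and $I_k$, the allocation with $\iota(x)=\max(0,v^*-f_k(x)-A_k)$ is optimal and the lemma's conclusion holds for it. But that claim is essentially the statement to be proved, and your justification for it — "the greedy reallocation with the same $I_k$ would turn at least one additional firm solvent, strictly decreasing $L_k$" — is precisely where the feedback problem bites: once $L_k$ changes, $A_k$ itself changes, so you cannot hold $A_k$ fixed while redistributing. You flag this at the end and try to patch it by invoking the stability condition on $\bB$ as a contraction. That is the wrong tool here, and it is not how the paper handles it. The paper sidesteps the cascade entirely with a one-sided monotonicity argument: it fixes the old equilibrium's solvency function $\kappa$, computes the putative values $w(\kappa)$ under the new infusion $\iota_0$ via the linear formula $w(\kappa)_x=\widetilde v_x+\iota_0(x)-\iota(x)$ (using that block aggregates $\int_{\BT_\ell}\iota\,dy$ are preserved), observes $w(\kappa)_x\geq v^*$ whenever $\kappa_x=1$, and then applies Lemma~\ref{lem:eq_construction} — a Tarski-style result — to conclude that the true maximal equilibrium after $\iota_0$ is at least as large. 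No spectral radius condition is needed; the ordered-lattice structure does the work unconditionally. Your argument as written never establishes this one-sided domination, so the claim that the improvement "survives in the full system" is unproved.

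A secondary, smaller gap: you need to convert "weakly more solvent firms" into a \emph{strict} improvement to get the contradiction. The paper does this by exploiting the hypothesis that the support of $\iota$ is interior in each block, so the freed-up cash can be placed on a positive-measure set of firms just below the current solvency boundary, producing strictly more solvent firms. Your sketch states "would turn at least one additional firm solvent" but does not explain where that positive-measure set comes from, nor does it use the interiority hypothesis anywhere — and without it the strict improvement is not automatic (the budget moved off $S$ might go to firms that were already solvent, yielding no gain).

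In short: replace the appeal to spillover-matrix stability with the monotone comparison via Lemma~\ref{lem:eq_construction}, and make explicit use of the interior-support hypothesis to manufacture a strict gain; the rest of your outline then goes through.
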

\begin{proof}
Suppose $\iota(x)>0$ for a positive measure of firms with $\widetilde{v}_x > v^*$. Choose a positive measure subset $S$ of these firms contained in some block, $k$, say. Consider any cash infusion $\iota_0(x)$ obtained by decreasing $\iota(x)$ by $\widetilde{v}_x - v^*$ for all firms in $S$ and increasing $\iota(x)$ for other firms in block $k$ by a total of $\int_S (\widetilde{v}_x - v^*)dx$.

Let $\kappa$ be the solvency function corresponding to the maximal equilibrium after cash infusion $\iota(x)$. We write $w(\kappa)$ for the values under putative solvency function $\kappa$ after cash infusion $\iota_0(x)$. Applying equation~\eqref{eq:valuesseriesexpansion} with endowments $e_x+\iota_0(x)$, we find that for any $x \in \BT_k$
\begin{align*}
w(\kappa)_x & =  e_x + \iota_0(x) - \beta (1-\kappa_x) + 
\left[(I-\bT D)^{-1}\bT\begin{pmatrix} s_1 \overline{e}_x + \int_{\BT_1} \iota_0(y)dy -  \beta \int_{\BT_1}(1-\kappa_y)dy \\ \vdots \\s_m \overline{e}_m + \int_{\BT_m} \iota_0(y)dy  - \beta \int_{\BT_m}(1-\kappa_y)dy\end{pmatrix}\right]_k
\\ & =  e_x + \iota_0(x) - \beta (1-\kappa_x) + 
\left[(I-\bT D)^{-1}\bT\begin{pmatrix} s_1 \overline{e}_x + \int_{\BT_1} \iota(y)dy -  \beta \int_{\BT_1}(1-\kappa_y)dy \\ \vdots \\s_m \overline{e}_m + \int_{\BT_m} \iota(y)dy  - \beta \int_{\BT_m}(1-\kappa_y)dy\end{pmatrix}\right]_k
\\ & = \widetilde{v}_x+\iota_0(x)-\iota(x).
\end{align*}
The second equality holds because  $ \int_{\BT_k} \iota_0(y)dy=\int_{\BT_k} \iota(y)dy$ for all $k$. By the construction of $\iota_0(x)$, we have
$$\widetilde{v}_x+\iota_0(x)-\iota(x) \geq v^*$$
whenever $\widetilde{v}_x \geq v^*$. So $w(\kappa)_x \geq v^*$ whenever $\widetilde{v}_x \geq v^*$. By Lemma~\ref{lem:eq_construction}, this implies that any firm that is solvent at the maximal equilibrium after cash infusion $\iota(x)$ is solvent at the maximal equilibrium after cash infusion $\iota_0(x)$. We have shown that $\iota_0(x)$ weakly increases the measure of firms solvent at the maximal equilibrium relative to $\iota(x)$.

It remains to obtain a strict improvement. By assumption, the support of $\iota(x)$ is interior in block $k$. The construction of $\iota_0(x)$ thus far did not depend on how we reallocated the amount $\int_S (\widetilde{v}_x - v^*)dx$ within block $k$. We can do so to obtain a positive measure of firms such that $$\widetilde{v}_x+\iota_0(x)-\iota(x) \geq v^*$$
but $\widetilde{v}_x < v^*$. Then, Lemma~\ref{lem:eq_construction} similarly shows these firms are solvent at the maximal equilibrium under $\iota_0(x)$. So, the set of firms solvent after $\iota(x)$ is a strict subset of the set of firms solvent after $\iota_0(x)$. This contradicts the optimality of the original cash infusion $\iota(x)$.

So we cannot have $\iota(x)>0$ for a positive measure of firms with $\widetilde{v}_x > v^*$. Essentially the same argument shows that we cannot have a positive measure of firms with $\iota(x)>0$ and $\widetilde{v}_x < v^*$, as we can reduce the cash infusion for those firms from $\iota(x)$ to $0$ and increase the cash infusion to other firms in the same block. So we must have $\widetilde{v}_x=v^*$ whenever $\iota(x)>0$.
\end{proof}

Within each block $k$, the value of each firm $x$ with $\iota(x)>0$ after the cash infusion is equal to $e_x + \iota(x) + \xi_k$ for some constant $\xi_k$ depending on the block $k$. Lemma \ref{lem:bailout} implies that $e_x+\iota(x)$ is equal to some value, which we call $\widehat{e}_k$, for all firms in block $k$ with $\iota(x)>0$.

Within each block, the set of firms that receive an injection form an interval. If not, there is a block $k$ where the set $x$ of firms for which $\iota(x)>0$, is not an interval. We show that there is another cash transfer with the same budget under which more firms are solvent.

To see this, let $\underline{\iota}_k$ and $\overline{\iota}_k$ be the infimum and supremum, respectively, of the support of $\iota(x)$ in the block $\BT_k$. Since the set where $\iota(x)>0$ is not an interval (up to measure zero sets), we can choose a positive measure subset of $[\underline{\iota}_k, \overline{\iota}_k]$ where $\iota(x)=0$. We can also choose $\epsilon>0$ such that $$S = \{x \in [\underline{\iota}_k ,\underline{\iota}_k + \epsilon]: \iota(x)>0\}$$
and $$S' = \{x \in [\underline{\iota}_k + \epsilon, \overline{\iota}_k]: \iota(x)=0\}$$
both have positive measure.

By construction we have $x < x'$ for all $x \in S$ and $x' \in S'$, $\iota(x)>0$ for all $x \in S$, and $\iota(x)=0$ but $\widetilde{v}_x<v^*$ for all $x \in S'$. Shrinking $S$ and $S'$ if necessary, we can assume that $$ \int_S (\widehat{e}_k- e_x)dx=\int_{S'}(\widehat{e}_k- e_x)dx.$$
Since $x < x'$ and so $e_x < e_{x'}$ for all $x \in S$ and $x' \in S'$, this implies the Lebesgue measures of these sets satisfy $\lambda(S) < \lambda(S')$. Then define a cash infusion $\iota_0(x)$ with the same budget as $\iota(x)$ by setting $\iota_0(x)=0$ for all $x\in S$, $\iota_0(x) = \widehat{e}_k-e_x$ for all $x \in S'$, and $\iota_0(x) = \iota(x)$ for all $x$ not in $S$ or $S'$. So $\iota_0$ differs from $\iota$ by providing cash to firms in $S'$ instead of firms in $S$.

Let $\kappa$ be the solvency function associated with the maximal equilibrium after cash infusion $\iota(x)$. Define $\kappa_0$ by $(\kappa_0)_x=\kappa_x$ for $x$ not in $S$ or $S'$, $(\kappa_0)_x=0$ for $x \in S$, and $(\kappa_0)_{x'}=1$ for $x'\in S'$. Write $w(\kappa_0)$ for the values under putative solvency function $\kappa_0$, after cash infusion $\iota_0(x)$. Observe that strictly more firms are solvent under $\kappa_0$ than under $\kappa$ since $\lambda(S')>\lambda(S)$.

We proceed similarly to the proof of Lemma~\ref{lem:bailout}. Applying equation~\eqref{eq:valuesseriesexpansion} with endowments $e_x+\iota_0(x)$, we find that for any $x \in \BT_k$
\begin{align*}
w(\kappa)_x & =  e_x + \iota_0(x) - \beta (1-(\kappa_0)_x) + 
\left[(I-\bT D)^{-1}\bT\begin{pmatrix} s_1 \overline{e}_x + \int_{\BT_1} \iota_0(y)dy -  \beta \int_{\BT_1}(1-(\kappa_0)_y)dy \\ \vdots \\s_m \overline{e}_m + \int_{\BT_m} \iota_0(y)dy  - \beta \int_{\BT_m}(1-(\kappa_0)_y)dy\end{pmatrix}\right]_k
\\ & \geq  e_x + \iota_0(x) - \beta (1-\kappa_x) + 
\left[(I-\bT D)^{-1}\bT\begin{pmatrix} s_1 \overline{e}_x + \int_{\BT_1} \iota(y)dy -  \beta \int_{\BT_1}(1-\kappa_y)dy \\ \vdots \\s_m \overline{e}_m + \int_{\BT_m} \iota(y)dy  - \beta \int_{\BT_m}(1-\kappa_y)dy\end{pmatrix}\right]_k
\\ & = \widetilde{v}_x+\iota_0(x)-\iota(x).
\end{align*}
The inequality holds because $\int_{\BT_k}(1-(\kappa_0)_y)dy \leq \int_{\BT_k}(1-\kappa_y)dy$ for all $k$ by the choice of $\kappa_0$. The final expression is at least $v^*$ whenever $(\kappa_0)_x=1$. By Lemma~\ref{lem:eq_construction}, this implies that any firm with $(\kappa_0)_x=1$ is solvent at the maximal equilibrium after cash infusion $\iota_0(x)$. But since strictly more firms are solvent under $\kappa_0$ than under $\kappa$, this contradicts the optimality of $\iota(x)$. We conclude the set of firms in block $k$ with $\iota(x)>0$ is an interval.
\end{proof}
\begin{proof}[Proof of Theorem~\ref{t:bailout_linear}]
We begin by deriving the relationship between $y_k$ and $y_{k'}$. Suppose that we marginally change the total amount of the cash infusion to block $k$ while adjusting the intervals of firms with $\iota(x)>0$ in each block so that Lemma~\ref{lem:bailout} continues to hold. The optimality of the original cash infusion $\iota(x)$ implies that the marginal benefit of this change is independent of the block $k$.

Suppose we increase the cash infusion to block $k$ by $t$. The derivative at $t=0$ of the measure of additional firms directly made solvent is equal to the reciprocal of the change in endowment at the left endpoint of the interval of firms with $\iota(x)>0$. Because this interval has length $y_k$ and endowments have slope $a_k$, this change in endowment is $a_ky_k$. So its reciprocal is $1/(a_ky_k)$. The derivative of the corresponding change in the value of cross-holdings in block $k$ is equal to the change in this value from a measure $1/(a_ky_k) + 1/\beta$ of additional firms failing, where the second term corresponds to the cash value of the additional cash infusion.

So (as in the proof of Theorem~\ref{t:spillovermatrix}), the derivatives at $t=0$ of the measures of firms in each interval that become solvent as a result of the corresponding change in value are given by 
$$ B E_k (1/(a_ky_k) + 1/\beta).$$
As a consequence of these additional solvent firms,
$$ B^2 E_k (1/(a_ky_k) + 1/\beta)$$
further firms become solvent. Iterating, the derivatives at $t=0$ of the total measure of firms that become solvent as an indirect result of the additional cash infusion is
\begin{equation}\label{eq:marginal_indirect}\sum_{\ell=1}^{\infty} \mathbf{1}^{\top} B^{\ell} E_k (1/(a_ky_k) + 1/\beta) = \mathbf{1}^{\top} (I-B)^{-1} B E_k (1/(a_ky_k) + 1/\beta).\end{equation}

Optimality of $\iota(x)$ requires that this measure is equal across blocks $k$, as otherwise there would be a profitable deviation reducing the cash infusion in one block and increasing it in another. This implies the formula in the proposition.

We next derive the budget constraint. For each block $k$, we have $\iota(x) = a_k (x_k^*-\delta_k-x)$ for $x \in [x_k^*-\delta_k-y_k,x_k^*-\delta_k]$ and $\iota(x)=0$ otherwise. So the integral of $\iota(x)$ in each block $k$ is $\frac12a_ky_k^2$, and therefore the total cost of the infusion is $\frac12\sum_k a_ky_k^2$. Since the optimal intervention is interior, the budget constraint binds and therefore this total cost must be $K$.

The expression for $\delta_k$ remains. By the same logic as in equation (\ref{eq:marginal_indirect}), the total measure of firms in each block that become solvent as an indirect result of the cash infusions are given by the vector $$ (I-{B})^{-1}B \begin{pmatrix}
    y_1 + \frac{a_1y_1^2}{2\beta}
    \\  \vdots \\ y_m + \frac{a_my_m^2}{2\beta}\end{pmatrix}.$$Taking the $k^{th}$ entry of this vector gives $\delta_k$.
\end{proof}

\begin{proof}[Proof of Proposition~\ref{prop:finite_bailout}]
Consider a cash infusion $\iota(x)$. By Proposition~\ref{prop:bailout}, $f_k(x) + \iota(x)$ is weakly increasing with at most one discontinuity in each block $k$. Hence, we can choose smooth and strictly increasing functions $g_k(x)$ such that $g_k(x) \geq f_k(x) + \iota(x)$ for all $x$  and
$$\sum_k\int_{x \in \BT_k} g_k(x) dx \leq \sum_k\int_{x \in \BT_k} (f_k(x) + \iota(x)) dx+ \epsilon.$$
This inequality implies that the functions $g_k(x)-f_k(x)$ define a cash infusion with budget at most $K+\epsilon.$ Let $\widetilde{x}_k^*$ be the cutoffs at the maximal equilibrium after cash infusion $\iota(x)$ and let $\widetilde{y}_k^*$ be the cutoffs at the maximal equilibrium after cash infusion $g_k(x)-f_k(x)$. It follows from Lemma~\ref{lem:eq_construction} that increasing firm endowments cannot diminish the set of firms that are solvent at the maximal equilibrium, so we have $\widetilde{y}_k^* \leq \widetilde{x}_k^*$ for all $k$.

Define a cash infusion $\mathbf{\iota}^{(n)}$ with $n$ agents by setting $\iota^{(n)}_i = g_k(i) - f_k(i)$ for each $i \in \BT_k$. As in Theorem \ref{thm:maximaleq}, let $\overline{i}_k(n)$ be the maximal index of a firm of type $k$ that is insolvent under the maximal equilibrium after cash infusion $\mathbf{\iota}^{(n)}$. By Theorem \ref{thm:maximaleq}, we have $\overline{i}_k(n) \rightarrow \widetilde{y}_k^* $ for all $k$. So, the limit inferior of the fraction of firms solvent under the maximal equilibrium after cash infusion $\mathbf{\iota}^{(n)}$ is at least
$$\sum_k (t_k -\widetilde{y}_k^*)  \geq \sum_k (t_k -\widetilde{x}_k^*)
 = \lambda(\{x \in (0,1]:\widetilde{v}_x \geq v^* \}),
$$
which proves the proposition.
\end{proof}

\end{document}